\title{Bayesian Inference in Quantum Programs} 
\author{Christina Gehnen}{RWTH Aachen University, Germany }{christina.gehnen@cs.rwth-aachen.de}{https://orcid.org/0000-0002-6548-3432}{}
\author{Dominique Unruh}{RWTH Aachen University, Germany}{cqwp.uxhird@rwth.unruh.de}{https://orcid.org/0000-0001-8965-1931}{}
\author{Joost-Pieter Katoen}{RWTH Aachen University, Germany}{katoen@cs.rwth-aachen.de}{https://orcid.org/0000-0002-6143-1926}{}
\authorrunning{C. Gehnen, D. Unruh and J.-P. Katoen} 
\keywords{Quantum Computing, Weakest Precondition, Conditioning, Program Semantics}
\begin{document}
\newcommand*{\density}{\densitygen{\mathcal{H}} }
\newcommand*{\densityFull}{\mathcal{D}(\mathcal{H})}
\newcommand*{\densitygen}[1]{\mathcal{D}^- (#1)}
\newcommand*{\traceclass}{T(\mathcal{H})}
\newcommand*{\bounded}{B(\mathcal{H})}

\newcommand*{\predicate}{\predicategen{\mathcal{H}}}
\newcommand*{\predicategen}[1]{\mathcal{P}(#1)}

\newcommand*{\hoare}[3]{\{#1\} #2 \{#3\}}

\newcommand*{\semantics}[1]{\llbracket #1 \rrbracket}
\newcommand*{\semanticsOriginal}[1]{\semantics{#1}_{og}}
\newcommand*{\semanticsRho}[1]{\semantics{#1}_{\rho}}
\newcommand*{\semanticsErr}[1]{\semantics{#1}_{\lightning}}
\newcommand*{\semanticsRhoSingle}[1]{\semantics{#1}_{\tilde{\rho}}}

\newcommand*{\R}{\mathbb{R}}
\newcommand*{\C}{\mathbb{C}}

\newcommand*{\skipbf}{\mathbf{skip}}
\newcommand*{\qzero}{q:=0}
\newcommand*{\Uq}{\overline{q}:= U\overline{q}}
\newcommand*{\observe}{\mathbf{observe } \text{ }(\Bar{q},O)}
\newcommand*{\measure}{\mathbf{measure } \text{ }M[\Bar{q}]:\Bar{S}}
\newcommand*{\measurePrime}{\mathbf{measure } \text{ }M[\Bar{q}]:\Bar{S'}}
\newcommand*{\concat}{S_1;S_2}
\newcommand*{\while}{\mathbf{while } \text{ }M[\Bar{q}]=1 \text{ }\mathbf{ do } \text{ }S}
\newcommand*{\ifstatement}[2]{\mathbf{if} \text{ }M[\Bar{q}]=1 \text{ } \mathbf{then} \text{ } #1 \text{ } \mathbf{else} \text{ } #2}

\newcommand*{\qwp}[2]{qwp\llbracket #1 \rrbracket (#2)}
\newcommand*{\qwlp}[2]{qwlp\llbracket #1 \rrbracket (#2)}
\newcommand*{\qcwp}[2]{qcwp\llbracket #1 \rrbracket (#2)}
\newcommand*{\qcwlp}[2]{qcwlp\llbracket #1 \rrbracket (#2)}

\newcommand*{\intType}{\mathit{Int}}
\newcommand*{\boolType}{\mathit{Bool}}

\newcommand*{\emptyProgram}{\downarrow}
\newcommand*{\config}[2]{\langle #1 , #2 \rangle}
\newcommand*{\errConfig}{\langle \lightning \rangle}
\newcommand*{\setConfig}[0]{\mathcal{C}(\mathcal{H})}
\newcommand*{\termConfig}[0]{\langle sink \rangle} 

\newcommand*{\rewardMC}[3]{\mathfrak{R}_{#1}^{#2}\llbracket #3 \rrbracket}
\newcommand*{\operationalMC}[2]{\mathfrak{R}_{#1}\llbracket #2 \rrbracket}
\newcommand*{\prob}[2]{Pr^{#1}(#2)}

\newcommand*{\eventually}[0]{\lozenge}

\newcommand*{\paths}{\mathit{Paths}}
\newcommand*{\pathsFin}{\paths_{\mathit{fin}}}
\newcommand*{\mc}{\mathcal{M}}
\newcommand*{\densityNumberPairs}{\mathcal{D}\mathcal{R}}
\newcommand*{\states}{\Sigma}

\newcommand*{\fracTr}{\hat{tr}}
\newcommand*{\df}{:=}

\newcommand*{\identityOp}{\textbf{I}}
\newcommand*{\identityOpVar}[1]{\identityOp_{#1}}
\newcommand*{\zeroOp}{\mathbf{0}}
\newcommand*{\zeroOpVar}[1]{\zeroOp_{#1}}

\newcommand*{\varSet}{\mathit{Var}}
\newcommand*{\ext}{\mathit{span}}

\newcommand*{\multdot}{\cdot}
\newcommand*{\plusdot}{+}

\maketitle

\begin{abstract}
    Conditioning is a key feature in probabilistic programming to enable modeling the influence of data (also known as observations) to the probability distribution described by such programs. Determining the posterior distribution is also known as Bayesian inference. This paper equips a quantum while-language with conditioning, defines its denotational and operational semantics over infinite-dimensional Hilbert spaces, and shows their equivalence. We provide sufficient conditions for the existence of weakest (liberal) precondition-transformers and derive inductive characterizations of these transformers. It is shown how w(l)p-transformers can be used to assess the effect of Bayesian inference on (possibly diverging) quantum programs.
\end{abstract}
\tableofcontents
\section{Introduction}
\label{sec:introduction}
Quantum verification is a important part of the rapidly evolving field of quantum computing and information. The importance comes from several factors. Firstly, quantum computers operate in a completely different way than classical computers do. Principles of quantum mechanics are important to algorithm designers but in general unintuitive to most people. This leads to a higher risk of introducing logical errors. Secondly, quantum algorithms are often used in safely critical areas such as cryptography and optimization where those mistakes can lead to serious issues.
Classical testing and debugging methods do not directly apply to quantum computing. Testing on quantum computers is challenging due to high execution costs, probabilistic outcomes, and noise from environmental interactions. While simulators help, they have limitations such as scalability. Debugging is also difficult, as measuring quantum variables alters their state, preventing traditional inspection methods.

Testing only verifies specific inputs without guaranteeing overall correctness, whereas formal verification ensures correctness for all inputs.
Weakest preconditions define input states that ensure a given postcondition holds after execution. Inspired by the importance of conditioning and Bayesian inference in probabilistic programs, we extend the calculus from \cite{DHondtWeakestPreconditions,floydHoareLogic} to incorporate ``observations''. Combining weakest preconditions for total correctness and weakest liberal preconditions for partial correctness, we determine whether a predicate holds assuming all observations hold, i.e, compute a conditional probability.

This new statement could be used to aid in debugging to locate logical mistakes. Assume having a theoretical algorithm and a wrong implementation. To figure out which parts are wrong, fixing variable values by observations can help identify errors by comparing implementation samples with the expected distribution. Similarly, given a complex (possibly wrong) algorithm, adding observations can help understanding parts of the algorithm by comparing it to its intuitive understanding. For instance, in a random walk algorithm with a random starting point, analyzing success probability from a ``good'' starting point can help to understand the algorithm.
Unlike traditional assertions, observations can be useful even when they don’t always hold.
Another possible application is error correction, where outputs are often analysed assuming no more than $t$ qubit errors occurred per step to ensure successful error correction.

\paragraph*{Related Work}
The general idea of weakest preconditions was first developed by Dijkstra for classical programs \cite{Dijkstra76,Dijkstra75}, then for probabilistic programs \cite{KOZEN1985162,McIverWpProb} and later for quantum programs \cite{DHondtWeakestPreconditions}. D'Hondt and Panangaden \cite{DHondtWeakestPreconditions} defined predicates as positive operators as we do and focused on total correctness and finite-dimensional Hilbert spaces. \cite{floydHoareLogic} extended this approach to partial correctness and gave an explicit representation of the predicate transformer for the quantum while-language.
An alternative to define predicates is to use projections \cite{ZhouAppliedQHL}. There have been several extensions like adding classical variables \cite{DENG202273, FengQHLClassicalVars} or non-determinism \cite{FengNondeterministicQuantumVerification}.

A runtime assertion scheme using projective predicates for testing and debugging has been introduced in \cite{projectionAssertions}. In contrast, our approach enables debugging, but in addition provides formal guarantees on the correctness based on the satisfaction of assertions and allows infinite-dimensional Hilbert spaces. A survey about studies and approaches of debugging of quantum programs is given in \cite{NeedOfToolsDebuggingQuantumPrograms}.
Another idea to locate bugs is to use incorrectness logic with projective predicates \cite{YanIncorrectnessLogic}. The idea of conditional weakest preconditions has been introduced in \cite{Nori, conditioningProb} for probabilistic programs.

The concept of choosing specific measurement outcomes is also known as \emph{postselection}. \cite{postbqp} shows that the class of problems solvable by quantum programs with postselection in polynomial time, called Postselected Bounded-Error Quantum Polynomial-Time (PostBQP), is the same as the ones in the complexity class Probabilistic Polynomial-Time (PP). This equivalence is shown by solving a representative PP-complete problem, MAJ-SAT, using a quantum program with postselection. We confirm the correctness of this program in Section~\ref{sec:example} by using conditional weakest preconditions.

\paragraph*{Main Contributions}

\begin{itemize}
    \item Conditional weakest-precondition transformers: We define a weakest precondition calculus for reasoning about programs with an ``observe'' statement. The conditional weakest precondition, defined in terms of weakest (liberal) preconditions transformers, reveals the probability of a postcondition given all observations succeed.

    The definition of the transformers is semantic, i.e., formulated in a generic way based on the denotational semantics and not tied to a specific syntax of programs (but we also give explicit rules for our syntax by recursion over the structure of a program).

    \item Semantics: We develop both denotational and operational semantics of a simple quantum while-language with ``observe'' statements and show their equivalence.

    \item Our definition of weakest (liberal) preconditions is a conservative extension of \cite{floydHoareLogic}, supporting ``observe'' statements.
    Further differences include: Our definition is semantic and we support infinite-dimensional quantum systems (e.g., to support quantum integers)\footnote{Notice that \cite{floydHoareLogic} also defines a language with quantum integers. However, they do not explicitly specify the various notions of convergence of operators (e.g., operator topologies, convergence of infinite sums, existence of suprema), making it difficult to verify whether their rules are sound in the infinite-dimensional case.}.
\end{itemize}

\paragraph*{Structure}
We first recall important definitions in Section~\ref{sec:preliminaries}. The main contributions are in Section~\ref{sec:syntaxSemantics} and Section~\ref{sec:wp}: Section~\ref{sec:syntaxSemantics} introduces the ``observe'' statement and its semantics whereas Section~\ref{sec:wp} defines weakest (liberal) preconditions and finally conditional weakest (liberal) preconditions.
Two examples in Section~\ref{sec:example} illustrate our approach, followed by conclusions in Section~\ref{sec:conclusion}.
\section{Preliminaries}
\label{sec:preliminaries}

\subsection{Hilbert Spaces}
Let $\langle \cdot \mid \cdot \rangle$ denote the {inner product} over a vector space $\mathcal{V}$. The \emph{norm (or length) of a vector} $u$, denoted $\Vert u \Vert$, is defined as $\sqrt{\langle u \mid u \rangle}$. The vector $u$ is called a unit vector if $\Vert u \Vert = 1$. Vectors $u,v$ are \emph{orthogonal} ($u \bot v$) if $\langle u \mid v \rangle = 0$. 
The sequence $\{u_i\}_{i \in \mathbb{N}}$ of vectors $u_i \in \mathcal{V}$  is a \emph{Cauchy sequence}, if for any $\epsilon>0$, there exists a positive integer $N$ such that $\Vert u_n - u_m \Vert < \epsilon$ for all $n,m\geq N$. If for any $\epsilon >0$, there exists a positive integer $N$ such that $\Vert u_n - u \Vert < \epsilon$ for all $n \geq N$, then $u$ is the limit of $\{u_i\}_{i \in \mathbb{N}}$, denoted $u = \lim_{i \to \infty} u_i$.

A family $\{u_i\}_{i \in I}$ of vectors in $\mathcal{V}$ is \emph{summable} with the sum $v = \sum_{i \in I}u_i$ if for every $\epsilon>0$ there exists a finite $J \subseteq I$ such that $\Vert v - \sum_{i \in K} u_i \Vert < \epsilon$ for every finite $K \subseteq I$ and $J \subseteq K$.

A \emph{Hilbert space} $\mathcal{H}$ is a complete inner product space, i.e, every Cauchy sequence of vectors in $\mathcal{H}$ has a limit \cite{floydHoareLogic}.
An orthonormal \emph{basis} of a Hilbert space $\mathcal{H}$ is a (possibly infinite) family $\{u_i\}_{i \in I}$ of unit vectors if they are pairwise orthogonal (i.e., $u_i \bot u_j$ for $i\neq j, i,j\in I$) and every $v\in \mathcal{H}$ can be written as $v = \sum_{i \in I} \langle u_i \mid v\rangle \cdot u_i$ (in the sense above). The cardinality of $I$, denoted $\abs{I}$, is the dimension of $\mathcal{H}$. Hilbert spaces and its elements can be combined using the \emph{tensor product} $\otimes$ \cite[Def. IV.1.2]{takesaki1979theory}.

We use Dirac notation $\ket{\phi}$ to denote vectors of a vector space where $\bra{\phi}$ is the dual vector of $\ket{\phi}$ \cite{Dirac}, i.e., $\bra{\phi}=\ket{\phi}^\dagger$.

\begin{example}
    A typical Hilbert space over the set $X$ is
    \begin{align*}
        l^2(X) = \{\sum_{n\in X} \alpha_n \ket{n}\mid \alpha_n \in \mathbb{C} \text{ for all }n\in X\text{ and } \sum_{n\in X} \abs{\alpha_n}^2 < \infty\}
    \end{align*}
    where the inner product is defined as $
        (\sum_{n\in X} \alpha_n \ket{n}, \sum_{n\in X}\alpha'_n \ket{n}) = \sum_{n\in X} \overline{\alpha_n} \alpha'_n$. By $\overline{x+yi}=x-yi$ we denote the complex conjugate of $x+yi \in \mathbb{C}$.
    An orthonormal basis, also called \emph{computational basis}, is $\{\ket{n}\mid n\in X\}$.
    For (countably) infinite sets $X$, the basis is (countably) infinite and thus $l^2(X)$ is a (countably) infinite Hilbert space.
    $l^2(\mathbb{Z})$ can be used for quantum integers and is also denoted by $\mathcal{H}_\infty$.
    For qubits, we use $l^2(\{0,1\})$ and denote it as $\mathcal{H}_2$.
\end{example}

\subsection{Operators}
In the following, all vector spaces will be over $\mathbb{C}$.
For vector spaces $\mathcal{V}, \mathcal{W}$, a function $f:\mathcal{V}\to \mathcal{W}$ is called \emph{linear} if $f(ax+y)=af(x)+f(y)$ for $x,y\in \mathcal{V}$ and $a\in \mathbb{C}$. If $\mathcal{V}, \mathcal{W}$ are normed vector spaces then $f$ is called \emph{bounded linear} if $f$ is linear and $\norm{f(x)} \leq c \cdot \norm{ x}$ for some constant $c \geq 0$ for all $x \in \mathcal{V}$.
If $\mathcal{H}$ is a Hilbert space, we call bounded linear functions on $\mathcal{H} \to \mathcal{H}$ \emph{operators}. Let $\bounded$ denote the space of all operators on $\mathcal{H}$ and $A\ket{\phi}$ the result of applying operator $A$ to $\ket{\phi}\in \mathcal{H}$.
For this work, we additionally generalize the notion of linearity to functions that are defined on subsets of the vector space:
For (normed) vector spaces $S\subseteq \mathcal{V},T\subseteq \mathcal{W}$ with $\ext(S)=\mathcal{V}$ and $\ext(T)=\mathcal{W}$, we call $f: S \to T$ \emph{(bounded) linear} iff there exists a (bounded) linear function $\Bar{f}: \mathcal{V} \to \mathcal{W}$ such that $\Bar{f}(s) = f(s)$ for $s \in S$. $\ext(S)$ includes all finite linear combinations of $S$.

Let $A$ and $B$ be operators on $\mathcal{H}_1$ and $\mathcal{H}_2$ with $\ket{\phi} \in \mathcal{H}_1, \ket{\psi} \in \mathcal{H}_2$. By \cite[Def. IV.1.3]{takesaki1979theory}, the tensor product $A \otimes B$ is the unique operator that satisfies $(A \otimes B)( \ket{\phi }\otimes \ket{\psi}) = A \ket{\phi} \otimes B\ket{\psi}$
For matrices, the tensor product is also called the \emph{Kronecker product}.

For every operator $A$ on $\mathcal{H}$, there exists an operator $A^\dagger$ on $\mathcal{H}$ with $\langle\ket{\phi}, A\ket{\psi}\rangle=\langle A^\dagger\ket{\phi},\ket{\psi}\rangle$ for all $\ket{\phi},\ket{\psi}\in \mathcal{H}$.
An operator $A$ on $\mathcal{H}$ is called \emph{positive} if $\bra{\psi}A\ket{\psi}\geq 0$ for all states $\ket{\psi}\in \mathcal{H}$ \cite{Nielsen_Chuang_2010}.
The \emph{identity operator} $\identityOpVar{\mathcal{H}}$ on $\mathcal{H}$ is defined by $\identityOpVar{\mathcal{H}} \ket{\phi}=\ket{\phi}$. The \emph{zero operator} on $\mathcal{H}$, denoted by $\zeroOpVar{\mathcal{H}}$, maps every vector to the zero vector. We omit $\mathcal{H}$ if it is clear from the context.
An \textit{unitary operator} $U$ is an operator such that its inverse is its adjoint $U^{-1}=U^{\dagger}$, i.e., $U^\dagger U = \identityOp$ and $U U^\dagger = \identityOp$ \cite{verificationSummary}.
An \emph{(ortho)projector} is an operator $P: \mathcal{H} \to \mathcal{H}$ such that $P^2 = P=P^\dagger$. For every closed subspace $S$, there exists a projector $P_S$ with image $S$ \cite[Prop. II.3.2 (b)]{conway1994}.

An operator $A$ is a \emph{trace class} operator if there exists an orthonormal basis $\{\ket{\psi_i}\}_{i \in I}$ such that $\{\bra{\psi_i} \cdot \abs{A} \cdot \ket{\psi_i} \}_{i \in I}$ is summable where $\abs{A}$ is the unique positive operator $B$ with $B^\dagger B = A^\dagger A$. Then the trace of $A$ is defined as $tr(A) = \sum_{i \in I} \bra{\psi_i} \cdot A \cdot  \ket{\psi_i}$
where $\{\ket{\psi_i}\}_{i \in I}$ is an orthonormal basis. For a trace class operator $A$, it can be shown that $tr(A)$ is independent of the chosen base \cite{floydHoareLogic}. The trace is cyclic, i.e., $tr(AB)=tr(BA)$ \cite{heisenbergdualityUnruh}, linear, i.e., $tr(A+B) = tr(A) + tr(B)$, scalar, i.e., $tr(cA) = c\cdot tr(A)$ for a constant $c$ \cite{conway2000courseOperator} and multiplicative, i.e., $tr(A \otimes B)= tr(A)tr(B)$ holds \cite{heisenbergdualityUnruh} for trace class operators $A,B$.
We use $\traceclass$ to denote the space of trace class operators on $\mathcal{H}$.
Positive trace class operators with $tr(\rho)\leq 1$ are called \textit{partial density operators}. The set of partial density operators is denoted $\density$ with $\ext(\density)=\traceclass$.
\textit{Density operators} are partial density operators with $tr(\rho) = 1$. They are denoted as $\densityFull$. The \emph{support} of a partial density operator $\rho$ is the smallest closed subspace $S$ such that $P_S \rho P_S = \rho$.

Let us consider some properties of functions that map operators to operators.
$f: T_1 \to T_2$ with $T_1 \subseteq T(\mathcal{H}_1), T_2 \subseteq T(\mathcal{H}_2)$ is \emph{trace-reducing} if $tr(f(\rho)) \leq tr(\rho)$ for all positive $\rho \in T_1$. $f: B_1 \subseteq B(\mathcal{H}_1) \to B_2 \subseteq B(\mathcal{H}_2) $ is \emph{positive} if $f(a) $ is positive for positive $a \in B_1$ and \emph{subunital} if $f(\identityOpVar{\mathcal{H}_1}) \sqsubseteq \identityOpVar{\mathcal{H}_2}$ and $\identityOpVar{\mathcal{H}_1} \in B_1$, where $\sqsubseteq$ is defined just below.

\subsubsection{The Loewner Partial Order}
To order operators, the \textit{Loewner partial order} is used. For any operators $A,B$, it is defined by $A\sqsubseteq B$ iff $B-A$ is a positive operator. This is equivalent to $tr(A\rho)\leq tr(B \rho)$ for all partial density operators $\rho \in \density$ \cite{floydHoareLogic}.
The Loewner order is compatible w.r.t. addition (also known as monotonic), i.e., $A \sqsubseteq B$ implies $A+C \sqsubseteq B+C$ for any $C$, and w.r.t. multiplication of non-negative scalars, i.e., $A\sqsubseteq B$ implies $c A \sqsubseteq c B$ for $c \geq 0 $ \cite{boyd2004convex}.

Using this order, we can define predicates \cite{DHondtWeakestPreconditions}. A \emph{quantum predicate} on a Hilbert space $\mathcal{H}$ is defined as an operator $P$ on $\mathcal{H}$ with $\zeroOpVar{\mathcal{H}}\sqsubseteq P \sqsubseteq \identityOpVar{\mathcal{H}}$. The set of quantum predicates on $\mathcal{H}$ is denoted by $\predicategen{\mathcal{H}}$ and $\ext(\predicate) = \bounded$.

The Loewner partial order is an $\omega$-complete partial order ($\omega$-cpo) on the set of partial density operators \cite{YingPredicateTranserformerSemantics}. Thus each increasing sequence of partial density operators has a least upper bound.
This also holds for the set of predicates \cite{DHondtWeakestPreconditions}.

An important property that we need is continuity of the trace operator. First of all, we note that the trace-operator is order-continuous on partial density operators with respect to $\sqsubseteq$, i.e., $\bigvee_{i \in \mathbb{N}} tr(\rho_i) = tr( \bigvee_{i \in \mathbb{N}} \rho_i)$ for any increasing sequence of partial density operators $\{\rho_i\}_{i \in \mathbb{N}}$. Without going further into details, this holds because for an increasing sequence of real numbers, the least upper bound and the limit coincide, the same also holds for partial density operators \cite{heisenbergdualityUnruh} and because the trace is linear and bounded, it is also trace-norm continuous.
Continuity w.r.t. predicates means $\bigvee_{i \in \mathbb{N}} tr(P_i \rho) = tr((\bigvee_{i \in \mathbb{N}} P_i) \rho)$ for every $\rho \in \density$ and increasing sequence of predicates $\{P_i\}_{i \in \mathbb{N}}$. Without going into further detail, we can show that a function $f: \mathcal{B}(\mathcal{H}) \to \mathbb{C}$ defined by $f(A)=tr(A\rho)$ for a fixed $\rho \in \density$ is weak$^*$-continuous and convergence of positive bounded operators in the weak$^*$-topology coincides with the supremum \cite{heisenbergdualityUnruh}. Similar, the same property holds for decreasing sequences of predicates $\{P_i\}_{i \in \mathbb{N}}$ and the greatest lower bound $\bigwedge_{i \in \mathbb{N}} P_i$.

\subsection{Quantum-specific Preliminaries}
Due to a postulate of quantum mechanics, the state space of an isolated quantum system can be described as a Hilbert space where states correspond to unit vectors (up to a phase shift) in its state space \cite{floydHoareLogic}. A quantum state is called \textit{pure} if it can be described by a vector in the Hilbert space; otherwise \textit{mixed}, i.e., it is a probabilistic distribution over pure states.
We use partial density operators to describe mixed states, in particular to capture the current state of a program.
If a quantum system is in a pure state $\ket{\psi_i}$ with probability $p_i$ (with $\sum_i p_i \leq 1$), then this is represented by the partial density operator $\rho=\sum_i p_i \ket{\psi_i}\bra{\psi_i}$.

To obtain the current value of e.g. a quantum variable, we cannot simply look at it. In quantum mechanics, each measurement can impact the current state of a qubit.

A \emph{measurement} is a (possible infinite) family of operators $\{M_m\}_{m\in I}$ where $m$ is the measurement outcome and $\sum_{m\in I} M_m^\dagger M_m = \identityOp$ \footnote[1]{As in \cite{heisenbergdualityUnruh}, we mean convergence of sums with respect to SOT (strong operator topology) which is the topology where $\lim_{i\to \infty} a_i = a$ holds iff for all $\phi$: $\lim_{i\to \infty} a_i \phi = a \phi$ \cite[Prop. IX.1.3(c)]{conway1994}.}.
If the quantum system is in state $\rho \in \density$ before the measurement $\{M_m\}$, then the probability for result $m$ is $p(m)=tr( M_m \rho M^\dagger_m)$ and the post-measurement state is $\rho_m = \frac{M_m\rho M^\dagger_m}{p(m)}$.
An important kind of measurement is the \textit{projective measurement}. It is a set of projections $\{P_m\}$ over $\mathcal{H}$ with $\sum_m P_m =\identityOp$.
An important property of projective measurements is that if a state $\rho$ is measured by a projective measurement $\{P,I-P\}$ and $supp(\rho)\subseteq P$ holds, then $\rho$ is not changed.

\subsection{Markov Chains}

A \emph{Markov chain} (MC) is a tuple $\mc = (\states,\textbf{P},s_{init})$ where
    \begin{itemize}
        \item $\states$ is a nonempty (possibly uncountable) set of states,
        \item $\textbf{P}:\states \times \states \rightarrow [0,1]$ with $\sum_{s' \in \states}\textbf{P}(s,s')=1$ is the transition probability function. Let $s \overset{p}{\rightarrow} s'$ denote $\textbf{P}(s,s')=p$.
        \item $s_{init}\in \states$ is the initial state.
    \end{itemize}
Note that in comparison to \cite{mcBible,conditioningProb}, $\states$ can be uncountable. However, in our setting the reachable set of states will be countable as every state $s$ can only have a countable number of successor states $s'$ with $\textbf{P}(s,s')>0$. Therefore, even if $\states$ is uncountable, the set of reachable states is countable and all results from \cite{mcBible,conditioningProb} still apply.

A path of a MC $\mc$ is an infinite sequence $s_0 s_1 s_2 \ldots \in \states^\omega$ with $s_0 = s_{init}$ and $\textbf{P}(s_i ,s_{i+1})>0$ for all $i$. We use $\paths(\mc)$ to denote the set of paths in $\mc$ and $\pathsFin(\mc)$ for the finite path prefixes. If it is clear from the context, we omit $\mc$.
The probability distribution $Pr^\mc$ on $\paths(\mc)$ is defined using cylinder sets as in \cite{mcBible}.
In a slight abuse of notation, we write $\prob{\mc}{\hat{\pi}}$ for $\prob{\mc}{Cyl(\hat{\pi})}$ for $\hat{\pi}\in \pathsFin(\mc)$ where $Cyl(\hat{\pi})$ denotes the cylinder set of $\hat{\pi}$. We write $s_0 \rightarrow_{p}^* s_n$ where $p = \sum_{s_0 \ldots s_n \in \pathsFin(\mc)} \textbf{P}(s_0 \ldots s_n)$ is the probability to reach $s_n$ from $s_0$.
Given a target set of reachable states $T\subseteq \states$, let $\eventually T$ be the (measurable) set of infinite paths that reach the target set $T$.
The probability of reaching $T$ is $\prob{\mc}{\eventually T} = \sum_{\hat{\pi} \in \pathsFin(\mc) \cap (\states\backslash T)^* T} \prob{\mc}{\hat{\pi}}$.
Analogously, let $\neg \eventually T$ be the set of paths that never reach $T$; $\prob{\mc}{\neg \eventually T} = 1 - \prob{\mc}{\eventually T}$.

\section{Quantum Programs with Observations}
\label{sec:syntaxSemantics}

We assume $\varSet$ to be a finite set of quantum variables with two types: Boolean and integer. As in \cite{floydHoareLogic}, the corresponding Hilbert spaces are
\begin{align*}
    &\mathcal{H}_2 = \{\alpha \ket{0}+\beta \ket{1} \mid \alpha,\beta \in \mathbb{C}\}, \\
&\mathcal{H}_\infty = \Big \{\sum_{n\in \mathbb{Z}} \alpha_n \ket{n}\mid \alpha_n \in \mathbb{C} \text{ for all }n\in \mathbb{Z}\text{ and } \sum_{n\in \mathbb{Z}} \abs{\alpha_n}^2 < \infty\Big \}.
\end{align*}
Each variable $q\in \varSet$ has a type $type(q)\in \{\boolType, \intType\}$. Its state space $\mathcal{H}_q$ is $\mathcal{H}_2$ if $type(q)=\boolType$ and $\mathcal{H}_{\infty}$ otherwise.
The state space of a quantum register $\Bar{q} = q_1, ..., q_n$ is defined by the tensor product $\mathcal{H}_{\Bar{q}} =\bigotimes_{i=1}^n \mathcal{H}_{q_i}$ of state spaces of $q_1$ through $q_n$.

\subsection{Syntax}
A quantum while-program has the following syntax:
{\small
\begin{equation*}
    S::=\skipbf \mid \qzero \mid \Uq \mid \observe \mid \concat \mid \measure \mid \while
\end{equation*}}
where
\begin{itemize}
    \item $q$ is a quantum variable,
    \item $\Bar{q}$ is a quantum register,
    \item $U$ from statement $\Uq$ is a unitary operator on $\mathcal{H}_{\Bar{q}}$ and $\Bar{q}$ is the same on both sides,
    \item $O$ in $\observe$ is a projection on $\mathcal{H}_{\Bar{q}}$
    \item the measurement $M=\{M_m\}_{m\in I}$ in $\measure$ is on $\mathcal{H}_{\Bar{q}}$ and $\Bar{S}=\{S_m\}_{m \in I}$ is a family of quantum programs where each $S_m$ corresponds to an outcome $m\in I$,
    \item the measurement in $\while$ on $\mathcal{H}_{\Bar{q}}$ has the form $M=\{M_0, M_1\}$.
\end{itemize}
Our programs extend \cite{floydHoareLogic} with the new statement $\observe$. We only allow projective predicates $O$ for observations. It is conceivable that it can also be based on more general predicates $O\in \predicate$ but it is not clear what the intuitive operational meaning of such $O$ would be, so we choose to pursue the simpler case.

We use $\ifstatement{S_1}{S_0}$ as syntactic sugar for a measurement statement with $M=\{M_0,M_1\}$ and $\Bar{S}=\{S_0,S_1\}$.

By $\equiv$ we denote syntactic equality of quantum programs.
We use $\mathit{var}(S)$ to denote the set of variables occurring in program $S$. The Hilbert space of $\mathit{var}(S)$ is denoted by $\mathcal{H}_{\mathit{all}}$. If the set of variables is clear from the context, we just write $\mathcal{H}$.

For $\Bar{q}=q_1,...,q_n$ and operator $A$ on $\mathcal{H}_{\Bar{q}}$, we define its cylinder extension by $A \otimes I_{Var\backslash \{\Bar{q}\}}$ on $\mathcal{H}_{all}$ and abbreviate it by $A$ if it is clear from the context.
Let $\ket{\phi} \bra{\psi}_q$ denote the value of quantum variable $q$ in the state $\ket{\phi} \bra{\psi}$. We sometimes refer to it meaning its cylinder extension on $\mathcal{H}_{\mathit{all}}$ \cite{floydHoareLogic}. This notation is equivalent to $q(\ket{\phi}\bra{\psi})$ in \cite{heisenbergdualityUnruh}.

\subsection{Semantics}
In this section, we define an operational and denotational semantics for quantum while-programs with observations and show their equivalence.
\begin{figure}
    \begin{tabular}{c}
    \begin{tabular}{@{}c@{}}
    \hline
    $\config{\skipbf}{\sigma} \overset{1}{\rightarrow} \config{\emptyProgram}{\sigma}$
    \end{tabular}  \hspace{1cm}
    \begin{tabular}{@{}c@{}}
    $type(q)=\intType\land \sigma' = \sum_{n\in \mathbb{Z}} \ket{0}\bra{n}_q\sigma \ket{n}\bra{0}_q$\\
    \hline
    $\config{\qzero}{\sigma}\overset{1}{\rightarrow} \config{\emptyProgram}{\sigma'}$
    \end{tabular} \vspace{0.3cm} \\

    \begin{tabular}{@{}c@{}}
        $type(q)=\boolType\land \sigma' =\ket{0} \bra{0}_q\sigma \ket{0} \bra{0}_q + \ket{0}\bra{1}_q\sigma \ket{1} \bra{0}_q$\\
        \hline
        $\config{\qzero}{\sigma}\overset{1}{\rightarrow} \config{\emptyProgram}{\sigma'}$
        \end{tabular}\hspace{1cm}
        \begin{tabular}{@{}c@{}}
            \\
            \hline
            $\config{\Uq}{\sigma} \overset{1}{\rightarrow} \config{\emptyProgram}{U\sigma U^\dagger}$
            \end{tabular}\vspace{0.3cm} \\

    \begin{tabular}{@{}c@{}}
    $tr(O \sigma O^\dagger)>0$ \\
    \hline
    $ \config{\observe}{\sigma} \overset{tr(O \sigma O^\dagger)}{\rightarrow} \config{\emptyProgram}{\frac{O\sigma O ^\dagger}{tr(O \sigma O^\dagger)}}$
    \end{tabular} \hspace{1cm}
    \begin{tabular}{@{}c@{}}
    $tr(O \sigma O^\dagger)<1$ \\
    \hline
    $\config{\observe}{\sigma}\overset{1-tr(O \sigma O^\dagger)}{\rightarrow} \errConfig$
    \end{tabular} \vspace{0.3cm} \\

    \begin{tabular}{@{}c@{}}
            $M=\{M_m\}_{m \in I} \land m\in I \land tr(M_m \sigma M_m^\dagger)>0$\\
            \hline
            $\config{\measure}{\sigma} \overset{ tr(M_m \sigma M_m^\dagger)}{\rightarrow} \config{S_m}{\frac{M_m \sigma M_m^\dagger}{tr(M_m \sigma M_m^\dagger)}}$
    \end{tabular} \hspace{1cm}
    \begin{tabular}{@{}c@{}}
    $\config{S_1}{\sigma} \overset{p}{\rightarrow} \errConfig$ \\
    \hline
    $ \config{\concat}{\sigma} \overset{p}{\rightarrow} \errConfig$ \\
    \end{tabular} \vspace{0.3cm} \\

    \begin{tabular}{@{}c@{}}
        $\config{S_1}{\sigma} \overset{p}{\rightarrow} \config{S_1'}{\sigma'}$ \\
        \hline
        $ \config{\concat}{\sigma} \overset{p}{\rightarrow} \config{S_1';S_2}{\sigma'}$ \\
    \end{tabular} \hspace{1cm}
    \begin{tabular}{@{}c@{}}
    $tr(M_0 \sigma M_0^\dagger)>0$ \\
    \hline
    $\config{\while}{\sigma} \overset{tr(M_0 \sigma M_0^\dagger)}{\rightarrow}\config{\emptyProgram}{\frac{M_0 \sigma M_0^\dagger}{tr(M_0 \sigma M_0^\dagger)}}$
    \end{tabular}\vspace{0.3cm} \\

    \begin{tabular}{@{}c@{}}
        $tr(M_1 \sigma M_1^\dagger)>0$\\
        \hline
         $\config{\while}{\sigma} \overset{tr(M_1 \sigma M_1^\dagger)}{\rightarrow} \config{S; \while}{\frac{M_1 \sigma M_1^\dagger}{tr(M_1 \sigma M_1^\dagger)}}$
    \end{tabular}\vspace{0.3cm} \\

    \begin{tabular}{@{}c@{}}
        \hline
        $\errConfig \overset{1}{\rightarrow} \termConfig$
    \end{tabular}\hspace{1cm}
    \begin{tabular}{@{}c@{}}
        \hline
        $\config{\emptyProgram}{\sigma} \overset{1}{\rightarrow} \termConfig$
    \end{tabular}\hspace{1cm}
    \begin{tabular}{@{}c@{}}
        \hline
        $\termConfig \overset{1}{\rightarrow} \termConfig$
    \end{tabular}
\end{tabular}
    \caption{Transition probability function of MC $\operationalMC{\rho}{S}$ for all $\sigma \in \densityFull$ where $\emptyProgram;S_2 \equiv S_2$}
    \label{fig:transitions}
    \end{figure}
\subsubsection{Operational Semantics}
We start by defining the operational semantics of a program $S$ as a Markov chain inspired by \cite{conditioningProb} instead of non-deterministic relations in comparison to \cite{floydHoareLogic}. A quantum \emph{configuration} is a tuple $\config{S}{\rho}$ with density operator $\rho \in \densityFull$. Note that we consider normalized density operators $\densityFull$ instead of partial density operators $\density$.
Intuitively, $S$ is the program that is left to evaluate and $\rho$ is the current state. We use $\emptyProgram$ to denote that there is no program left to evaluate. The set of all configurations over $\mathcal{H}$ is denoted as $\setConfig$. The quantum configuration for violated observations is $\errConfig$ and for termination is $\termConfig$.

\begin{definition}
    The \emph{operational semantics} of a program $S$ with initial state $\rho \in \densityFull$ is defined as the Markov chain $\operationalMC{\rho}{S} =(\states,\textbf{P},s_{init})$ where:
    \begin{itemize}
        \item $\states=\setConfig \cup \{\errConfig, \termConfig\}$,
        \item $s_{init}=\config{S}{\rho}$,
        \item $\textbf{P}$ is the smallest function satisfying the inference rules in Figure~\ref{fig:transitions} where $c \overset{p}{\rightarrow} c'$ means $\textbf{P}(c,c')=p>0$. For all other pairs of states the transition probability is $0$.
    \end{itemize}
\end{definition}

The meaning of a transition $\config{S}{\sigma} \overset{p}{\rightarrow} \config{S'}{\sigma'}$ is that after evaluating program $S$ on state $\sigma$, with probability $p$ the new state is $\sigma'$ and the program left to execute is $S'$. For the observe statement, there are two successors, $\config{\observe}{\sigma} \overset{tr(O \sigma O^\dagger)}{\rightarrow} \config{\emptyProgram}{\frac{O\sigma O ^\dagger}{tr(O \sigma O^\dagger)}}$ and $\config{\observe}{\sigma}\overset{1-tr(O \sigma O^\dagger)}{\rightarrow} \errConfig$. The observation $O$ is satisfied by state $\sigma$ with probability $tr(O \sigma O^\dagger)$ and then it terminates successfully. If the observation is violated (with probability $1-tr(O \sigma O^\dagger)$), the successor state is $\errConfig$, the state that captures paths with violated observations.
For details of the other rules we refer to \cite{floydHoareLogic}.

\subsubsection{Denotational Semantics}
We now provide a denotational semantics for quantum while-programs. To handle observations and distinguish between non-terminating runs and those that violate observations, we introduce denotational semantics in a slightly different way than \cite{floydHoareLogic}. To do so, we start with defining some basics:

For tuples $(\rho,p), (\sigma,q) \in \density \times \R_{\geq 0}$, we define multiplication with a constant $a \in \R_{\geq 0}$ and addition entrywise: $
    a(\rho,p) := (a \rho, a p)$ and $
    (\rho,p)+(\sigma,q) := (\rho+\sigma, p+q)$.

The least upper bound (lub) of a set of tuples is defined as the entrywise lub provided it exists, i.e., $\bigvee_{n=0}^\infty (\rho_n,p_n):= (\bigvee_{n=0}^\infty \rho_n, \bigvee_{n=0}^\infty p_n)$ where $\bigvee_{n=0}^\infty \rho_n$ is the lub w.r.t. the Loewner partial order $\sqsubseteq$ and $\bigvee_{n=0}^\infty p_n$ is the lub w.r.t. to the classical ordering $\leq$ on $\mathbb{R}_{\geq 0}$.

As the probability of violating observations depends on the density operator, we introduce $\densityNumberPairs = \{(\rho,p) \in \density \times \R_{\geq 0} \mid tr(\rho) +p \leq 1\} \subseteq \traceclass \times \mathbb{C}$. $\traceclass \times \mathbb{C}$ is isomorphic to the set of operators of the form $\begin{pmatrix}
    \rho & \\
     & p
\end{pmatrix} \in T(\mathcal{H}\otimes \mathbb{C})$. Thus the trace and the norm from $T(\mathcal{H}\otimes \mathbb{C})$ apply. Specifically, $\tilde{tr}(\rho,p):= tr(\rho) +p$ and $\norm*{(\rho,p)} := \norm*{\rho} + \abs*{p}$ for $(\rho,p)\in \traceclass \times \mathbb{C}$.
\begin{definition}
     The \emph{denotational semantics} of a quantum program $S$ is defined as a mapping $\semantics{S}:\densityNumberPairs \to \densityNumberPairs$. For $(\rho,p)\in \densityNumberPairs$, $\rho$ is used for density-transformer semantics as defined in \cite{floydHoareLogic} and $p$ for the probability of an observation violation.

The denotational semantics for $(\rho,p) \in \densityNumberPairs$ is given by
\begin{itemize}
    \item $\semantics{\skipbf} (\rho ,p)= (\rho,p)$.

    \item $\semantics{\qzero } (\rho,p )= \begin{cases}
        (\ket{0} \bra{0}_q \rho \ket{0}\bra{0}_q+\ket{0} \bra{1}_q\rho \ket{1} \bra{0}_q,p) &\text{, if } type(q)=\boolType\\
        (\sum_{n\in \mathbb{Z}} \ket{0} \bra{n}_q \rho \ket{n} \bra{0}_q,p)  &\text{, if } type(q)=\intType.
        \end{cases}$

    \item $\semantics{\Uq} (\rho,p )=(U\rho U^\dagger ,p)$.

    \item $\semantics{\observe} (\rho,p)
    = (O \rho O^\dagger,p+ tr(\rho)-tr(O\rho O^\dagger) )$.

    \item $\semantics{\concat} (\rho,p )= \semantics{S_2} (\semantics{S_1} (\rho,p ) ) $.

    \item $\semantics{\measure }(\rho ,p)= \sum_m \semantics{S_m } (M_m \rho M_m ^\dagger ,0) + (\zeroOp,p)$ with $M=\{M_m\}_{m \in I}$ and $\Bar{S}=\{S_m\}_{m\in I}$.

    \item $\semantics{\while} (\rho,p) = \bigvee_{n=0}^\infty \semantics{(\while)^n} (\rho,p)$
    with $M=\{M_0, M_1\}$ where loop unfoldings are defined inductively
    \begin{align*}
        (\while)^0 &\equiv \Omega\\
        (\while)^{n+1} &\equiv \ifstatement{S; (\while)^n}{\skipbf}
    \end{align*}
    where $\Omega$ is a syntactic quantum program with $\semantics{\Omega}(\rho,p) = (\zeroOp,p)$ as in \cite{floydHoareLogic}.
\end{itemize}
\end{definition}
We write $\semanticsRho{S} (\rho,p)$ and $\semanticsErr{S} (\rho,p)$ to denote the first/second component of $\semantics{S} (\rho,p)$.
It follows directly that our definition is a conservative extension of \cite{floydHoareLogic}:
\begin{proposition}
    \label{prop:unconditionedSemantics}
    For an observe-free program $S$, input state $\rho \in \density$ and $p\in \R_{\geq 0}$, is $\semantics{S} (\rho,p) = (\semanticsOriginal{S} (\rho),p)$
    where $\semanticsOriginal{S} (\rho)$ is the denotational semantics as defined in \cite{floydHoareLogic}.
\end{proposition}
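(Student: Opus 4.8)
The plan is to proceed by structural induction on the observe-free program $S$, carrying along the bookkeeping observation that the probability component $p$ passes through untouched while the density component is transformed exactly as by $\semanticsOriginal{S}$. The base cases $\skipbf$, $\qzero$, and $\Uq$ are immediate: in each, the defining clause of $\semantics{S}$ returns a pair whose first entry coincides verbatim with the corresponding clause of $\semanticsOriginal{S}$ from \cite{floydHoareLogic} and whose second entry is the unchanged $p$. The clause for $\observe$ never arises, since $S$ is observe-free, and observe-freeness is inherited by all subprograms used below.

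For sequential composition $\concat$, I would apply the induction hypothesis to $S_1$ to obtain $\semantics{S_1}(\rho,p) = (\semanticsOriginal{S_1}(\rho), p)$, and then to $S_2$ to rewrite $\semantics{S_2}(\semanticsOriginal{S_1}(\rho), p) = (\semanticsOriginal{S_2}(\semanticsOriginal{S_1}(\rho)), p)$; this equals $(\semanticsOriginal{\concat}(\rho), p)$ because the original composition semantics is the functional composition $\semanticsOriginal{S_2}\circ\semanticsOriginal{S_1}$. For the measurement $\measure$, each branch $S_m$ is observe-free, so the induction hypothesis gives $\semantics{S_m}(M_m\rho M_m^\dagger, 0) = (\semanticsOriginal{S_m}(M_m\rho M_m^\dagger), 0)$. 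Summing these and adding $(\zeroOp, p)$ yields $(\sum_m \semanticsOriginal{S_m}(M_m\rho M_m^\dagger), p)$, whose first entry is exactly the original measurement semantics. The only point to note is that every summand contributes $0$ to the probability component, so the entrywise sum of second entries is $0$ and the lone $(\zeroOp, p)$ term supplies the final $p$; linearity of the entrywise operations on $\densityNumberPairs$ makes this routine.

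The while loop is the main obstacle and requires a nested argument. I would first show, by an inner induction on the unfolding index $n$, that $\semantics{(\while)^n}(\rho,p) = (\semanticsOriginal{(\while)^n}(\rho), p)$ for every $n$. The base case uses $\semantics{\Omega}(\rho,p) = (\zeroOp, p)$ together with $\semanticsOriginal{\Omega}(\rho)=\zeroOp$. The step treats $(\while)^{n+1} \equiv \ifstatement{S;(\while)^n}{\skipbf}$ as a measurement statement; crucially, its branches $S;(\while)^n$ and $\skipbf$ are \emph{not} strict subprograms of $\while$, so I cannot invoke the measurement case as a black box. Instead I re-derive it inline, mirroring the composition and measurement computations now fed by the outer induction hypothesis for the body $S$ (a genuine subprogram) and the inner induction hypothesis for $(\while)^n$. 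Passing to the limit, $\semantics{\while}(\rho,p) = \bigvee_{n=0}^\infty (\semanticsOriginal{(\while)^n}(\rho), p)$; since the second components are all equal to the constant $p$ and the first components $\semanticsOriginal{(\while)^n}(\rho)$ form the increasing chain whose least upper bound defines $\semanticsOriginal{\while}(\rho)$ in \cite{floydHoareLogic}, the entrywise least upper bound is $(\semanticsOriginal{\while}(\rho), p)$, as required.

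The one subtlety left to verify is that the entrywise least upper bound in the definition of $\semantics{\while}$ actually exists, so the limit step is legitimate. This follows because the pair sequence is increasing in the order on $\densityNumberPairs$ — monotonicity of the density components is the standard fact underlying the original while semantics, and the constant $p$-components are trivially monotone — and because the Loewner order is $\omega$-complete on partial density operators while the $p$-components are bounded by $1$. Hence the least upper bound is computed componentwise exactly as written, and no further convergence analysis is needed.
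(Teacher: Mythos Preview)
Your proposal is correct. The paper offers no proof at all for this proposition beyond the remark that it ``follows directly'' from the definitions; your structural induction, including the careful inner induction on the unfolding index for the while loop and the observation that the constant second component passes unchanged through the entrywise supremum, is precisely what that remark amounts to when written out.
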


Some intuition behind those tuples: If $\semantics{S}(\rho,0) = (\rho',p')$ for a program $S$ with initial pair $(\rho,0)$, then the probability of violating an observation while executing $S$ on $\rho \in \densityFull$ is $p'$. The probability of terminating normally (without violating an observation) is given by $tr(\rho')$ and the probability for non-termination is $1-tr(\rho')-p'$. As in the observe-free case, $\rho'$ is the (non-normalized) state after $S$ has been executed (and terminated) on $\rho$. It is easy to see that only the observation statement can change the value of the second entry.

\begin{proposition}
    \label{pro:allClaims}
    For $(\rho,p), (\rho,q )\in \densityNumberPairs$ and program $S$:
    \begin{enumerate}
        \item $\semanticsRho{S}(\rho,p) = \semanticsRho{S}(\rho,q)$
        \item $p \leq\semanticsErr{S}(\rho,p)$
        \item if $(\rho,q+p)\in \densityNumberPairs$ then $\semanticsErr{S}(\rho,q+p) = \semanticsErr{S}(\rho,q) + p$
        \item $\tilde{tr}(\semantics{S}(\rho,p))\leq \tilde{tr}(\rho,p)$
        \item $\semantics{S}$ is well defined, i.e., $\semantics{S}(\rho,p) \in \densityNumberPairs$ and the least upper bound exists.
        \item $\semantics{S}$ is linear
    \end{enumerate}
\end{proposition}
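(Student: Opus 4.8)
The plan is to prove all six claims simultaneously by \emph{structural induction} on the program $S$, since the semantics is defined recursively over program structure and most claims are closure properties that propagate naturally through the constructors. I would first establish the base cases ($\skipbf$, $\qzero$, $\Uq$, $\observe$) by direct inspection, then handle the compositional cases ($\concat$, $\measure$, $\while$) using the induction hypothesis. A useful observation to exploit throughout is that claims (1)--(3) essentially say that the two components of $\semantics{S}$ decouple: the first component $\semanticsRho{S}(\rho,p)$ is independent of $p$, and the second component $\semanticsErr{S}(\rho,p)$ is affine in $p$ with unit slope, i.e.\ $\semanticsErr{S}(\rho,p) = \semanticsErr{S}(\rho,0) + p$. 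Together with linearity (claim 6), these make claims (4) and (5) easier to verify by reducing to the behaviour on $(\rho,0)$.

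For the base cases, I would check each directly: $\skipbf$ is the identity so everything is immediate; $\qzero$ and $\Uq$ leave $p$ untouched and act on $\rho$ by a trace-nonincreasing completely positive map, giving (1)--(4) by inspection and (6) by linearity of the defining operations. The $\observe$ case is the only one that modifies the second component: here $\semanticsRho{\observe}(\rho,p) = O\rho O^\dagger$ is visibly independent of $p$ (claim 1), and $\semanticsErr{\observe}(\rho,p) = p + tr(\rho) - tr(O\rho O^\dagger)$, which gives (2) because $tr(O\rho O^\dagger) \leq tr(\rho)$ (as $O \sqsubseteq \identityOp$ and $O^\dagger O = O$), and gives (3) immediately from the $+p$ form. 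For (4), I would compute $\tilde{tr}(\semantics{\observe}(\rho,p)) = tr(O\rho O^\dagger) + p + tr(\rho) - tr(O\rho O^\dagger) = tr(\rho) + p = \tilde{tr}(\rho,p)$, so $\observe$ in fact preserves total mass exactly. Well-definedness (5) then follows since $\tilde{tr}$ does not exceed $1$.

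For the inductive cases, sequential composition follows by simply chaining the induction hypotheses for $S_1$ and $S_2$. The measurement case needs a little care: claim (1) follows because the $(\zeroOp,p)$ summand contributes nothing to the first component and each $\semanticsRho{S_m}(M_m\rho M_m^\dagger,0)$ is independent of $p$ by hypothesis; claim (3) follows because only the isolated $(\zeroOp,p)$ term carries $p$ forward additively; and for (4) I would use that $\sum_m tr(M_m\rho M_m^\dagger) \leq tr(\rho)$ together with the per-branch bounds. Linearity (6) for measurement uses linearity of each $\semantics{S_m}$ and of the maps $\rho \mapsto M_m\rho M_m^\dagger$, plus linearity of addition of tuples.

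The main obstacle is the \textbf{while} case, where $\semantics{\while}(\rho,p)$ is defined as a least upper bound $\bigvee_{n} \semantics{(\while)^n}(\rho,p)$ over the finite unfoldings. Here I must argue both that the sequence is increasing (so the lub is meaningful) and that it converges inside $\densityNumberPairs$ (claim 5). I would show monotonicity of the unfolding sequence by induction on $n$, using the monotonicity of all the program constructors with respect to the entrywise order (which in turn rests on monotonicity of the underlying Loewner-order operations), and then bound each $\tilde{tr}(\semantics{(\while)^n}(\rho,p)) \leq \tilde{tr}(\rho,p) \leq 1$ using claim (4) applied inductively to the finite unfoldings; the $\omega$-cpo property of $\densityNumberPairs$ (inherited entrywise from the Loewner $\omega$-cpo on $\density$ and the usual order on $\R_{\geq 0}$) then guarantees the lub exists and lies in $\densityNumberPairs$. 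For claims (1)--(3) and (6) in the loop case, the key is that each holds for every finite unfolding by the induction hypothesis and is preserved under taking entrywise least upper bounds — for linearity this requires that the lub commutes with the linear operations, which follows from order-continuity; for claim (1) it requires that independence of $p$ at each finite stage passes to the limit, which is immediate since the first component of each unfolding is already $p$-independent. I expect the care needed in justifying that suprema interact correctly with linearity and with the decoupling identities (1)--(3) to be the most delicate part of the argument.
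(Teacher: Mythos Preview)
Your plan is essentially the paper's: simultaneous structural induction, verifying all six claims at each constructor, with the while case singled out as the hard part. Two places where the paper's execution differs from what you sketch are worth flagging. First, for monotonicity of the while unfoldings, the paper does not argue via ``monotonicity of the constructors in the entrywise order''; instead it derives the explicit closed form $\semanticsRho{(\while)^n}(\rho,p)=\sum_{k=0}^{n-1} M_0\, f^k(\rho)\, M_0^\dagger$ with $f(\rho)=\semanticsRho{S'}(M_1\rho M_1^\dagger,0)$, from which the difference of consecutive unfoldings is visibly positive. Your inductive argument also works, but note it does \emph{not} need monotonicity of $\semantics{S'}$ in its input (which is not among the six claims); the recursion already gives $\semantics{(\while)^{n+2}}(\rho,p)-\semantics{(\while)^{n+1}}(\rho,p)$ as the induction hypothesis evaluated at a shifted argument. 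Second, your justification ``the lub commutes with linear operations, which follows from order-continuity'' is not quite the right concept for linearity in the while case: the paper isolates this as a separate preparatory lemma (that the pointwise supremum of an increasing sequence of bounded linear maps $\density\to\density$ or $\densityNumberPairs\to\mathbb{R}_{\geq 0}$ is again linear), proved by rewriting the supremum as a telescoping sum $\sum_n (f_n-f_{n-1})$ and extending each summand linearly to $\ext(\density)$ via a finite decomposition. This is where the real work you anticipate actually lives, and ``order-continuity'' does not quite name it. A minor omission: in the infinite-dimensional setting the sums in the $\measure$ case (and the integer $\qzero$ case) are genuinely infinite, and the paper explicitly argues their SOT convergence before verifying the claims; you should do the same.
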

\begin{proof}
    All claims can be shown by doing an induction over $S$, see Appendix~\ref{app:semantics}.
\end{proof}
As $\semanticsRho{S}(\rho,p) = \semanticsRho{S}(\rho,q)$, we use $\semanticsRho{S}(\rho)$ instead.
Three consequences of Proposition~\ref{pro:allClaims}:
\begin{lemma}
    \label{lem:bounded}
 For $(\rho,p), (\sigma,q) \in \densityNumberPairs$ with $(\rho+\sigma,p+q) \in \densityNumberPairs$ and programs $S, S_1, S_2$:
 \begin{enumerate}
    \item $tr(\semanticsRho{S}(\rho,p))\leq tr(\rho)$, i.e., $\semanticsRho{S}$ is trace-reducing
    \item $\semanticsErr{\concat} (\rho,q+p) =
        \semanticsErr{S_2}(\semanticsRho{S_1}(\rho,0),q) + \semanticsErr{S_1}(\rho,p)$
    \item $\semanticsRho{S}$ is bounded linear
 \end{enumerate}
\end{lemma}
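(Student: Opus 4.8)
To prove Lemma~\ref{lem:bounded} I would derive all three statements directly from Proposition~\ref{pro:allClaims} (whose six items I abbreviate (1)--(6)), without reopening the induction over the program structure.

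For the first claim I would just combine two of the inequalities. Since $\tilde{tr}(\semantics{S}(\rho,p)) = tr(\semanticsRho{S}(\rho,p)) + \semanticsErr{S}(\rho,p)$, item (4) gives $tr(\semanticsRho{S}(\rho,p)) + \semanticsErr{S}(\rho,p) \leq tr(\rho) + p$, while item (2) gives $\semanticsErr{S}(\rho,p) \geq p$. Subtracting the second from the first leaves $tr(\semanticsRho{S}(\rho,p)) \leq tr(\rho)$, which is trace-reduction. This step is purely arithmetic.

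For the second claim I would unfold the sequential-composition clause of the denotational semantics, $\semanticsErr{\concat}(\rho,q+p) = \semanticsErr{S_2}(\semantics{S_1}(\rho,q+p))$, and simplify the inner pair $\semantics{S_1}(\rho,q+p) = (\semanticsRho{S_1}(\rho,q+p),\, \semanticsErr{S_1}(\rho,q+p))$. Its first component equals $\semanticsRho{S_1}(\rho,0)$ by item (1), and its second component equals $\semanticsErr{S_1}(\rho,p)+q$ by item (3) (applied with base $p$ and increment $q$; here $(\rho,p+q)\in\densityNumberPairs$ follows from the hypothesis $(\rho+\sigma,p+q)\in\densityNumberPairs$ together with $tr(\sigma)\geq 0$). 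Writing $\sigma' := \semanticsRho{S_1}(\rho,0)$ and $a := \semanticsErr{S_1}(\rho,p)$, I then apply item (3) a second time to $S_2$, with base $q$ and increment $a$, to obtain $\semanticsErr{S_2}(\sigma',q+a) = \semanticsErr{S_2}(\sigma',q)+a$. The side condition $(\sigma',q+a)\in\densityNumberPairs$ required for this is exactly well-definedness, item (5), applied to $\semantics{S_1}(\rho,q+p) = (\sigma',\,a+q)$. Substituting back yields $\semanticsErr{\concat}(\rho,q+p) = \semanticsErr{S_2}(\semanticsRho{S_1}(\rho,0),q) + \semanticsErr{S_1}(\rho,p)$, as claimed.

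For the third claim, linearity of $\semanticsRho{S}$ is immediate: by item (6) the map $\semantics{S}$ has a linear extension to $\traceclass\times\C$, and composing it with the linear projection onto the first component gives a linear extension of $\semanticsRho{S}$. The substantive part is boundedness in trace norm, and this is the step I expect to be the main obstacle. I would first upgrade the first claim: by item (5) the map $\semanticsRho{S}$ sends $\density$ into $\density$, so by rescaling via linearity it is positive and trace-reducing on the entire positive cone of $\traceclass$. Then, for arbitrary $A\in\traceclass$, I would split off self-adjoint parts $A = B + iC$ with $\norm{B},\norm{C}\leq\norm{A}$, and decompose each self-adjoint part into positive and negative parts $B = B^+ - B^-$, so that $\norm{B} = tr(B^+)+tr(B^-)$ and hence $tr(B^{\pm})\leq\norm{A}$. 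Applying positivity and trace-reduction to each of the up to four positive pieces, and using that the trace norm of a positive operator equals its trace together with the triangle inequality, gives $\norm{\semanticsRho{S}(A)} \leq 2\norm{A}$, so $\semanticsRho{S}$ is bounded linear. The care here lies in controlling the trace norm through these decompositions and in justifying the extension of trace-reduction from $\density$ to the whole positive cone; the manipulations behind the first two claims are routine by comparison.
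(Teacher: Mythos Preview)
Your proposal is correct and follows essentially the same approach as the paper: claim~1 combines Proposition~\ref{pro:allClaims}(2) and (4) exactly as the paper does, claim~2 unfolds sequential composition and applies item~(3) twice (the paper merely says ``direct consequence'' of item~(3), so you have spelled out what the paper elides), and claim~3 uses the same decomposition-into-positive-pieces argument, with your Jordan decomposition yielding the constant $2$ where the paper's cited four-term decomposition gives $4$. No gaps.
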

The proof can be found in the Appendix~\ref{app:semantics}.

\subsubsection{Equivalence of Semantics}
The following lemma asserts the equivalence of our operational and denotational semantics. Intuitively, the denotational semantics gives a distribution over final states and its second component captures the probability to reach $\errConfig$, the state for violated observations. As the operational semantics is only defined for $tr(\rho) = 1$, we only consider this case:
\begin{lemma}
    \label{lem:operationalVsDenotational}
    For any program $S$ and initial state $\rho \in \densityFull$
    \begin{itemize}
        \item $\semantics{S}(\rho,0) = (\sum_{\rho'} \prob{\operationalMC{\rho}{S}}{\eventually \config{\emptyProgram}{\rho'}} \cdot \rho',\prob{\operationalMC{\rho}{S}}{\eventually \errConfig})$
        \item $\prob{\operationalMC{\rho}{S}}{\eventually \termConfig} = tr(\semanticsRho{S}(\rho,0))+\semanticsErr{S}(\rho,0)$
    \end{itemize}
\end{lemma}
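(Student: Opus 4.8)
The plan is to prove the first identity by structural induction on $S$ and then derive the second from it. For the reduction, I observe that the only transitions into $\termConfig$ originate from $\errConfig$ and from configurations $\config{\emptyProgram}{\rho'}$, each of which moves to $\termConfig$ with probability $1$; hence a path reaches $\termConfig$ exactly when it previously reached $\errConfig$ or some $\config{\emptyProgram}{\rho'}$, and these are mutually exclusive gateways. Consequently $\prob{\operationalMC{\rho}{S}}{\eventually \termConfig} = \sum_{\rho'}\prob{\operationalMC{\rho}{S}}{\eventually \config{\emptyProgram}{\rho'}} + \prob{\operationalMC{\rho}{S}}{\eventually \errConfig}$, where (as noted for the Markov chains) only countably many $\rho'$ are reachable. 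Using $tr(\rho')=1$ for each $\rho'\in\densityFull$ together with linearity and order-continuity of the trace to pull it through the countable sum, the right-hand side becomes $tr(\semanticsRho{S}(\rho,0)) + \semanticsErr{S}(\rho,0)$ by the first identity. So it suffices to establish the first bullet.

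In the base cases the initial configuration has a single outgoing non-$\errConfig$ transition leading directly to an $\emptyProgram$-configuration, so the reachability sum collapses to one term matching the corresponding clause of $\semantics{\cdot}$; e.g.\ for $\Uq$ the unique successor is $\config{\emptyProgram}{U\rho U^\dagger}$ with probability $1$, giving $\semanticsRho{\Uq}(\rho)=U\rho U^\dagger$ and $\semanticsErr{\Uq}(\rho,0)=0$. The only base case affecting the second component is $\observe$: here $\config{\observe}{\rho}$ reaches $\config{\emptyProgram}{O\rho O^\dagger/tr(O\rho O^\dagger)}$ with probability $tr(O\rho O^\dagger)$ and $\errConfig$ with probability $1-tr(O\rho O^\dagger)$. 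Weighting the reached normalized state by its probability gives $tr(O\rho O^\dagger)\cdot \frac{O\rho O^\dagger}{tr(O\rho O^\dagger)}=O\rho O^\dagger$ and error mass $1-tr(O\rho O^\dagger)$, reproducing $\semantics{\observe}(\rho,0)=(O\rho O^\dagger,\,tr(\rho)-tr(O\rho O^\dagger))$.

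For the inductive cases $\concat$ and $\measure$ I would decompose paths and use linearity. For $\concat$, split every path according to the state $\sigma$ in which the $S_1$-part first completes (the first time the program component becomes $\emptyProgram;S_2\equiv S_2$), or else it reaches $\errConfig$ during the $S_1$-phase; the transition rules show the prefix behaves as $\operationalMC{\rho}{S_1}$ and the suffix as $\operationalMC{\sigma}{S_2}$. Applying the induction hypothesis to $S_1$ and $S_2$ and recombining via the bounded linearity of $\semanticsRho{S_2}$ and the error-composition identity of Lemma~\ref{lem:bounded} yields $\semantics{\concat}(\rho,0)=\semantics{S_2}(\semantics{S_1}(\rho,0))$. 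For $\measure$ the initial configuration branches into outcomes $m$ with weight $tr(M_m\rho M_m^\dagger)$ and continues as $\config{S_m}{\rho_m}$ with $\rho_m=M_m\rho M_m^\dagger/tr(M_m\rho M_m^\dagger)$; the induction hypothesis for each $S_m$ and the linearity from Proposition~\ref{pro:allClaims} make the normalization factor cancel, $tr(M_m\rho M_m^\dagger)\cdot\semanticsRho{S_m}(\rho_m)=\semanticsRho{S_m}(M_m\rho M_m^\dagger)$, and summing over $m$ reproduces the denotational clause.

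The while case is the crux. Denotationally $\semantics{\while}(\rho,0)=\bigvee_{n}\semantics{(\while)^n}(\rho,0)$ by definition, so I would first show by an inner induction on $n$, using the already-settled cases for measurement, sequencing and $\skipbf$ (and the outer hypothesis for the body $S$), that each finite unfolding satisfies the first identity, with $\Omega$ contributing nothing since it never reaches an $\emptyProgram$-configuration. The remaining, and hardest, step is the operational limit: any path of $\operationalMC{\rho}{\while}$ that reaches $\config{\emptyProgram}{\rho'}$ does so after finitely many iterations, and grouping terminating paths by iteration count identifies these probabilities with the monotonically increasing reachabilities of $\operationalMC{\rho}{(\while)^n}$, and similarly for $\errConfig$. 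Passing to the supremum and invoking the $\omega$-cpo structure of $\density$ and the order-continuity of the trace (so the least upper bound may be exchanged with the countable sum over $\rho'$ and with the trace) identifies the operational limit with $\bigvee_n\semanticsRho{(\while)^n}(\rho)$. I expect the main obstacle to be exactly this interchange of limits: making the path-grouping argument precise and proving that the supremum of the finite-unfolding reachabilities equals the reachability of the loop itself, while ensuring all infinite sums converge in trace norm.
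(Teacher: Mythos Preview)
Your proposal is correct and follows essentially the same approach as the paper: structural induction on $S$ for the first identity, with the second derived from it via the gateway observation that every path reaching $\termConfig$ passes through either $\errConfig$ or some $\config{\emptyProgram}{\rho'}$. The paper's while-case differs only cosmetically, establishing the supremum equality by two explicit path-inclusion inequalities (each finite-unfolding path corresponds to a loop path and vice versa) rather than via order-continuity, but the overall structure and all other cases match yours.
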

\begin{proof}
    The first item can be shown by induction over $S$. For the second item we use that every path that eventually reaches $\termConfig$ passes through either a $\config{\emptyProgram}{\rho'}$ or a $\errConfig$ state, see Appendix~\ref{app:semantics}.
\end{proof}
\section{Weakest Preconditions}
\label{sec:wp}

In this section, we consider how we can extend the weakest precondition calculus to capture observations and thus compute conditional probabilities of quantum programs using deductive verification. Recall that a predicate $P$ satisfies $\zeroOp \sqsubseteq P \sqsubseteq \identityOp$. Let $tr(P\rho)$ by the probability that $\rho$ satisfies $P$. Note that if $P$ is a projector, then $tr(P\rho)$ equals the probability that $\rho$ gives answer ``yes'' in a measurement defined by $P$. Even if $P$ is not a projection, $tr(P\rho)$ is the average value of measuring $\rho$ with the measurement described by the observable $P$. 
If not given directly, all proofs can be found in the Appendix~\ref{app:wp}.

\subsection{Total and Partial Correctness}
Defining the semantics in a different way also changes the definition of Hoare logic with total and partial correctness \cite{floydHoareLogic}:
\begin{definition}
    Let $P,Q\in \predicate$, $S$ a program, $\rho\in \density$ and $\{P\} S \{Q\}$ a correctness formula. Then
    \begin{enumerate}
        \item (total correctness) $\models_{tot} \{P\} S \{Q\}$ iff $tr(P\rho)\leq tr(Q\semanticsRho{S } (\rho,0))$
        \item (partial correctness) $\models_{par} \{P\} S \{Q\}$ iff $tr(P\rho)\leq tr(Q\semanticsRho{S } (\rho,0))+ tr(\rho)- tr(\semanticsRho{S } (\rho,0)) - \semanticsErr{S } (\rho,0)$
    \end{enumerate}
\end{definition}
Let us explain this definition. Assume $tr(\rho)=1$, otherwise all probabilities mentioned in the following are non-normalized.
Recall that $tr(P\rho)$ is the probability that state $\rho$ satisfies predicate $P$ and $tr(Q\semanticsRho{S } (\rho,0))$ is the probability that the state after execution of $S$ starting with $\rho$ satisfies predicate $Q$. Total correctness entails that the probability of a state satisfying precondition $P$ is at most the probability that it satisfies postcondition $Q$ after execution of $S$. This only involves terminating runs.
In the formula of partial correctness, the summand $\semanticsErr{S } (\rho,0)$ captures the probability that an observation is violated during executing program $S$ on state $\rho$. As before, $tr(\rho)- tr(\semanticsRho{S } (\rho,0))$ captures the probability that $S$ on state $\rho$ does not terminate.

Similar to \cite{floydHoareLogic}, we have some nice but different properties:
\begin{proposition}
    \begin{enumerate}
        \item $\models_{tot} \hoare{P}{S}{Q}$ implies $\models_{par} \hoare{P}{S}{Q}$
        \item $\models_{tot} \hoare{\zeroOp}{S}{Q}$. However, $\models_{par} \hoare{P}{S}{\identityOp}$ does not hold in general.
        \item For $P_1,P_2,Q_1,Q_2 \in \predicate$ and $\lambda_1, \lambda_2 \in \mathbb{R}_{\geq 0}$ with $\lambda_1 P_1 + \lambda_2 P_2, \lambda_1 Q_1 + \lambda_2 Q_2 \in \predicate $:
        $\models_{tot} \hoare{P_1}{S}{Q_1} \land \models_{tot} \hoare{P_2}{S}{Q_2}$ implies $\models_{tot} \hoare{\lambda_1 P_1 + \lambda_2 P_2}{S}{\lambda_1 Q_1 + \lambda_2 Q_2} $
    \end{enumerate}
\end{proposition}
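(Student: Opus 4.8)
The plan is to reduce all three claims to the two defining inequalities of total and partial correctness, exploiting only the trace bounds already established in Proposition~\ref{pro:allClaims} and Lemma~\ref{lem:bounded}; each item is a pointwise (in $\rho$) manipulation of traces, so I would fix an arbitrary $\rho \in \density$ throughout and argue that the relevant inequality holds for it.

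For item~1, I would note that the partial-correctness bound differs from the total-correctness bound only by the summand $tr(\rho) - tr(\semanticsRho{S}(\rho,0)) - \semanticsErr{S}(\rho,0)$, so it suffices to show this summand is non-negative. The key step is Proposition~\ref{pro:allClaims}(4), which gives $\tilde{tr}(\semantics{S}(\rho,0)) \leq \tilde{tr}(\rho,0)$; unfolding $\tilde{tr}(\rho,p) = tr(\rho)+p$ and $\tilde{tr}(\semantics{S}(\rho,0)) = tr(\semanticsRho{S}(\rho,0)) + \semanticsErr{S}(\rho,0)$, this reads $tr(\semanticsRho{S}(\rho,0)) + \semanticsErr{S}(\rho,0) \leq tr(\rho)$, i.e.\ the extra summand is $\geq 0$. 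Hence the total-correctness inequality $tr(P\rho) \leq tr(Q\,\semanticsRho{S}(\rho,0))$ implies the weaker partial one by adding this non-negative slack to the right-hand side.

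For item~2, the claim $\models_{tot}\hoare{\zeroOp}{S}{Q}$ is immediate: $tr(\zeroOp\,\rho)=0$, while $tr(Q\,\semanticsRho{S}(\rho,0))\geq 0$ since it is the trace of a product of the positive operator $Q$ and the positive (partial density) operator $\semanticsRho{S}(\rho,0)$. For the negative statement I would exhibit a counterexample with $Q=\identityOp$: here partial correctness simplifies to $tr(P\rho)\leq tr(\rho)-\semanticsErr{S}(\rho,0)$, so it suffices to force $\semanticsErr{S}(\rho,0)>0$ while keeping $tr(P\rho)=tr(\rho)$. Taking $P=\identityOp$, $S\equiv\observe$ with $O=\ket{0}\bra{0}$, and $\rho=\ket{1}\bra{1}$ gives $\semanticsErr{S}(\rho,0)=tr(\rho)-tr(O\rho O^\dagger)=1$, so the required inequality becomes $1\leq 0$, which fails.

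Item~3 follows directly from linearity of the trace together with $\lambda_1,\lambda_2\geq 0$: expanding $tr((\lambda_1 P_1+\lambda_2 P_2)\rho)=\lambda_1\,tr(P_1\rho)+\lambda_2\,tr(P_2\rho)$ and likewise on the postcondition side, I would scale each hypothesis $tr(P_i\rho)\leq tr(Q_i\,\semanticsRho{S}(\rho,0))$ by $\lambda_i\geq 0$ and add the two, obtaining the desired inequality; the side conditions $\lambda_1 P_1+\lambda_2 P_2,\ \lambda_1 Q_1+\lambda_2 Q_2\in\predicate$ only ensure the conclusion is a well-formed correctness formula and are not used in the estimate. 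I expect no serious obstacle here: the only non-routine point is recognizing in item~1 that the ``missing mass'' $tr(\rho)-tr(\semanticsRho{S}(\rho,0))-\semanticsErr{S}(\rho,0)$ is precisely the non-negative slack supplied by the $\tilde{tr}$-contraction of Proposition~\ref{pro:allClaims}(4), and choosing a transparent observe-based counterexample for item~2.
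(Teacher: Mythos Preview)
Your proposal is correct and matches the paper's own proof almost verbatim: item~1 via the trace inequality of Proposition~\ref{pro:allClaims}(4), item~2 via $tr(\zeroOp\rho)=0$ and an observe-based counterexample (the paper uses $O=\ket{1}\bra{1}$, $\rho=\ket{0}\bra{0}$, you swap the roles---same idea), and item~3 via linearity of the trace.
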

\begin{proof}
    \begin{enumerate}
        \item Follows from definition and $tr(\semanticsRho{S} (\rho,0))+ \semanticsErr{S} (\rho,0)\leq tr(\rho)+0$ (Proposition~\ref{pro:allClaims})
        \item $\models_{tot} \hoare{\zeroOp}{S}{Q}$ follows from definition, $ tr(\zeroOp)=0$ and $tr(Q\sigma)\geq 0$ for all $\sigma\in \density$. For disproving $\models_{par} \hoare{P}{S}{\identityOp}$, consider a program (with only one variable $q$) $S \equiv \mathbf{observe }(\ket{1}\bra{1}, q)$. Then the statement does not holds with $P=\identityOp$ and $\rho = \ket{0}\bra{0}$ because $\semanticsErr{S}(\rho,0) > 0$.
        \item Follows from the linearity of the trace and the definition of $\models_{tot}$.
    \end{enumerate}
\end{proof}

\subsection{Weakest (Liberal) Preconditions}
Given a postcondition and a program, we are interested in the best (weakest) precondition w.r.t. total and partial correctness:
\begin{definition}
    Let program $S$ and predicate $P \in \predicate$.
    \begin{enumerate}
        \item The \emph{weakest precondition} is defined as $\qwp{S}{P}= \sup \{Q \mid \text{ } \models_{tot} \hoare{Q}{S}{P}\}$. Thus $\models_{tot}\hoare{\qwp{S}{P}}{S}{P}$ and $\models_{tot}\hoare{Q}{S}{P}$ implies $Q\sqsubseteq \qwp{S}{P}$ for all $Q\in \predicate$.
        \item The \emph{weakest liberal precondition} is defined as $\qwlp{S}{P}= \sup \{Q \mid \text{ }\models_{par} \hoare{Q}{S}{P}\}$. Thus $\models_{par}\hoare{\qwlp{S}{P}}{S}{P}$ and $\models_{par}\hoare{Q}{S}{P}$ implies $Q\sqsubseteq \qwlp{S}{P}$ for all $Q\in \predicate$.
    \end{enumerate}
\end{definition}
The following lemmas show that these suprema indeed exist. Both proofs are based on the Schrödinger-Heisenberg duality \cite{heisenbergdualityUnruh}.
\begin{lemma}
    \label{lem:schroedinger}
    For a function $\semantics{S}: \densityNumberPairs\to \densityNumberPairs$ with properties as in Proposition~\ref{pro:allClaims}, the weakest precondition $qwp\llbracket S \rrbracket: \predicate \rightarrow \predicate$ exists and is bounded linear and subunital. It satisfies $tr(\qwp{S}{P}\rho) = tr(P\semanticsRho{S}(\rho,0))$ for all $\rho \in \density, P \in \predicate$ and it is the only function of this type with this property.
\end{lemma}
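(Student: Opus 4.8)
The plan is to construct $\qwp{S}{\cdot}$ explicitly as the Heisenberg dual (Banach adjoint) of the Schrödinger-picture state transformer $\semanticsRho{S}$, and then to read off every claimed property from the defining adjunction identity together with the Loewner-order characterisation $A \sqsubseteq B \iff tr(A\rho) \leq tr(B\rho)$ for all $\rho \in \density$ recalled in the preliminaries. By Lemma~\ref{lem:bounded} the map $\semanticsRho{S}: \density \to \density$ is bounded linear and trace-reducing, so it extends uniquely to a bounded linear map on $\traceclass = \ext(\density)$. Invoking the Schrödinger-Heisenberg duality \cite{heisenbergdualityUnruh}, which identifies $\bounded$ with the dual Banach space of $\traceclass$ under the faithful trace pairing $\langle A,\rho\rangle = tr(A\rho)$, the adjoint $(\semanticsRho{S})^* : \bounded \to \bounded$ exists, is bounded linear, and satisfies $tr((\semanticsRho{S})^*(A)\,\rho) = tr(A\,\semanticsRho{S}(\rho))$ for every $A \in \bounded$ and $\rho \in \traceclass$. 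I would then \emph{define} $\qwp{S}{P} := (\semanticsRho{S})^*(P)$; the required identity $tr(\qwp{S}{P}\rho) = tr(P\,\semanticsRho{S}(\rho,0))$ is immediate (using that $\semanticsRho{S}$ is independent of the second component, Proposition~\ref{pro:allClaims}), and boundedness and linearity are inherited from the adjoint.

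Next I would check that $\qwp{S}{\cdot}$ maps $\predicate$ into $\predicate$ and is subunital, using only the adjunction identity and the Loewner characterisation. For positivity $\zeroOp \sqsubseteq \qwp{S}{P}$, note that for $\rho \in \density$ the well-definedness clause of Proposition~\ref{pro:allClaims} gives $\semanticsRho{S}(\rho) \in \density$, so $tr(\qwp{S}{P}\rho) = tr(P\,\semanticsRho{S}(\rho)) \geq 0$ because the trace of a product of two positive operators is nonnegative. For the upper bound $\qwp{S}{P} \sqsubseteq \identityOp$, combine $P \sqsubseteq \identityOp$ with the trace-reducing property: $tr(\qwp{S}{P}\rho) = tr(P\,\semanticsRho{S}(\rho)) \leq tr(\semanticsRho{S}(\rho)) \leq tr(\rho) = tr(\identityOp\rho)$, and the Loewner characterisation turns this into $\qwp{S}{P} \sqsubseteq \identityOp$. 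Taking $P = \identityOp$ in the same chain yields $\qwp{S}{\identityOp} \sqsubseteq \identityOp$, i.e.\ subunitality.

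It then remains to identify this operator with the supremum in the definition of $qwp$, and to prove uniqueness. Total correctness $\models_{tot}\hoare{\qwp{S}{P}}{S}{P}$ holds because the trace identity gives equality, hence $\leq$. Conversely, if $\models_{tot}\hoare{Q}{S}{P}$ then $tr(Q\rho) \leq tr(P\,\semanticsRho{S}(\rho)) = tr(\qwp{S}{P}\rho)$ for all $\rho \in \density$, so $Q \sqsubseteq \qwp{S}{P}$ by the Loewner characterisation; thus $(\semanticsRho{S})^*(P)$ is simultaneously a member of the set and an upper bound, making it the (attained) supremum. Uniqueness is again a faithfulness argument: if $T$ is any bounded linear map satisfying the same trace identity, then $tr((T(P) - \qwp{S}{P})\rho) = 0$ for all $\rho \in \density$, and since $\density$ spans $\traceclass$ and the trace pairing separates points of $\bounded$, we conclude $T(P) = \qwp{S}{P}$.

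The main obstacle is the functional-analytic justification of the duality in the \emph{infinite-dimensional} setting, i.e.\ that $\bounded$ really is the dual of $\traceclass$ and that the adjoint of a bounded trace-class map is a well-defined bounded operator on $\bounded$. In finite dimensions this is a triviality, but here it is precisely the content of the Schrödinger-Heisenberg duality we rely on, and one must be careful that the trace pairing is faithful and that $\semanticsRho{S}$ is genuinely bounded (Lemma~\ref{lem:bounded}) before the adjoint can be formed. Everything else reduces to routine bookkeeping with the Loewner order.
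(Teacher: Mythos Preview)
Your proposal is correct and follows essentially the same route as the paper: both construct $qwp\llbracket S\rrbracket$ as the Schr\"odinger--Heisenberg dual of the bounded linear extension of $\semanticsRho{S}$ (citing \cite{heisenbergdualityUnruh}), then verify that it lands in $\predicate$, realises the supremum in the definition of total correctness, and is unique via faithfulness of the trace pairing. The only cosmetic differences are that the paper reads positivity and subunitality of the dual directly off the cited duality lemma whereas you rederive them from the trace identity and the Loewner characterisation, and the paper's uniqueness argument specialises to rank-one $\rho=\psi\psi^\dagger$ and invokes \cite[II.2.15]{conway1994} rather than appealing to $\ext(\density)=\traceclass$; both variants are equivalent.
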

This lemma (and the following one) does not require $\semantics{S}$ to be a denotational semantics of some program $S$. In contrast to \cite{floydHoareLogic}, this result thus still holds if the language is extended as long as the conditions still holds.
\begin{lemma} \label{lem:wlpexistence}
    For a function $\semantics{S}:\densityNumberPairs \rightarrow \densityNumberPairs$ with properties as in Proposition~\ref{pro:allClaims}, the weakest liberal precondition $qwlp\llbracket S \rrbracket: \predicate \rightarrow \predicate$ exists and is subunital. It satisfies
    \begin{equation*}
        tr(\qwlp{S}{P}\rho) = tr(P\semanticsRho{S}(\rho,0)) + tr(\rho)-tr(\semanticsRho{S}(\rho,0))- \semanticsErr{S}(\rho,0)
    \end{equation*}
    for each $\rho \in \density, P\in \predicate$ and it is the only function of this type with this property.
\end{lemma}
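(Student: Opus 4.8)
The plan is to mirror the structure of Lemma~\ref{lem:schroedinger} (the $qwp$ case), since the target equation for $\qwlp{S}{P}$ differs from the $qwp$ identity only by the additive term $tr(\rho)-tr(\semanticsRho{S}(\rho,0))-\semanticsErr{S}(\rho,0)$, which does not depend on $P$. The key observation is that the right-hand side, viewed as a function of $\rho$ for fixed $P$, is of the form $\rho \mapsto tr(P\,\semanticsRho{S}(\rho,0)) + L(\rho)$, where $L(\rho) := tr(\rho)-tr(\semanticsRho{S}(\rho,0))-\semanticsErr{S}(\rho,0)$ is a bounded linear functional on $\traceclass$ by Proposition~\ref{pro:allClaims} (items~4 and~6) and Lemma~\ref{lem:bounded}. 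So the whole right-hand side is a bounded linear functional in $\rho$.

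First I would invoke the Schrödinger--Heisenberg duality \cite{heisenbergdualityUnruh} exactly as in Lemma~\ref{lem:schroedinger}: any weak$^*$-continuous (equivalently, trace-norm bounded and suitably continuous) linear functional $\rho \mapsto f(\rho)$ on $\traceclass$ is represented uniquely as $f(\rho)=tr(A\rho)$ for a unique bounded operator $A\in\bounded$. Applying this to the map $\rho \mapsto tr(P\,\semanticsRho{S}(\rho,0))+L(\rho)$ yields a unique operator, which I define to be $\qwlp{S}{P}$; this simultaneously establishes existence, the stated trace identity, and uniqueness among bounded operators satisfying it.

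The remaining obligations are to show $\qwlp{S}{P}$ is a genuine predicate (i.e.\ $\zeroOp \sqsubseteq \qwlp{S}{P} \sqsubseteq \identityOp$) and that it is subunital and equals the supremum of $\{Q \mid \models_{par}\hoare{Q}{S}{P}\}$. For $\zeroOp \sqsubseteq \qwlp{S}{P}$, I would check that the right-hand side is nonnegative for every $\rho\in\density$: this follows because $tr(P\,\semanticsRho{S}(\rho,0))\geq 0$ and, by trace-reduction (Lemma~\ref{lem:bounded}, item~1) together with $tr(\semanticsRho{S}(\rho,0))+\semanticsErr{S}(\rho,0)\leq tr(\rho)$ (Proposition~\ref{pro:allClaims}, item~4), the additive term $L(\rho)$ is also nonnegative. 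For $\qwlp{S}{P}\sqsubseteq\identityOp$ (and hence subunitality, taking $P$ and bounding by $\identityOp$), I would show $tr(\qwlp{S}{P}\rho)\leq tr(\rho)$ using $tr(P\sigma)\leq tr(\sigma)$ for $P\sqsubseteq\identityOp$ applied to $\sigma=\semanticsRho{S}(\rho,0)$, which makes the full right-hand side collapse to at most $tr(\rho)$. Both bounds then transfer from all $\rho\in\density$ to the Loewner order via the characterization $A\sqsubseteq B \iff tr(A\rho)\leq tr(B\rho)$ for all $\rho\in\density$ recalled in the preliminaries.

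Finally, to identify this operator as the supremum defining $\qwlp{S}{P}$, I would argue that $\models_{par}\hoare{Q}{S}{P}$ is by definition equivalent to $tr(Q\rho)\leq tr(\qwlp{S}{P}\rho)$ for all $\rho\in\density$, i.e.\ $Q\sqsubseteq\qwlp{S}{P}$; so the operator I constructed is an upper bound of the set that is itself in the set, hence the supremum. I expect the main obstacle to be the careful justification of the duality step in the \emph{infinite-dimensional} setting --- specifically verifying that $L$ is weak$^*$-continuous so that the representation theorem applies and the unique representing operator is genuinely bounded --- since this is precisely where the paper's emphasis on operator topologies and convergence matters, and where the finite-dimensional treatment of \cite{floydHoareLogic} would not suffice.
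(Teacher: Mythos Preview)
Your proposal is correct and would work, but it takes a somewhat different organizational route than the paper. The paper does not apply the duality directly to the combined functional $\rho\mapsto tr(P\,\semanticsRho{S}(\rho,0))+L(\rho)$. Instead it separates the pieces: it reuses Lemma~\ref{lem:schroedinger} for the $P$-dependent part (i.e., it already has $\qwp{S}{P}$ with $tr(\qwp{S}{P}\rho)=tr(P\,\semanticsRho{S}(\rho,0))$), and it applies the Schr\"odinger--Heisenberg duality \emph{only} to the scalar map $f(\rho)=\overline{\semanticsErr{S}}(\rho,0)$, viewed as a positive, trace-reducing, bounded linear map $\traceclass\to T(\mathbb{C})$, obtaining an operator $g(1)\in\predicate$ with $tr(g(1)\rho)=\semanticsErr{S}(\rho,0)$. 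It then \emph{defines} $\qwlp{S}{P}:=\qwp{S}{P}-\qwp{S}{\identityOp}+\identityOp-g(1)$ and verifies this is a predicate and satisfies the stated identity. The advantage of the paper's decomposition is that it yields this explicit algebraic relation between $qwlp$ and $qwp$, which is reused later (for instance to prove affinity in Proposition~\ref{prop:healthWLP}). Your single-shot application of the duality is cleaner and more direct, but it does not immediately expose that formula.

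One terminological point: what you need for the representation step is not weak$^*$-continuity but simply trace-norm boundedness of the functional on $\traceclass$, since the relevant duality is $\traceclass^*\cong\bounded$ (the weak$^*$-topology lives on $\bounded$, not on $\traceclass$). Your argument for boundedness via Proposition~\ref{pro:allClaims} and Lemma~\ref{lem:bounded} is fine once phrased this way; note in particular that boundedness of $\rho\mapsto\semanticsErr{S}(\rho,0)$ is not stated verbatim in Lemma~\ref{lem:bounded} but follows from positivity and $\semanticsErr{S}(\rho,0)\le tr(\rho)$ by the same $4$-positive-part decomposition used there.
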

This general theorem about the existence of weakest liberal preconditions also applies for programs without observations (because $\semanticsErr{S}(\rho,0)=0$ and $\semanticsRho{S}(\rho,0)=\semanticsOriginal{S}(\rho)$ for each $\rho$ for observation-free program $S$, Proposition~\ref{prop:unconditionedSemantics}).
Lemma~\ref{lem:schroedinger} and~\ref{lem:wlpexistence} extend \cite{DHondtWeakestPreconditions} to the infinite-dimensional case and to partial correctness, i.e., the existence of weakest liberal preconditions.
Now we consider some healthiness properties about weakest (liberal) preconditions: 
\begin{proposition}
    \label{prop:healthWP}
    For every program $S$, the function $qwp\llbracket S \rrbracket:\predicate \to \predicate$ satisfies:
    \begin{itemize}
        \item Bounded linearity
        \item Subunitality: $\qwp{S}{\identityOp}\sqsubseteq \identityOp$
        \item Monotonicity: $P\sqsubseteq Q$ implies $\qwp{S}{P}\sqsubseteq \qwp{S}{Q}$
        \item Order-continuity: $\qwp{S}{\bigvee_{i=0}^\infty P_i} = \bigvee_{i=0}^\infty \qwp{S}{P_i}$ if $\bigvee_{i=0}^\infty P_i$ exists
    \end{itemize}
\end{proposition}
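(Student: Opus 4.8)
The plan is to derive all four properties from the characterizing identity $tr(\qwp{S}{P}\rho) = tr(P\semanticsRho{S}(\rho,0))$ supplied by Lemma~\ref{lem:schroedinger}, combined with the Loewner--trace duality $A \sqsubseteq B \iff tr(A\rho) \leq tr(B\rho)$ for all $\rho \in \density$ and the trace-continuity facts recalled in the preliminaries. Two of the four properties then require no separate argument: Lemma~\ref{lem:schroedinger} already asserts that $\qwp{S}{\cdot}$ is bounded linear and subunital, so bounded linearity and subunitality $\qwp{S}{\identityOp} \sqsubseteq \identityOp$ are immediate.

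For monotonicity I would fix $P \sqsubseteq Q$ and an arbitrary $\rho \in \density$, and observe that $\sigma := \semanticsRho{S}(\rho,0)$ is itself a partial density operator: it lies in $\density$ because $\semantics{S}(\rho,0) \in \densityNumberPairs$ by Proposition~\ref{pro:allClaims}. Since $P \sqsubseteq Q$, the Loewner--trace duality applied to the partial density operator $\sigma$ gives $tr(P\sigma) \leq tr(Q\sigma)$, and rewriting both sides via Lemma~\ref{lem:schroedinger} yields $tr(\qwp{S}{P}\rho) \leq tr(\qwp{S}{Q}\rho)$. As $\rho$ was arbitrary, the duality in the reverse direction delivers $\qwp{S}{P} \sqsubseteq \qwp{S}{Q}$.

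For order-continuity I would first invoke the just-established monotonicity to conclude that $\{\qwp{S}{P_i}\}_{i}$ is an increasing sequence of predicates, so that $\bigvee_i \qwp{S}{P_i}$ exists by the $\omega$-cpo property of the Loewner order on $\predicate$. It then suffices to show that the predicates $\qwp{S}{\bigvee_i P_i}$ and $\bigvee_i \qwp{S}{P_i}$ agree when traced against every $\rho \in \density$. On the left, Lemma~\ref{lem:schroedinger} turns $tr(\qwp{S}{\bigvee_i P_i}\rho)$ into $tr((\bigvee_i P_i)\semanticsRho{S}(\rho,0))$, and since $\semanticsRho{S}(\rho,0) \in \density$, trace-continuity with respect to predicates pushes the supremum outward to give $\bigvee_i tr(P_i \semanticsRho{S}(\rho,0)) = \bigvee_i tr(\qwp{S}{P_i}\rho)$. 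On the right, the same trace-continuity applied to the increasing sequence $\{\qwp{S}{P_i}\}$ gives $tr((\bigvee_i \qwp{S}{P_i})\rho) = \bigvee_i tr(\qwp{S}{P_i}\rho)$, so both sides coincide and the predicates are equal.

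I expect order-continuity to be the only genuine obstacle. The two delicate points are that the supremum on the right must be shown to exist (via monotonicity and the $\omega$-cpo property) before the comparison is even meaningful, and that trace-continuity may only be invoked against bona fide density operators --- which is precisely why the intermediate object $\semanticsRho{S}(\rho,0)$ must first be recognized as an element of $\density$. The remaining steps are routine applications of the trace characterization and the duality.
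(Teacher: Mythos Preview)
Your proposal is correct and follows essentially the same route as the paper's proof: bounded linearity and subunitality are read off directly from Lemma~\ref{lem:schroedinger}, while monotonicity and order-continuity are obtained by pushing through the trace identity $tr(\qwp{S}{P}\rho) = tr(P\,\semanticsRho{S}(\rho,0))$ together with the Loewner--trace duality and trace-continuity. If anything, you are slightly more explicit than the paper in justifying that $\semanticsRho{S}(\rho,0)\in\density$ and that $\bigvee_i \qwp{S}{P_i}$ exists before invoking it; the paper leaves these implicit.
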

\begin{proposition} \label{prop:healthWLP}
    For every program $S$, the function $qwlp\llbracket S \rrbracket:\predicate \to \predicate$ satisfies:
    \begin{itemize}
        \item Affinity: The function $f:\predicate \to \predicate$ with $f(P)=\qwlp{S}{P} - \qwlp{S}{\zeroOp}$ is linear.
        Note that this implies convex-linearity and sublinearity.
       \item Subunitality: $\qwlp{S}{\identityOp}\sqsubseteq \identityOp$
        \item Monotonicity: $P\sqsubseteq Q$ implies $\qwlp{S}{P}\sqsubseteq \qwlp{S}{Q}$
        \item Order-continuity: $\qwlp{S}{\bigvee_{i=0}^\infty P_i} = \bigvee_{i=0}^\infty \qwlp{S}{P_i}$ if $\bigvee_{i=0}^\infty P_i$ exists
    \end{itemize}
\end{proposition}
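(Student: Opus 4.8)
The plan is to derive all four properties from the single defining identity of Lemma~\ref{lem:wlpexistence}, which characterizes $\qwlp{S}{P}$ entirely through its trace pairing against density operators, together with the fact (recalled in the preliminaries) that the Loewner order is determined by such pairings: $A \sqsubseteq B$ iff $tr(A\rho) \le tr(B\rho)$ for all $\rho \in \density$. The key organizing observation is that the identity splits into a $P$-dependent part and a $P$-independent ``defect'' $d(\rho) := tr(\rho) - tr(\semanticsRho{S}(\rho,0)) - \semanticsErr{S}(\rho,0)$, so that $tr(\qwlp{S}{P}\rho) = tr(P\semanticsRho{S}(\rho,0)) + d(\rho)$ for every $\rho \in \density$.

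For affinity I would first compute $tr(\qwlp{S}{\zeroOp}\rho) = d(\rho)$ (since $tr(\zeroOp \cdot \semanticsRho{S}(\rho,0)) = 0$), whence $tr((\qwlp{S}{P} - \qwlp{S}{\zeroOp})\rho) = tr(P\semanticsRho{S}(\rho,0))$ for all $\rho$. This is exactly the defining property of $\qwp{S}{P}$ from Lemma~\ref{lem:schroedinger}, so by the uniqueness asserted there the affine correction $f(P) = \qwlp{S}{P} - \qwlp{S}{\zeroOp}$ coincides with $qwp\llbracket S \rrbracket$ and is therefore bounded linear; this simultaneously yields the representation $\qwlp{S}{P} = \qwp{S}{P} + \qwlp{S}{\zeroOp}$ that I reuse below.

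Subunitality and monotonicity then reduce to pointwise trace inequalities. Setting $P = \identityOp$ gives $tr(\qwlp{S}{\identityOp}\rho) = tr(\rho) - \semanticsErr{S}(\rho,0) \le tr(\rho) = tr(\identityOp\,\rho)$, using $\semanticsErr{S}(\rho,0) \ge 0$ (Proposition~\ref{pro:allClaims}), and the Loewner characterization gives $\qwlp{S}{\identityOp} \sqsubseteq \identityOp$; for monotonicity, $P \sqsubseteq Q$ yields $tr(P\sigma) \le tr(Q\sigma)$ with $\sigma = \semanticsRho{S}(\rho,0) \in \density$ (here $\semanticsRho{S}$ is trace-reducing by Lemma~\ref{lem:bounded}), and adding the common defect $d(\rho)$ preserves the inequality for every $\rho$.

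For order-continuity I would use the representation from the affinity step. By monotonicity the sequence $\{\qwlp{S}{P_i}\}$ is increasing, so its least upper bound exists in the cpo of predicates; then $\qwlp{S}{\bigvee_i P_i} = \qwp{S}{\bigvee_i P_i} + \qwlp{S}{\zeroOp} = \bigvee_i \qwp{S}{P_i} + \qwlp{S}{\zeroOp} = \bigvee_i \qwlp{S}{P_i}$, invoking order-continuity of $qwp\llbracket S \rrbracket$ (Proposition~\ref{prop:healthWP}) and the fact that adding the fixed operator $\qwlp{S}{\zeroOp}$ commutes with the supremum (since $X \mapsto X + C$ is an order-isomorphism). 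Alternatively this can be checked directly by pairing with $\rho$ and pushing the supremum through the trace using order-continuity of the trace w.r.t. predicates (recalled in the preliminaries). The main obstacle is precisely this last step: one must ensure the relevant suprema actually exist and live in the right space, that $\sigma = \semanticsRho{S}(\rho,0)$ is a genuine partial density operator so that order-continuity of the trace applies, and that the constant defect may be moved across the supremum; once these are in place, all four identities follow by the order-determining property of the trace pairing.
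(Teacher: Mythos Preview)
Your proposal is correct and follows essentially the same line as the paper: both split the defining trace identity of Lemma~\ref{lem:wlpexistence} into a $P$-dependent part and a constant defect, identify $f(P)=\qwlp{S}{P}-\qwlp{S}{\zeroOp}$ with $\qwp{S}{P}$, and then read off subunitality, monotonicity, and order-continuity from trace pairings against $\rho\in\density$. The only cosmetic differences are that the paper reaches $f=qwp\llbracket S\rrbracket$ by quoting the explicit formula $\qwlp{S}{P}=\qwp{S}{P}-\qwp{S}{\identityOp}+\identityOp-g(1)$ established inside the proof of Lemma~\ref{lem:wlpexistence} rather than via the uniqueness clause of Lemma~\ref{lem:schroedinger}, and for order-continuity the paper carries out only your ``alternative'' direct trace computation rather than routing through Proposition~\ref{prop:healthWP}.
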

For our denotational semantics $\semantics{S}$, we can also give an explicit representation of $qwp\llbracket S \rrbracket$:
\begin{proposition}
    \label{prop:wpDef}
    Let $P \in \predicate$:
    \begin{itemize}
        \item $\qwp{\skipbf}{P}=P$
        \item $\qwp{\qzero}{P}=
        \begin{cases}
            \ket{0}_q\bra{0}P\ket{0} \bra{0}_q + \ket{1}\bra{0}_q P \ket{0} \bra{1}_q &
         ,\text{if }type(q)=\boolType \\
            \sum_{n\in \mathbb{Z}} \ket{n} \bra{0}_q P \ket{0}\bra{n}_q&
        ,\text{if }type(q)=\intType
        \end{cases}$
        \item $\qwp{\Uq}{P}= U^\dagger P U$
        \item $\qwp{\observe}{P} = O^\dagger P O$
        \item $\qwp{\concat}{P}=\qwp{S_1}{\qwp{S_2}{P}}$
        \item $\qwp{\measure}{P}\rho  = \sum_m M_m^\dagger (\qwp{S_m}{P}) M_m$
        \item $\qwp{\while'}{P} = \bigvee_{n=0}^\infty P_n$ with
        \begin{align*}
            P_0  = \zeroOp, &&
            P_{n+1} = [M_0^\dagger P  M_0] + [M_1^\dagger(\qwp{S'}{P_n}) M_1]
        \end{align*}
        and $\bigvee_{n=0}^\infty$ denoting the least upper bound w.r.t. $\sqsubseteq$.
    \end{itemize}
\end{proposition}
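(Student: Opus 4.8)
The plan is to exploit the unique characterization of $qwp$ from Lemma~\ref{lem:schroedinger}: $\qwp{S}{P}$ is the only bounded linear subunital map on $\predicate$ with $tr(\qwp{S}{P}\rho) = tr(P\,\semanticsRho{S}(\rho,0))$ for all $\rho \in \density$. Combined with the fact that the Loewner order is antisymmetric and characterized by traces (so $tr(A\rho)=tr(B\rho)$ for every $\rho \in \density$ forces $A=B$), it suffices, for each statement $S$, to check that the proposed operator $X$ satisfies $tr(X\rho) = tr(P\,\semanticsRho{S}(\rho,0))$ for all $\rho$; then $X = \qwp{S}{P}$, since $\qwp{S}{P}$ is already known to exist. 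I would proceed by a case distinction over the program constructs, treating the while loop last since it relies on the other rules.

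For $\skipbf$, $\Uq$ and $\observe$ the verification is a one-line trace manipulation using cyclicity of the trace and the corresponding clause of the denotational semantics; e.g. for observe, $tr(O^\dagger P O\,\rho) = tr(P\, O\rho O^\dagger) = tr(P\,\semanticsRho{\observe}(\rho))$. For the initialization $\qzero$ and for $\measure$ the same computation applies term-by-term, but now I must push the trace through an infinite sum. I would justify this interchange using linearity of the trace together with its order-continuity on partial density operators (and continuity w.r.t.\ predicates), established in Section~\ref{sec:preliminaries}, since the relevant families are positive and summable. For $\concat$ I would apply the defining trace identity twice: $tr(P\,\semanticsRho{S_2}(\semanticsRho{S_1}(\rho))) = tr(\qwp{S_2}{P}\,\semanticsRho{S_1}(\rho)) = tr(\qwp{S_1}{\qwp{S_2}{P}}\,\rho)$, using $\semanticsRho{\concat}(\rho) = \semanticsRho{S_2}(\semanticsRho{S_1}(\rho))$ (which follows since the first component of $\semantics{S}$ is independent of its second input, Proposition~\ref{pro:allClaims}) and the fact that $\semanticsRho{S_1}(\rho) \in \density$ because $\semanticsRho{S_1}$ is trace-reducing (Lemma~\ref{lem:bounded}).

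For the while loop $W \equiv \while$, I would first run an inner induction on the unfolding index $n$ to show $\qwp{W^n}{P} = P_n$, where $W^n \equiv (\while)^n$. The base case uses $\semanticsRho{\Omega}(\rho) = \zeroOp$, giving $\qwp{\Omega}{P} = \zeroOp = P_0$. For the step, $W^{n+1} \equiv \ifstatement{S; W^n}{\skipbf}$ is a measurement statement, so applying the already-proven rules for $\measure$, $\concat$ and $\skipbf$ yields $\qwp{W^{n+1}}{P} = M_0^\dagger P M_0 + M_1^\dagger\,\qwp{S}{\qwp{W^n}{P}}\,M_1 = M_0^\dagger P M_0 + M_1^\dagger\,\qwp{S}{P_n}\,M_1 = P_{n+1}$. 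Along the way I would check, using monotonicity of $qwp$ (Proposition~\ref{prop:healthWP}) and monotonicity of conjugation and addition w.r.t.\ $\sqsubseteq$, that $\{P_n\}_n$ is $\sqsubseteq$-increasing, so that its least upper bound exists in the predicate $\omega$-cpo. Finally I would take the limit: since $\semanticsRho{W}(\rho) = \bigvee_n \semanticsRho{W^n}(\rho)$ (the lub in $\densityNumberPairs$ is entrywise), order-continuity of the trace gives $tr(P\,\semanticsRho{W}(\rho)) = \bigvee_n tr(P\,\semanticsRho{W^n}(\rho)) = \bigvee_n tr(P_n\rho) = tr((\bigvee_n P_n)\rho)$, where the last equality is continuity of the trace for increasing sequences of predicates. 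The trace-determinacy argument then yields $\qwp{W}{P} = \bigvee_n P_n$.

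The main obstacle I anticipate is the while case, specifically the two limit interchanges together with the existence of the supremum: one must pass the trace through the supremum defining the loop semantics and then through the supremum defining $\bigvee_n P_n$, each justified by the order-continuity results of Section~\ref{sec:preliminaries}, and one must verify that $\{P_n\}_n$ really is $\sqsubseteq$-increasing so that the lub exists as a predicate. The infinite-sum interchanges arising in the integer case of $\qzero$ and in the (possibly infinite) measurement $\measure$ are a milder instance of the same convergence bookkeeping.
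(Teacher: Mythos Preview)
Your proposal is correct and follows essentially the same route as the paper: verify for each construct that the proposed operator $X$ satisfies the trace identity $tr(X\rho)=tr(P\,\semanticsRho{S}(\rho,0))$ for all $\rho\in\density$, and then invoke the uniqueness part of Lemma~\ref{lem:schroedinger}. The paper explicitly remarks that no structural induction is needed thanks to that lemma, which is exactly how you organize the argument. Your treatment of the while loop (inner induction $\qwp{W^n}{P}=P_n$, then pass to the supremum via order-continuity of the trace) is more detailed than the paper, which simply defers these cases to \cite{floydHoareLogic}; the only substantive addition in the paper's appendix beyond what you outline is that it justifies SOT-convergence of the infinite sums in the measurement and integer-initialization cases by bounding the operator norm of the finite partial sums by $\norm{\identityOp}=1$, rather than appealing directly to order-continuity as you do, but either justification works.
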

\begin{proof}
    We prove $tr(\qwp{S}{P}\rho) = tr(P\semanticsRho{S}(\rho,0))$, which then, together with Lemma~\ref{lem:schroedinger} implies that it is indeed the weakest precondition.
\end{proof}
In this and the following proposition we mean convergence of sums with respect to the SOT, more details can be found in \cite{heisenbergdualityUnruh}.
For weakest liberal preconditions the explicit representation looks quite similar:
\begin{proposition}
    \label{prop:wlpDef}
    Let $P \in \predicate$. For most cases, $\qwlp{S}{P}$ is defined analogous to $\qwp{S}{P}$ (replacing every occurrence of $qwp$ by $qwlp$). The only significant difference is the while-loop:
    $\qwlp{\while'}{P} = \bigwedge_{n=0}^\infty P_n$ with
        \begin{align*}
            P_0  =\identityOp, &&
            P_{n+1} = [M_0^\dagger P  M_0] + [M_1^\dagger(\qwlp{S'}{P_n}) M_1]
        \end{align*}
        and $\bigwedge_{n=0}^\infty$ denoting the greatest lower bound w.r.t. $\sqsubseteq$.
\end{proposition}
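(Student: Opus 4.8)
The plan is to follow the template of the proof of Proposition~\ref{prop:wpDef}: instead of reasoning about suprema of valid Hoare triples directly, I would verify that the transformer given by the displayed rules satisfies the characterising trace identity of Lemma~\ref{lem:wlpexistence},
\[
tr(\qwlp{S}{P}\rho) = tr(P\semanticsRho{S}(\rho,0)) + tr(\rho) - tr(\semanticsRho{S}(\rho,0)) - \semanticsErr{S}(\rho,0),
\]
for all $\rho \in \density$ and $P \in \predicate$. Because Lemma~\ref{lem:wlpexistence} also asserts that $\qwlp{S}{P}$ is the \emph{unique} subunital function with this property, establishing the identity for the candidate transformer proves the proposition. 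The verification is a structural induction on $S$.

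For the non-recursive statements this is a direct computation, but with a twist relative to $qwp$: the correction summands $tr(\rho) - tr(\semanticsRho{S}(\rho,0))$ and $\semanticsErr{S}(\rho,0)$ must be reconciled with the claimed closed forms. For every observe-free atomic statement one has $\semanticsErr{S}(\rho,0) = 0$ and $\semanticsRho{S}$ trace preserving on the first component, so the correction vanishes and the identity collapses to the one already checked for $qwp$; this yields the same forms for $\skipbf$, initialisation and unitaries. The observe statement is the only atomic case with a nonzero error term: substituting $\semanticsRho{\observe}(\rho,0) = O\rho O^\dagger$ and $\semanticsErr{\observe}(\rho,0) = tr(\rho) - tr(O\rho O^\dagger)$ makes the corrections cancel exactly and leaves $tr(O^\dagger P O\,\rho)$, hence $\qwlp{\observe}{P} = O^\dagger P O$. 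For measurement, completeness $\sum_m M_m^\dagger M_m = \identityOp$ lets me rewrite $tr(\rho) = \sum_m tr(M_m \rho M_m^\dagger)$ and distribute the whole right-hand side over the outcomes, so each summand becomes precisely the inductive identity for $S_m$ at the input $M_m \rho M_m^\dagger$; this recovers $\sum_m M_m^\dagger \qwlp{S_m}{P} M_m$. For sequential composition the new ingredient is the additive error splitting of Lemma~\ref{lem:bounded}(2): writing $\sigma = \semanticsRho{S_1}(\rho,0)$ and applying the hypothesis first to $S_2$ at $\sigma$ and then to $S_1$ with postcondition $\qwlp{S_2}{P}$, the non-termination and error terms recombine into $\qwlp{\concat}{P} = \qwlp{S_1}{\qwlp{S_2}{P}}$.

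The while loop is the only place where the statement genuinely departs from $qwp$, and I expect it to be the main obstacle. First I would identify the iterates with liberal preconditions of the syntactic unfoldings, $P_n = \qwlp{(\while)^n}{P}$: the base case $P_0 = \identityOp$ follows because $\Omega$ has $\semanticsRho{\Omega}(\rho,0) = \zeroOp$ and $\semanticsErr{\Omega}(\rho,0) = 0$, so its defining identity forces $tr(\qwlp{\Omega}{P}\rho) = tr(\rho)$; the recursion for $P_{n+1}$ is just the measurement, skip and sequencing rules applied to $(\while)^{n+1} \equiv \ifstatement{S;(\while)^n}{\skipbf}$. Here the duality with $qwp$ surfaces: subunitality gives $P_1 \sqsubseteq \identityOp = P_0$, so by monotonicity $\{P_n\}$ is \emph{decreasing} and the greatest lower bound is taken from $\identityOp$, whereas $qwp$ produces an increasing chain and a supremum from $\zeroOp$. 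The crux is then a limit exchange. The denotational unfoldings $\semantics{(\while)^n}(\rho,0)$ form an \emph{increasing} chain with supremum $\semantics{\while}(\rho,0)$, so order-continuity of the trace sends $tr(P\,\semanticsRho{(\while)^n}(\rho,0))$, $tr(\semanticsRho{(\while)^n}(\rho,0))$ and $\semanticsErr{(\while)^n}(\rho,0)$ to their limits on the right-hand side, while the dual order-continuity for the \emph{decreasing} predicate sequence (noted just after the Loewner order) gives $\lim_n tr(P_n \rho) = tr((\bigwedge_n P_n)\rho)$ on the left. Since the inductive identity holds at every finite $n$ and both sides converge, their limits coincide, which is exactly the defining identity for $\bigwedge_n P_n$; uniqueness then identifies it with $\qwlp{\while}{P}$. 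The delicate points are that continuity must be invoked in one direction for the semantics (increasing) and in the other for the predicates (decreasing), and that the subtracted non-termination term converges monotonically, so that no $\infty - \infty$ indeterminacy can arise.
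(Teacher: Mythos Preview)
Your proposal is correct and follows essentially the same route as the paper: verify the characterising trace identity of Lemma~\ref{lem:wlpexistence} for each construct and invoke uniqueness. The only cosmetic difference is in the while case, where you identify $P_n$ with $\qwlp{(\while)^n}{P}$ via the already-established compositional rules and obtain $P_n \sqsupseteq P_{n+1}$ from subunitality and monotonicity of $qwlp\llbracket S'\rrbracket$, whereas the paper re-derives the trace identity for $P_n$ by an explicit induction on $n$ and then reads off the decreasingness of $\{P_n\}$ directly from that identity.
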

\begin{proof} This proof is similar to Proposition~\ref{prop:wpDef} except that we show that $tr(\qwlp{S}{P}\rho) = tr(P\semanticsRho{S}(\rho,0)) + tr(\rho)-tr(\semanticsRho{S}(\rho,0))- \semanticsErr{S}(\rho,0)$ holds which then implies together with Lemma~\ref{lem:wlpexistence} that it is indeed the weakest liberal precondition.
\end{proof}
Both explicit representations above are conservative extensions of the weakest (liberal) precondition calculus in \cite{floydHoareLogic}.

For the following explanations, assume $tr(\rho)=1$, otherwise the probabilities are not normalized. To understand those definitions, consider $tr(\qwp{S}{P} \rho)$.  Due to the duality from Lemma~\ref{lem:schroedinger}, $tr(\qwp{S}{P}\rho) = tr(P\semanticsRho{S}(\rho,0))$, so it is the probability that the result of running program $S$ (without violating any observations) on the initial state $\rho$ satisfies predicate $P$. Similarly, $tr(\qwlp{S}{P}\rho)$ adds the probability of non-termination too. This is equivalent to the standard interpretation of weakest (liberal) preconditions in \cite{floydHoareLogic}.

For programs with observations $tr(\qwlp{S}{P}\rho)$ does not include runs that violate an observation. Thus, $tr(\qwlp{S}{\identityOp} \rho)$ gives the probability that no observation is violated during the run of $S$ on input state $\rho$ (while for programs without observations and in \cite{floydHoareLogic}, this will always be $tr(\rho)=1$). The probability that a program state $\rho$ will satisfy the postcondition $P$ after executing program $S$ while not violating any observation is then a conditional probability. To handle this case, we introduce conditional weakest preconditions inspired by \cite{conditioningProb} in the next section.

\subsection{Conditional Weakest Preconditions}
In the following, we consider pairs of predicates. Addition and multiplication are interpreted pointwise, i.e., $(P,Q) + (P',Q') = (P+P',Q+Q')$ and $M \multdot (P,Q) = (MP,MQ)$ resp. $(P,Q) \multdot M = (PM,QM)$ where $M$ can be a constant or an operator. Multiplication binds stronger than addition.

We define a natural ordering on pairs of predicates that is used for example to express healthiness conditions:
\begin{definition}
We define $\unlhd$ on $\predicate^2$ as follows: $
    (P,Q) \unlhd (P',Q') \Leftrightarrow P \sqsubseteq P' \land Q' \sqsubseteq Q$
where $\sqsubseteq$ is the Loewner partial order. The least element is $(\zeroOp,\identityOp)$ and the greatest element $(\identityOp,\zeroOp)$.
The least upper bound of an increasing chain $\{(P_i,Q_i)\}_{i\in \mathbb{N}}$ for $(P_i,Q_i)\in \predicate^2$ is given pointwise by
$\bigvee_{i=0}^\infty (P_i,Q_i) = (\bigvee_{i=0}^\infty P_i ,\bigwedge_{i=0}^\infty Q_i)$.
\end{definition}
\begin{lemma}
    $\unlhd$ is an $\omega$-cpo on $\predicate^2$.
\end{lemma}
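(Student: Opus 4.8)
The plan is to verify the two defining ingredients of an $\omega$-cpo separately: first that $\unlhd$ is a partial order on $\predicate^2$, and then that every $\omega$-chain that is increasing with respect to $\unlhd$ admits a least upper bound, which I expect to be exactly the pointwise value $(\bigvee_{i} P_i, \bigwedge_{i} Q_i)$ already announced in the definition. Throughout I would lean on the fact, recalled in Section~\ref{sec:preliminaries}, that $\sqsubseteq$ is itself an $\omega$-cpo on $\predicate$.

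For the partial-order part I would simply observe that $\unlhd$ is the product of the order $\sqsubseteq$ on the first coordinate with the \emph{reversed} order $\sqsupseteq$ on the second coordinate. Since the reverse of a partial order is again a partial order, and a finite product of partial orders is a partial order, reflexivity, antisymmetry and transitivity of $\unlhd$ all follow coordinatewise from the corresponding properties of $\sqsubseteq$ on $\predicate$. For instance, antisymmetry reduces to applying antisymmetry of $\sqsubseteq$ independently in the two coordinates.

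For $\omega$-completeness I would take an increasing chain $\{(P_i,Q_i)\}_{i\in\mathbb{N}}$ and read off from the definition of $\unlhd$ that $\{P_i\}$ is increasing and $\{Q_i\}$ is decreasing with respect to $\sqsubseteq$. The supremum $\bigvee_{i} P_i$ exists because $\sqsubseteq$ is an $\omega$-cpo on predicates, and it lies in $\predicate$ since $\zeroOp \sqsubseteq P_0 \sqsubseteq \bigvee_{i} P_i$ while $\identityOp$ is an upper bound of the chain. For the decreasing chain I would use complementation: $\{\identityOp - Q_i\}$ is an increasing chain in $\predicate$, so its supremum exists, and setting $\bigwedge_{i} Q_i := \identityOp - \bigvee_{i}(\identityOp - Q_i)$ yields the greatest lower bound of $\{Q_i\}$ in $\predicate$ (matching the existence of infima of decreasing predicate chains noted in the preliminaries). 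It then remains to check that $(\bigvee_{i} P_i, \bigwedge_{i} Q_i)$ is the $\unlhd$-least upper bound: it is an upper bound because $P_j \sqsubseteq \bigvee_{i} P_i$ and $\bigwedge_{i} Q_i \sqsubseteq Q_j$ for every $j$, and it is least because any $\unlhd$-upper bound $(P',Q')$ satisfies $P_i \sqsubseteq P'$ and $Q' \sqsubseteq Q_i$ for all $i$, whence $\bigvee_{i} P_i \sqsubseteq P'$ and $Q' \sqsubseteq \bigwedge_{i} Q_i$ by the universal properties of supremum and infimum.

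The only genuinely non-routine point is the existence of the greatest lower bound $\bigwedge_{i} Q_i$ of the decreasing chain inside $\predicate$; the complementation trick reduces this to the existence of suprema of increasing predicate chains, which is already available. Everything else is a direct unfolding of the definition of $\unlhd$ together with the universal properties of suprema and infima, so I expect the write-up to be short.
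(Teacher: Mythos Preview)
Your proposal is correct and follows essentially the same approach as the paper: both reduce the existence of the second-coordinate infimum to the $\omega$-cpo property of $\sqsubseteq$ via the complementation trick $\bigwedge_i Q_i = \identityOp - \bigvee_i(\identityOp - Q_i)$. You are slightly more thorough in that you also spell out the partial-order axioms and verify the least-upper-bound property explicitly, whereas the paper only argues existence of the two pointwise extrema.
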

\begin{proof}
    We have to show that $\bigvee_{i=0}^\infty (P_i,Q_i)$ exists for any increasing chain $\{(P_i,Q_i)\}_{i\in \mathbb{N}}$.
    If $\{(P_i,Q_i)\}_{i\in \mathbb{N}}$ is increasing with respect to $\unlhd$, then $\{P_i\}_{i\in \mathbb{N}}$ is increasing and $\{Q_i\}_{i\in \mathbb{N}}$ decreasing with respect to $\sqsubseteq$. The existence of $\bigvee_{i=0}^\infty P_i $ follows directly from $\sqsubseteq$ being an $\omega$-cpo on $\predicate$. For $\bigwedge_{i=0}^\infty Q_i$, we use the same trick as in the proof of Proposition~\ref{prop:wlpDef}: $\{\identityOp - Q_i\}_{i\in \mathbb{N}}$ is an increasing chain of predicates and thus $ \bigwedge_{i=0}^\infty Q_i = \identityOp - \bigvee_{i=0}^\infty (\identityOp - Q_i)$ exists too. That means both $\bigvee_{i=0}^\infty P_i $ and $\bigwedge_{i=0}^\infty Q_i $ exist and thus also $\bigvee_{i=0}^\infty (P_i,Q_i)$.
\end{proof}
Combining weakest preconditions and liberal weakest preconditions, we can define \emph{conditional weakest preconditions} similar to the probabilistic case \cite{conditioningProb}:
\begin{definition}
    \label{def:qcwp}
    The \emph{conditional weakest precondition} transformer is a mapping $qcwp\llbracket S \rrbracket: \predicate^2\to \predicate^2$ defined as $
    \qcwp{S}{P,Q} := (\qwp{S}{P}, \qwlp{S}{Q})$.
\end{definition}
Similar to the weakest precondition calculus, we can also give an explicit representation which can be found in the Appendix~\ref{app:wp}, Lemma~\ref{lem:explicitCWP}.

Again, assume $tr(\rho)=1$ in the following, otherwise the probabilities are not normalized.
Note that $tr(\qwlp{S}{\identityOp}\rho)$ is the probability that no observation is violated and $tr(\qwp{S}{P}\rho)$ the probability that $P$ is satisfied after $S$ has been executed on $\rho$ (see above). We are interested in the conditional probability of establishing the postcondition given that no observation is violated, namely $\frac{tr(\qwp{S}{P}\rho)}{tr(\qwlp{S}{\identityOp}\rho)}$. Notice that for $\qcwp{S}{(P,\identityOp)}=(A,B)$, this is simply $\frac{tr(A\rho)}{tr(B\rho)}$. That means we can immediately read of this conditional probability from $qcwp\llbracket S \rrbracket$. Formally, we use \begin{align*}
    \fracTr(A\rho,B\rho) := \begin{cases}\frac{tr(A\rho)}{tr(B\rho)}, &\text{ if } tr(B\rho)\neq 0\\
        \text{undefined} ,&\text{ otherwise}.
    \end{cases}
\end{align*}

We now establish some properties of conditional weakest preconditions:
\begin{proposition} \label{prop:healthcwp}
    For every program $S$, the function $qcwp\llbracket S \rrbracket:\predicate^2 \to \predicate^2$ satisfies:
    \begin{itemize}
        \item Has a linear interpretation: for all $\rho \in \density, a,b \in \mathbb{R}_{\geq 0}$ and $P,P'\in \predicate$ with $aP+bP' \in \predicate$
        \begin{align*}
            \fracTr(\qcwp{S}{aP+bP',\identityOp} \multdot \rho) = a \cdot \fracTr(\qcwp{S}{P,\identityOp}\multdot \rho) + b\cdot\fracTr(\qcwp{S}{P',\identityOp} \multdot\rho)
        \end{align*}
        \item Affinity: The function $\qcwp{S}{P,Q} - \qcwp{S}{\zeroOp,\zeroOp}$ is linear. Note that this implies convex-linearity and sublinearity.
        \item Monotonicity: $(P,P')\unlhd (Q,Q')$ implies $\qcwp{S}{P,P'}\unlhd \qcwp{S}{Q,Q'}$
        \item Continuity: $\qcwp{S}{\bigvee_{i=0}^\infty (P_i,Q_i)} = \bigvee_{i=0}^\infty \qcwp{S}{P_i,Q_i}$ if $\bigvee_{i=0}^\infty (P_i,Q_i)$ exists
    \end{itemize}
\end{proposition}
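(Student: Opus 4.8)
The plan is to reduce every claim to the already-established healthiness properties of $qwp\llbracket S\rrbracket$ (Proposition~\ref{prop:healthWP}) and $qwlp\llbracket S\rrbracket$ (Proposition~\ref{prop:healthWLP}), exploiting that $qcwp$ is defined componentwise, $\qcwp{S}{P,Q}=(\qwp{S}{P},\qwlp{S}{Q})$, and that both the ordering $\unlhd$ and the arithmetic on $\predicate^2$ are pointwise. Thus each property of $qcwp$ splits into a statement about the first component (governed by $qwp$) and a statement about the second (governed by $qwlp$), which I would dispatch in turn.

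For the linear interpretation, I would observe that in $\fracTr(\qcwp{S}{aP+bP',\identityOp}\multdot\rho)$ the second component is always $\qwlp{S}{\identityOp}\rho$, independent of $P,P'$, so all three expressions in the identity share the denominator $tr(\qwlp{S}{\identityOp}\rho)$. Bounded linearity of $qwp$ (Proposition~\ref{prop:healthWP}) then gives $\qwp{S}{aP+bP'}=a\,\qwp{S}{P}+b\,\qwp{S}{P'}$, and linearity of the trace distributes this over the numerator; dividing by the common denominator yields the claim for every $\rho$ with $tr(\qwlp{S}{\identityOp}\rho)\neq 0$ (both sides being undefined otherwise). Affinity is equally direct: since $qwp$ is linear, $\qwp{S}{\zeroOp}=\zeroOp$, hence $\qcwp{S}{P,Q}-\qcwp{S}{\zeroOp,\zeroOp}=(\qwp{S}{P},\,\qwlp{S}{Q}-\qwlp{S}{\zeroOp})$; the first component is linear in $P$ by Proposition~\ref{prop:healthWP} and the second is linear in $Q$ by the affinity clause of Proposition~\ref{prop:healthWLP}, and a pair of linear maps is linear on the product space. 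Monotonicity unfolds $\unlhd$: $(P,P')\unlhd(Q,Q')$ means $P\sqsubseteq Q$ and $Q'\sqsubseteq P'$; monotonicity of $qwp$ turns the first into $\qwp{S}{P}\sqsubseteq\qwp{S}{Q}$, and monotonicity of $qwlp$ turns the second into $\qwlp{S}{Q'}\sqsubseteq\qwlp{S}{P'}$, which together are exactly $\qcwp{S}{P,P'}\unlhd\qcwp{S}{Q,Q'}$.

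The one step that needs genuine work is continuity. Using the pointwise least upper bound $\bigvee_i(P_i,Q_i)=(\bigvee_i P_i,\bigwedge_i Q_i)$, the first component is handled immediately by the order-continuity of $qwp$ from Proposition~\ref{prop:healthWP}, giving $\qwp{S}{\bigvee_i P_i}=\bigvee_i\qwp{S}{P_i}$. The second component requires $\qwlp{S}{\bigwedge_i Q_i}=\bigwedge_i\qwlp{S}{Q_i}$, i.e.\ continuity of $qwlp$ along the \emph{decreasing} chain $\{Q_i\}$, whereas Proposition~\ref{prop:healthWLP} only states order-continuity for increasing chains. I would bridge this exactly as in the proof that $\unlhd$ is an $\omega$-cpo: the chain $\{\identityOp-Q_i\}$ is increasing, so order-continuity of $qwlp$ applies to it, and then the affinity of $qwlp$ (write $\qwlp{S}{\cdot}=g(\cdot)+\qwlp{S}{\zeroOp}$ with $g$ linear) lets me push the complementation $\identityOp-(\cdot)$ through, converting the lub statement into the desired glb statement. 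Combining the two components and comparing with the pointwise lub $\bigvee_i\qcwp{S}{P_i,Q_i}=(\bigvee_i\qwp{S}{P_i},\bigwedge_i\qwlp{S}{Q_i})$ finishes the argument; existence of all the bounds involved follows from the respective continuity properties together with the $\omega$-cpo structure.

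The main obstacle is therefore isolated in the continuity proof: transporting the order-continuity of $qwlp$ from least upper bounds (increasing chains) to greatest lower bounds (decreasing chains) via the affine complementation trick. All remaining steps are formal consequences of linearity, trace-linearity, monotonicity, and the componentwise definitions of $qcwp$ and of $\unlhd$.
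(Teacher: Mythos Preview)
Your proposal is correct, and for the first three items (linear interpretation, affinity, monotonicity) it is essentially identical to the paper's argument: both unfold the componentwise definition of $qcwp$ and appeal to the corresponding clauses of Propositions~\ref{prop:healthWP} and~\ref{prop:healthWLP}.

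For continuity your route diverges from the paper's. The paper does not reduce the decreasing-chain statement for $qwlp$ to the increasing-chain one via complementation and affinity; instead it goes back to the semantic characterisation from Lemma~\ref{lem:wlpexistence}, writing $tr(\qwlp{S}{Q_i}\rho)=tr(Q_i\,\semanticsRho{S}(\rho,0))+tr(\rho)-tr(\semanticsRho{S}(\rho,0))-\semanticsErr{S}(\rho,0)$, and then invokes the trace-continuity for decreasing sequences of predicates stated in the preliminaries to pull $\bigwedge_i$ through directly. Your argument stays entirely at the level of the transformer healthiness properties: you turn $\{Q_i\}$ into the increasing chain $\{\identityOp-Q_i\}$, apply the order-continuity clause of Proposition~\ref{prop:healthWLP}, and use affinity (so that $\qwlp{S}{\identityOp-Q}=\qwlp{S}{\identityOp}+\qwlp{S}{\zeroOp}-\qwlp{S}{Q}$) to translate back. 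This is a clean alternative that avoids re-opening the semantic duality; the paper's version is shorter but relies on the separately established infimum-continuity of $P\mapsto tr(P\rho)$.
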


\subsection{Conditional Weakest Liberal Preconditions}
Similar to the conditional weakest precondition, we can also define the same with weakest liberal preconditions for partial correctness:
\begin{definition}
    The \emph{conditional weakest liberal precondition} $qcwlp:\predicate^2\to \predicate^2$ is defined as $\qcwlp{S}{P,Q}:= (\qwlp{S}{P}, \qwlp{S}{Q})$ for each program $S$ and predicates $P,Q \in \predicate$.
\end{definition}
\begin{definition} We define $\dot \unlhd$ on $\predicate^2$ as follows $
        (P,Q) \phantom{.}\dot \unlhd \phantom{.} (P',Q') \Leftrightarrow P \sqsubseteq P' \land Q \sqsubseteq Q'$
    where $\sqsubseteq$ is the Loewner partial order. The least element is $(\zeroOp,\zeroOp)$ and the greatest element $(\identityOp,\identityOp)$.
    The least upper bound of an increasing chain $\{(P_i,Q_i)\}_{i\in \mathbb{N}}$ for $(P_i,Q_i)\in \predicate^2$ is given pointwise by $
        \bigvee_{i=0}^\infty (P_i,Q_i) = (\bigvee_{i=0}^\infty P_i ,\bigvee_{i=0}^\infty Q_i)$.
\end{definition}
Note that in contrast to $\unlhd$, both components are ordered in the same direction. Here it follows directly that $\dot \unlhd$ is an $\omega$-cpo on $\predicate^2$.

Similar as for $qcwp$, we can now read off the conditional satisfaction of $P$ when we want non-termination to count as satisfaction:
$\fracTr(\qcwlp{S}{P,\identityOp}\multdot \rho) = \frac{tr(\qwlp{S}{P} \rho)}{tr(\qwlp{S}{\identityOp}\rho)}$ which is equal to dividing the probability to satisfy $P$ after execution (including non-termination) by the probability to not violate an observation\footnote[1]{So far, we considered conditional weakest preconditions for total and partial correctness, i.e., $\qcwp{S}{P,\identityOp} = (\qwp{S}{P}, \qwlp{S}{\identityOp})$ and $\qcwlp{S}{P,\identityOp} = (\qwlp{S}{P}, \qwlp{S}{\identityOp})$.
In \cite[Sect. 8.3]{benniDiss} it is argued why other combinations such as $(\qwp{S}{P},\qwp{S}{\identityOp})$ and $(\qwlp{S}{P},\qwp{S}{\identityOp})$ only make sense if a program is almost-surely terminating, i.e., without non-termination. Their arguments apply without change in our setting, so we do not consider these combinations either.}.

We can also conclude some properties about conditional weakest liberal preconditions:
\begin{proposition} \label{prop:healthcwlp}
    For every program $S$, the function $qcwlp\llbracket S \rrbracket:\predicate^2 \to \predicate^2$ satisfies:
    \begin{itemize}
        \item Affinity: The function $\qcwlp{S}{P,Q} - \qcwlp{S}{\zeroOp,\zeroOp}$ is linear. Note that this implies convex-linearity and sublinearity.
        \item Monotonicity: $(P,Q) \phantom{.}\dot \unlhd \phantom{.} (P',Q')$ implies $\qcwlp{S}{P,Q}\phantom{.}\dot \unlhd \phantom{.}\qcwlp{S}{P',Q'}$
        \item Continuity: $\qcwlp{S}{\bigvee_{i=0}^\infty (P_i,Q_i)} = \bigvee_{i=0}^\infty \qcwlp{S}{P_i,Q_i}$ if $\bigvee_{i=0}^\infty (P_i,Q_i)$
    \end{itemize}
\end{proposition}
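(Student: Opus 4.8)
The plan is to exploit that $qcwlp$ is nothing but the diagonal application of the single-predicate transformer $qwlp$: by definition $\qcwlp{S}{P,Q} = (\qwlp{S}{P},\qwlp{S}{Q})$, so the first output component depends only on $P$ and the second only on $Q$, and each is governed by the healthiness properties of $qwlp$ established in Proposition~\ref{prop:healthWLP}. The crucial structural observation — which makes this case cleaner than the $qcwp$ case — is that $\dot\unlhd$ orders \emph{both} components of a pair in the same direction as $\sqsubseteq$ (unlike $\unlhd$, whose second component is reversed). Consequently each of the three properties reduces, componentwise, to the matching property of $qwlp$, and no mismatch of order directions has to be reconciled.

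For affinity, I would compute $\qcwlp{S}{P,Q} - \qcwlp{S}{\zeroOp,\zeroOp} = (\qwlp{S}{P}-\qwlp{S}{\zeroOp},\; \qwlp{S}{Q}-\qwlp{S}{\zeroOp})$, using that subtraction, addition and scalar multiplication on $\predicate^2$ are defined pointwise. The first entry is $f(P)$ for the map $f$ of Proposition~\ref{prop:healthWLP}, which is linear in $P$; the second entry is the same $f$ applied to $Q$, hence linear in $Q$. Since both operations on pairs act coordinatewise, linearity in each coordinate yields linearity of the pair-valued map on $\predicate^2$ (and thus, as noted, convex-linearity and sublinearity).

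For monotonicity, I would unfold $(P,Q)\dot\unlhd(P',Q')$ into $P\sqsubseteq P'$ and $Q\sqsubseteq Q'$, apply the monotonicity of $qwlp$ to each to obtain $\qwlp{S}{P}\sqsubseteq\qwlp{S}{P'}$ and $\qwlp{S}{Q}\sqsubseteq\qwlp{S}{Q'}$, and read this back as $\qcwlp{S}{P,Q}\dot\unlhd\qcwlp{S}{P',Q'}$ directly from the definition of $\dot\unlhd$. For continuity, the least upper bound of an increasing chain in $(\predicate^2,\dot\unlhd)$ is taken coordinatewise as $(\bigvee_i P_i,\bigvee_i Q_i)$, both suprema existing because a $\dot\unlhd$-increasing chain makes $\{P_i\}$ and $\{Q_i\}$ each $\sqsubseteq$-increasing and $\sqsubseteq$ is an $\omega$-cpo on $\predicate$. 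Applying the order-continuity of $qwlp$ to each coordinate gives $\qwlp{S}{\bigvee_i P_i}=\bigvee_i\qwlp{S}{P_i}$ and likewise for $Q$, and since the supremum on $\predicate^2$ is again coordinatewise this equals exactly $\bigvee_i\qcwlp{S}{P_i,Q_i}$.

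I expect no genuine obstacle here: the entire argument is a transfer of Proposition~\ref{prop:healthWLP} through the coordinatewise definitions of the pair operations and of $\dot\unlhd$. The only point demanding a little care is confirming that the supremum used in the continuity statement really is coordinatewise with \emph{both} coordinates taking lubs (so that it commutes with the coordinatewise application of $qwlp$); this is immediate from the definition of $\dot\unlhd$, but it is precisely the spot where the $qcwp$ proof had to work harder, because there the reversed second coordinate turns a supremum into an infimum and forces the use of the $\identityOp-Q_i$ trick.
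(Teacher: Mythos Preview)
Your proposal is correct and follows essentially the same approach as the paper: both reduce each of the three properties componentwise to the corresponding property of $qwlp$ from Proposition~\ref{prop:healthWLP}, using that $\qcwlp{S}{P,Q}=(\qwlp{S}{P},\qwlp{S}{Q})$ and that $\dot\unlhd$ and the supremum on $\predicate^2$ act coordinatewise with both components ordered the same way. Your additional remark contrasting this with the $qcwp$ case (where the reversed second coordinate forces an infimum) is accurate and matches exactly why the paper's proof here is shorter than that of Proposition~\ref{prop:healthcwp}.
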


\subsection{Observation-Free Programs}
For observation-free programs, our interpretations coincides with the satisfaction of weakest (liberal) preconditions from \cite{floydHoareLogic}:
\begin{lemma}
    For an observation-free program $S$, predicate $P\in \predicate$ and state $\rho \in \densityFull$:
    \begin{align*}
        \fracTr(qcw(l)p\llbracket S \rrbracket (P,\identityOp) \multdot \rho)= tr(qw(l)p\llbracket S \rrbracket (P)\rho)
    \end{align*}
\end{lemma}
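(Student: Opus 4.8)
The plan is to unfold both sides and reduce the whole statement to the single fact that the denominator appearing in $\fracTr$ equals $1$ for observation-free programs. Expanding the definitions of the conditional transformers gives $\qcwp{S}{P,\identityOp} = (\qwp{S}{P}, \qwlp{S}{\identityOp})$ and $\qcwlp{S}{P,\identityOp} = (\qwlp{S}{P}, \qwlp{S}{\identityOp})$, and applying the pair to $\rho$ and taking $\fracTr$ yields, in both the $qcwp$ and the $qcwlp$ case,
\[
\fracTr(qcw(l)p\llbracket S \rrbracket (P,\identityOp)\multdot \rho) = \frac{tr(qw(l)p\llbracket S \rrbracket (P)\rho)}{tr(\qwlp{S}{\identityOp}\rho)},
\]
provided the denominator is nonzero. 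Since the numerator is already exactly the right-hand side of the claim, it suffices to prove both that $tr(\qwlp{S}{\identityOp}\rho) \neq 0$ (so $\fracTr$ is defined) and that it equals $1$. Note that the argument is uniform in the $(l)$-variant: the denominator is the identical term $tr(\qwlp{S}{\identityOp}\rho)$ in both cases, and only the numerator differs, matching $tr(qw(l)p\llbracket S \rrbracket(P)\rho)$ on the right.

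First I would compute the denominator via Lemma~\ref{lem:wlpexistence}. Substituting $P = \identityOp$ into its characterizing identity, the two copies of $tr(\semanticsRho{S}(\rho,0))$ cancel, leaving
\[
tr(\qwlp{S}{\identityOp}\rho) = tr(\semanticsRho{S}(\rho,0)) + tr(\rho) - tr(\semanticsRho{S}(\rho,0)) - \semanticsErr{S}(\rho,0) = tr(\rho) - \semanticsErr{S}(\rho,0).
\]
Next I would invoke Proposition~\ref{prop:unconditionedSemantics}: because $S$ is observation-free, $\semantics{S}(\rho,0) = (\semanticsOriginal{S}(\rho),0)$, so the error component vanishes, i.e.\ $\semanticsErr{S}(\rho,0) = 0$. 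As $\rho \in \densityFull$ we have $tr(\rho) = 1$, whence $tr(\qwlp{S}{\identityOp}\rho) = 1$. This simultaneously establishes the definedness of $\fracTr$ and fixes the denominator to $1$, so substituting back gives $\fracTr(qcw(l)p\llbracket S \rrbracket (P,\identityOp)\multdot\rho) = tr(qw(l)p\llbracket S \rrbracket(P)\rho)$, as required.

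I do not expect a genuine obstacle here, as the computation is short and each ingredient is already available. The only points deserving care are bookkeeping rather than substance: that the characterizing identity of Lemma~\ref{lem:wlpexistence} is stated for $\rho \in \density$, which covers our case since $\densityFull \subseteq \density$; and that one should explicitly note the cancellation that collapses the $qwlp$-formula to $tr(\rho) - \semanticsErr{S}(\rho,0)$ before appealing to Proposition~\ref{prop:unconditionedSemantics}. Everything else is a direct unfolding of the definitions of $qcwp$, $qcwlp$, and $\fracTr$.
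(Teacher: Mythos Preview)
Your proposal is correct and follows essentially the same approach as the paper: both unfold the definition of $\fracTr$ and reduce the claim to showing that the denominator $tr(\qwlp{S}{\identityOp}\rho)$ equals $1$. The only minor difference is in how this last fact is justified: the paper invokes the operator-level identity $\qwlp{S}{\identityOp} = \identityOp$ for observation-free programs (citing \cite{floydHoareLogic}) and then uses $tr(\identityOp\rho)=tr(\rho)=1$, whereas you derive $tr(\qwlp{S}{\identityOp}\rho)=tr(\rho)-\semanticsErr{S}(\rho,0)=1$ directly from Lemma~\ref{lem:wlpexistence} and Proposition~\ref{prop:unconditionedSemantics}. Your route is slightly more self-contained since it avoids the external reference, but the underlying idea is identical.
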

\begin{proof}
    For every observation-free program $S$ is $\qwlp{S}{\identityOp} = \identityOp$ \cite{floydHoareLogic} and which means for $\rho \in \densityFull$ (and not for all $\rho \in \density$) $\fracTr(qcw(l)p\llbracket S \rrbracket (P,\identityOp) \multdot \rho)$ is equal to
    \begin{align*}
        \frac{tr(qw(l)p\llbracket S \rrbracket (P)\rho)}{tr(\qwlp{S}{\identityOp}\rho)} = \frac{tr(qw(l)p\llbracket S \rrbracket (P)\rho)}{tr(\rho)} = tr(qw(l)p\llbracket S \rrbracket (P)\rho).
    \end{align*}
\end{proof}

\subsection{Correspondence to Operational MC Semantics}
The aim of this section is to establish a correspondence between $\qcwp{S}{P,\identityOp}$ and the operational semantics of $S$. In order to reason about $P$ in terminal states of the Markov chain, we use rewards. First, we equip the Markov chain used for the operational semantics with a reward function with regard to a postcondition $P$:
\begin{definition}
    \label{def:opMC}
    For program $S$ and postcondition $P$, the Markov reward chain $\rewardMC{\rho}{P}{S}$ is the MC $\operationalMC{\rho}{S}$ extended with a function $r:\states \to \mathbb{R}_{\geq 0}$ such that $r(\config{\emptyProgram}{\rho'}) = tr(P\rho')$ and $r(s)=0$ for all other states $s \in \states$.
\end{definition}
The (liberal) reward of a path $\pi$ of $\rewardMC{\rho}{P}{S}$ is defined as
 $r(\pi) = \begin{cases}
    tr(P\rho') &,\text{ if } \config{\emptyProgram}{\rho'}\in \pi\\
    0 &, \text{ else}
\end{cases}$ and $lr(\pi)=r(\pi)$ expect if $\termConfig \not \in \pi$, then $lr(\pi)=1$.

The expected reward of $\eventually \termConfig$ is the expected value of $r(\pi)$ for all $\pi\in \eventually \termConfig$, i.e., $ER^{\rewardMC{\rho}{P}{S}} (\eventually \termConfig) = \sum_{\rho'} \prob{\rewardMC{\rho}{P}{S}}{\eventually \config{\emptyProgram}{\rho'}}\cdot tr(P\rho')$. The liberal version adds rewards of non-terminating paths, i.e., $LER^{\rewardMC{\rho}{P}{S}} (\eventually \termConfig) = ER^{\rewardMC{\rho}{P}{S}} (\eventually \termConfig) + \prob{\rewardMC{\rho}{P}{S}}{\neg \eventually \termConfig}$.

Now we start by showing some auxiliary results, similar to \cite[Lemma 5.5, 5.6]{conditioningProb}:
\begin{lemma}
    \label{lem:ErWpEqual}
    For a program $S$, state $\rho\in \densityFull$, predicate $P\in\predicate$ we have
    \begin{align*}
        \prob{\rewardMC{\rho}{P}{S}}{\neg \eventually \errConfig} = tr(\qwlp{S}{\identityOp} \rho), &&
        (L)ER^{\rewardMC{\rho}{P}{S}} (\eventually \termConfig)= tr(qw(l)p\llbracket S \rrbracket (P) \rho)
    \end{align*}
\end{lemma}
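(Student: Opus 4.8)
The plan is to reduce all three identities to the semantic equivalence established in Lemma~\ref{lem:operationalVsDenotational}, together with the duality results of Lemma~\ref{lem:schroedinger} and Lemma~\ref{lem:wlpexistence}. The first observation is that $\rewardMC{\rho}{P}{S}$ is nothing but $\operationalMC{\rho}{S}$ decorated with a reward function: it has the same states and the same transition probabilities, so every reachability probability of $\rewardMC{\rho}{P}{S}$ coincides with the corresponding one of $\operationalMC{\rho}{S}$. Throughout I use the hypothesis $\rho\in\densityFull$, hence $tr(\rho)=1$, which is where the normalization constant $1$ in the final formulas comes from.

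For the error identity I would write $\prob{\rewardMC{\rho}{P}{S}}{\neg\eventually\errConfig} = 1 - \prob{\operationalMC{\rho}{S}}{\eventually\errConfig} = 1 - \semanticsErr{S}(\rho,0)$, using the second component of Lemma~\ref{lem:operationalVsDenotational}. On the other side, instantiating Lemma~\ref{lem:wlpexistence} at $P=\identityOp$ gives $tr(\qwlp{S}{\identityOp}\rho) = tr(\semanticsRho{S}(\rho,0)) + tr(\rho) - tr(\semanticsRho{S}(\rho,0)) - \semanticsErr{S}(\rho,0) = tr(\rho) - \semanticsErr{S}(\rho,0)$, which equals $1 - \semanticsErr{S}(\rho,0)$ since $tr(\rho)=1$. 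The two sides therefore agree.

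For the expected-reward identity I expand the definition $ER^{\rewardMC{\rho}{P}{S}}(\eventually\termConfig) = \sum_{\rho'} \prob{\operationalMC{\rho}{S}}{\eventually\config{\emptyProgram}{\rho'}}\cdot tr(P\rho')$, pull the fixed predicate $P$ and the trace out of the (countable) sum to obtain $tr\bigl(P\sum_{\rho'}\prob{\operationalMC{\rho}{S}}{\eventually\config{\emptyProgram}{\rho'}}\rho'\bigr)$, identify the bracketed operator as $\semanticsRho{S}(\rho,0)$ via the first component of Lemma~\ref{lem:operationalVsDenotational}, and finally invoke Lemma~\ref{lem:schroedinger} to rewrite $tr(P\semanticsRho{S}(\rho,0))$ as $tr(\qwp{S}{P}\rho)$. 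The liberal version follows by adding the non-termination contribution: $LER^{\rewardMC{\rho}{P}{S}}(\eventually\termConfig) = ER^{\rewardMC{\rho}{P}{S}}(\eventually\termConfig) + \prob{\rewardMC{\rho}{P}{S}}{\neg\eventually\termConfig}$, where $\prob{\rewardMC{\rho}{P}{S}}{\neg\eventually\termConfig} = 1 - \prob{\operationalMC{\rho}{S}}{\eventually\termConfig} = 1 - tr(\semanticsRho{S}(\rho,0)) - \semanticsErr{S}(\rho,0)$ by the second item of Lemma~\ref{lem:operationalVsDenotational}. Combining this with $ER^{\rewardMC{\rho}{P}{S}}(\eventually\termConfig) = tr(P\semanticsRho{S}(\rho,0))$ and $tr(\rho)=1$ yields exactly the right-hand side of Lemma~\ref{lem:wlpexistence}, namely $tr(\qwlp{S}{P}\rho)$.

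The only genuinely delicate step is the interchange of the trace with the infinite sum over $\rho'$ in the expected-reward identity. Here I would argue that, since the reachable part of the Markov chain is countable, the family $\{\prob{\operationalMC{\rho}{S}}{\eventually\config{\emptyProgram}{\rho'}}\cdot\rho'\}$ consists of positive operators whose finite partial sums form an increasing, trace-bounded sequence of partial density operators; by the order-continuity (equivalently, weak$^*$-continuity) of the trace recorded in the preliminaries, both $A\mapsto tr(A)$ and $A\mapsto tr(PA)$ commute with this supremum, which justifies pulling $tr$ and the fixed predicate $P$ through the sum. Everything else is routine bookkeeping with linearity of the trace and the assumption $tr(\rho)=1$.
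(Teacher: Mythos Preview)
Your proof is correct and follows essentially the same route as the paper: both reduce to Lemma~\ref{lem:operationalVsDenotational} and then apply the dualities of Lemmas~\ref{lem:schroedinger} and~\ref{lem:wlpexistence}, with the liberal case obtained by adding the non-termination probability via the second item of Lemma~\ref{lem:operationalVsDenotational}. Your explicit justification of the trace--sum interchange is in fact more careful than the paper, which simply performs the step without comment.
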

\begin{proof}
    Follows from Lemma~\ref{lem:operationalVsDenotational}, Lemma~\ref{lem:schroedinger} and Lemma~\ref{lem:wlpexistence}.
\end{proof}
We are interested in the conditional (liberal) expected reward of reaching $\termConfig$ from the initial state $\config{S}{\rho}$, conditioned on not visiting $\errConfig$:
\begin{align*}
    C(L)ER^{\rewardMC{\rho}{P}{S}} (\eventually \termConfig \mid \neg \eventually \errConfig):=& \frac{(L)ER^{\rewardMC{\rho}{P}{S}} (\eventually \termConfig)}{\prob{\rewardMC{\rho}{\identityOp}{S}}{\neg \eventually \errConfig}}
\end{align*}
This reward is equivalent to our interpretation of $qcw(l)p$, analogous to \cite[Theorem 5.7]{conditioningProb}:
\begin{theorem} \label{th:correspondenceMcCqwp}
    For a program $S$, state $\rho\in \densityFull$, predicates $P,Q\in\predicate$ we have
    \begin{align*}
        C(L)ER^{\rewardMC{\rho}{P}{S}} (\eventually \termConfig \mid \neg \eventually \errConfig) &= \fracTr( qcw(l)p\llbracket S \rrbracket (P,\identityOp) \multdot \rho)
    \end{align*}
\end{theorem}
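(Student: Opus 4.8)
The plan is to reduce the theorem to the already-proven identities in Lemma~\ref{lem:ErWpEqual}, so that everything follows by unfolding the two definitions of conditional reward and conditional weakest precondition. First I would expand the left-hand side using the definition of $C(L)ER$:
\begin{align*}
    C(L)ER^{\rewardMC{\rho}{P}{S}} (\eventually \termConfig \mid \neg \eventually \errConfig) = \frac{(L)ER^{\rewardMC{\rho}{P}{S}} (\eventually \termConfig)}{\prob{\rewardMC{\rho}{\identityOp}{S}}{\neg \eventually \errConfig}}.
\end{align*}
By Lemma~\ref{lem:ErWpEqual} the numerator equals $tr(qw(l)p\llbracket S \rrbracket (P) \rho)$ and the denominator equals $tr(\qwlp{S}{\identityOp} \rho)$ (note that the denominator uses the reward chain with postcondition $\identityOp$, but since $\prob{\rewardMC{\rho}{P}{S}}{\neg \eventually \errConfig}$ does not depend on the reward function at all, only on the underlying MC $\operationalMC{\rho}{S}$, the choice of postcondition is immaterial here).

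Next I would expand the right-hand side. By Definition~\ref{def:qcwp} we have $\qcwp{S}{P,\identityOp} = (\qwp{S}{P}, \qwlp{S}{\identityOp})$, and the analogous statement holds for $qcwlp$, so applying $\fracTr$ yields
\begin{align*}
    \fracTr( qcw(l)p\llbracket S \rrbracket (P,\identityOp) \multdot \rho) = \frac{tr(qw(l)p\llbracket S \rrbracket (P)\rho)}{tr(\qwlp{S}{\identityOp}\rho)}.
\end{align*}
This is visibly the same expression obtained for the left-hand side, which closes the argument. The proof therefore amounts to matching numerators and denominators across the two sides after substituting Lemma~\ref{lem:ErWpEqual}.

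The main obstacle I anticipate is not algebraic but definitional: one must carefully track the liberal-versus-non-liberal distinction so that the reward $(L)ER$ on the left lines up with the $qw(l)p$ on the right, and separately confirm that the conditioning denominator is the same object on both sides. Concretely, I would verify that $LER$ pairs with $qwlp$ and $ER$ with $qwp$ as recorded in Lemma~\ref{lem:ErWpEqual}, and that in both the $qcwp$ and $qcwlp$ cases the second component used for conditioning is exactly $\qwlp{S}{\identityOp}$, matching $\prob{\rewardMC{\rho}{\identityOp}{S}}{\neg \eventually \errConfig}$. A secondary point of care is the case where the denominator $tr(\qwlp{S}{\identityOp}\rho)$ is zero: there $\fracTr$ is undefined by convention, and the conditional expected reward is likewise undefined, so the stated equality should be read as holding whenever either side is defined. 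Once these bookkeeping matters are settled, the equality is immediate.
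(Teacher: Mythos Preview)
Your proposal is correct and matches the paper's own proof essentially line for line: unfold the definition of $C(L)ER$, apply Lemma~\ref{lem:ErWpEqual} to rewrite numerator and denominator, and recognize the result as $\fracTr(qcw(l)p\llbracket S \rrbracket (P,\identityOp)\multdot\rho)$, with the zero-denominator case handled by noting both sides are undefined. The paper presents the same chain of equalities, only more tersely.
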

\begin{proof}
    Assuming $tr(\qwlp{S}{\identityOp} \rho) >0$, then $C(L)ER^{\rewardMC{\rho}{P}{S}} (\eventually \termConfig \mid \neg \eventually \errConfig)$ is equal to
    \begin{align*}
        \frac{(L)ER^{\rewardMC{\rho}{P}{S}} (\eventually \termConfig)}{\prob{\rewardMC{\rho}{\identityOp}{S}}{\neg \eventually \errConfig}} = \frac{tr(qw(l)p \llbracket S \rrbracket (P) \rho)}{tr(\qwlp{S}{\identityOp} \rho)} = \fracTr(qcw(l)p\llbracket S \rrbracket (P,\identityOp) \multdot \rho)
    \end{align*}
    by Lemma~\ref{lem:ErWpEqual}.
    If $tr(\qwlp{S}{\identityOp} \rho)=0$, then $C(L)ER^{\rewardMC{\rho}{P}{S}} (\eventually \termConfig \mid \neg \eventually \errConfig)$ is undefined which means both sides of the statement are undefined and thus equal.
\end{proof}

\section{Examples}
\label{sec:example}
In this section we provide two examples on how conditional quantum weakest preconditions can be applied.
\subsection{The Quantum Fast-Dice-Roller}
In probabilistic programs, generating a uniform distribution using fair coins is a challenge. The fast dice roller efficiently simulates the throw of a fair dice, generating a uniform distribution about $N$ possible outcomes \cite{lumbroso}. We solve this problem for $N=6$ with quantum gates by creating qubits $q,p,r$ with Hadamard gates and using the observe statement to reject the $qp=11$ case, leaving $6$ possible outcomes ($qpr=000, \ldots , 101$), see Figure~\ref{fig:fdr}.

\begin{figure}
  \[\begin{array}{lcl}
      q   & \df & H q;\\
      p & \df & H p;\\
      \mathbf{observe}(q \otimes p, \identityOp_4 - \ketbra{11}{11});\hskip-2in {}\\
      r   & \df & H r
    \end{array}
  \]
  \caption{Quantum Fast-Dice-Roller. For the identity operator on $\mathcal{H}_2 \otimes \mathcal{H}_2$ we use $\identityOp_4$.}
  \label{fig:fdr}
\end{figure}

Before verifying its correctness, we consider the operational semantics:
We have three binary variables, so $\mathcal{H}_{all} = \mathcal{H}_2 ^{\otimes 3}$, denoted as $\mathcal{H}$. The first variable is $q$, the second one $p$ and the last one $r$ and $\rho_0 \in \densityFull$ an initial state. The operational semantics is represented by the Markov chain in Figure~\ref{fig:opSem}, Appendix~\ref{app:ex}.
To prove correctness, we focus on the probability of termination and reaching the desired state without violating the observation. This probability cannot be directly read from the operational semantics, even for this simple program. To specify this property formally, we use the reward MC as defined in Definition~\ref{def:opMC}. The desired probability can be computed using conditional weakest preconditions, see Theorem~\ref{th:correspondenceMcCqwp}.

To terminate in a state where the probability of all six outcomes is equal and forms a distribution, we verify that we reach the uniform superposition $\ket{\phi} = \sqrt{\frac{1}{6}}\big(\ket{000}+\ket{001}+\ket{010}+\ket{011}+\ket{100}+\ket{101}\big)$ over $6$ states. Measuring in the computational basis yields a uniform distribution. After computing the conditional weakest precondition, we can determine the likelihood of each input state reaching the fixed uniform superposition and producing a uniform distribution, assuming the observation is not violated.
We use the decoupling of $\qcwp{S}{(P,\identityOp)}$ and compute $\qwp{S}{P}$ and $\qwlp{S}{\identityOp}$ separately where $P = \ketbra{\phi}{\phi}$ and $S$ stands for our fast-dice roller program (Figure~\ref{fig:fdr}). The results of applying rules of Proposition~\ref{prop:wpDef} and~\ref{prop:wlpDef} can be found in Appendix~\ref{app:ex}.
The probability that an input state $\rho$ will reach the desired uniform superposition is $\fracTr(\qcwp{S}{(P,\identityOp)} \multdot \rho)$, that is
\begin{align*}
  \frac{tr(\qwp{S}{P}\rho)}{tr(\qwlp{S}{\identityOp}\rho)}=\begin{cases}
    1 &,\text{if }\rho=\ketbra{000}{000}\\
    0 &, \text{if }\rho=\ketbra{x}{x} \text{ with } x\in\{001,011,101,111\}\\
    0.1111 &,\text{if }\rho=\ketbra{x}{x} \text{ with } x\in\{010,100,110\}.
  \end{cases}
\end{align*}
$\rho=\ketbra{000}{000}$ will reach the desired superposition with probability $1$ assuming no observation is violated. We can also see that $tr(\qwp{S}{P}\ketbra{000}{000})\neq 1$ so even with the ``best'' input, our conditional weakest precondition calculus gives more information than $\qwp{S}{P}$.

\subsection{MAJ-SAT}
To demonstrate our approach, we will verify the correctness of a program that is used to solve MAJ-SAT. Unlike SAT, which asks whether there exists at least one satisfying assignment of a Boolean formula, MAJ-SAT asks whether a Boolean formula is satisfied by at least half of all possible variable assignments. MAJ-SAT is known to be PP-complete and \cite{postbqp} uses it to prove the equivalence of the complexity classes PostBQP and PP.

\begin{figure}
  \[\begin{array}{lcl}
      \overline{q} & \df & 0^{\otimes n};\\
      y & \df & 0;\\
      \overline{q} & \df & H^{\otimes n}\overline{q};\\
      \overline{q}y & \df & U_f \overline{q}y; \\
      \overline{q} & \df & H^{\otimes n}\overline{q};\\
      \mathbf{observe}(\overline{q}, \ket{0}\bra{0}^{\otimes n});\hskip-2in {} \\
      z & \df & 0;\\
      z & \df & R_k z;\\
      zy & \df & CH;\\
      \mathbf{observe}(y, \ket{1}\bra{1})\hskip-2in {}
    \end{array}
  \]
  \caption{Inner loop body $S_k$ of the quantum algorithm solving MAJ-SAT as presented in~\cite{postbqp}. $y,z$ are qubits, $\overline{q}$ is an $n$-qubit sized register (formally $n$ qubits $q_1,\dots,q_n$). We use $\overline{q} \df 0^{\otimes n}$ to denote that all $n$ qubits of $\overline{q}$ are set of $0$. $R_k = \frac{1}{\sqrt{1+4^k}}\begin{pmatrix} 1 & -2^k \\ 2^k & 1 \end{pmatrix}$ is a rotation matrix depending on the parameter $k$ and $CH$ is a controlled Hadamard.}
  \label{fig:program}
\end{figure}

Formally, we are faced with the following problem: A formula with $n$ variables can be represented by a function $f:\{0,1\}^n \to \{0,1\}$ with $s = \abs{\{f(x) = 1\}}$. The goal is to determine whether $s < 2^{n-1}$ holds or not.
Aaronson \cite{postbqp} presents a PostBQP algorithm for this problem. A PostBQP algorithm is one that runs in polynomial time, is allowed to perform measurements to check whether certain conditions are satisfied (analogous to our observe statement) and is required to produce the correct result with high probability conditioned on those measurements succeeding.
The algorithm from \cite{postbqp} is as follows:
\[\begin{array}{l}
      \text{for } k = -n , ... , n:\\
      \hspace{0.5cm} \text{repeat }n \text{ times:}\\
      \hspace{1cm} S_k \\
      \hspace{0.5cm}\text{if }S_k \text{ succeeded more than } \frac34 n \text{ times}:\\
      \hspace{1cm} \text{return true} \\
      \text{return false}
    \end{array}
  \]
  where $S_k$ is given in Figure~\ref{fig:program} and succeeding means that measuring $z$ in the $\ket{+}, \ket{-}$ basis returns $\ket{+}$.
The core idea is to show that $S_k$ succeeds with probability $\leq\frac12$ for all $k$ if $s\geq 2^{n-1}$ and with probability $\geq \Bigl(\frac{1+\sqrt2}{\sqrt 6}\Bigr)^2\geq 0.971$ for at least one $k$ otherwise.
Hence the overall algorithm solves MAJ-SAT. To keep this example manageable, we focus on the analysis of $S_k$ alone.

We use conditional weakest preconditions and determine $\qcwp{S_k}{P,\identityOp^{\otimes n+2}}$ (which depends on the parameters $n,s,k$).
Here the postcondition $P$ corresponds to $z$ being in state $\ket+$, formally $P=\ket{+}\bra{+}_z\otimes\identityOp$.
Then the probability that $S_k$ succeeds is $\Pr_{nsk}:=\fracTr(\qcwp{S_k}{P,\identityOp^{\otimes n+2}} \odot \rho)$ for initial state $\rho$.

We computed the cwp symbolically using a computer algebra system, but the resulting formulas were quite unreadable. So for the sake of this example, we present numerical results of computing cwp instead.
Since $S_k$ does not contain any loops, the cwp can be computed by mechanic application of the rules for observation, assignment, and application of unitaries.
Performing these calculations for selected values of $n$ and $s$ and all $k=-n,\dots,n$, we find that in each case the cwp is of the form $(c \identityOp, c'\identityOp)$ for some $c,c'\in\R$.
This is to be expected since all variables $\bar q,z,y$ are initialized at the beginning of the program,
so the cwp should not depend on the initial state, i.e., all matrices should be multiples of the identity.
In that case, $\Pr_{nsk}=c/c'$.
In \autoref{fig:table}, we show $\max_k \Pr_{nsk}$ for selected $s,n$.
(The claim from \cite{postbqp} is that the success probability of $S_k$ is $\geq 0.971$ for some $k$ if $s<2^{n-1}$ and $\leq 1/2$ for all $k$ otherwise, so we only care about the maximum over all $k$.)
We see that $\max_k \Pr_{nsk}$ is indeed $\geq 0.971$ and $\leq 1/2$ in those two cases.
This confirms the calculation from \cite{postbqp}, using our logic.
(At least for the values of $s,n$ we computed.)

\begin{figure}
  \begin{tabular}{ c | c c c c c c}
     &  $s=2$ & $s=3$ & $s=4$ & $s=7$ & $s=8$ & \dots \\
    \hline
    $n=2$ & $0.5$ & $0.3838$ & $0.3286$ &  \\
    $n=3$ & $\underline{0.9714}$ & $\underline{0.9991}$ & $0.5$ & $0.4247$ & $0.4123$ & \dots \\
    $n=4$ & $\underline{0.9991}$ & $\underline{0.9933}$ & $\underline{0.9714}$ & $\underline{0.9889}$ & $0.5$ & \dots \\
    $n=5$ & $\underline{0.9889}$ & $\underline{0.9828}$ & $\underline{0.9991}$ & $\underline{0.9977}$ & $\underline{0.9714}$ & \dots \\
    \dots & \dots& \dots & \dots & \dots & \ldots
   \end{tabular}
  \caption{Maximum of $\Pr_{nsk}={\fracTr(\qcwp{S_k}{P,\identityOp^{\otimes n+2}} \odot \rho)}$ for $k\in [-n,n]$. The cases where $s<2^{n-1}$ are underlined.}
  \label{fig:table}
\end{figure}

\section{Conclusion}
\label{sec:conclusion}

We introduced the observe statement in the quantum setting for infinite-dimensional cases, supported by operational, denotational and weakest precondition semantics. We defined conditional weakest preconditions, proved their equivalence to the operational semantics and applied them to an example using Bayesian inference. Future work includes the interpretation of predicates and exploration of alternatives to observe statements such as rejection sampling or hoisting in the probabilistic case. Additionally, the challenge of combining non-determinism with conditioning in probabilistic systems \cite{conditioningProb} may extend to quantum programs.



\bibliography{literature}

\appendix
\section{Appendix}
\label{sec:app}
\subsection{Proofs Concerning the Semantics}
\label{app:semantics}

Before proving the properties of the denotational semantics, we need to show an auxiliary lemma:
\begin{lemma}
    \label{lem:lupislinear}
    \begin{enumerate}
        \item If for every $n\in \mathbb{N}$, $f_n: \density \to \density$ is bounded linear and pointwise increasing, that means for every fixed $\rho \in \density$ $m>n$ implies $f_n(\rho) \sqsubseteq f_m(\rho)$, then $f_\infty (\rho) := \bigvee_{n=0}^\infty f_n(\rho)$ exists and $f_\infty$ is linear.
        \item If for every $n \in \mathbb{N}$, $e_n:\densityNumberPairs \to \mathbb{R}_{\geq 0}$ is bounded linear and pointwise increasing, then $e_\infty (\rho,p) := \bigvee_{n=0}^\infty e_n(\rho,p)$ exists for every $(\rho,p) \in \densityNumberPairs$ and $e_\infty$ is linear.
    \end{enumerate}

\end{lemma}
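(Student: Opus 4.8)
The plan is to treat both parts uniformly: each asserts that the pointwise least upper bound of an increasing sequence of (bounded) linear maps is again linear, so in both cases I would first secure existence of the lub and then transfer linearity from the $f_n$ (resp.\ $e_n$) to the limit. The single technical fact underlying linearity is that, for increasing sequences $\{a_n\}_n,\{b_n\}_n$ of partial density operators and a scalar $c\geq 0$, the lub distributes over the operations, i.e.\ $\bigvee_n(a_n+b_n)=\bigvee_n a_n+\bigvee_n b_n$ and $\bigvee_n(c\,a_n)=c\bigvee_n a_n$. I expect this distributivity step to be the main obstacle, and I would prove it by testing against predicates and invoking order-continuity of the trace.

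For Part~1, existence is immediate: for fixed $\rho$ the sequence $\{f_n(\rho)\}_n$ lies in $\density$ and is increasing by hypothesis, so $f_\infty(\rho)=\bigvee_n f_n(\rho)$ exists because $\sqsubseteq$ is an $\omega$-cpo on $\density$, and the lub is again a partial density operator, giving $f_\infty:\density\to\density$. For the distributivity fact I fix a predicate $P$ and use that $\rho\mapsto tr(P\rho)$ is order-continuous: since $\bigl(\bigvee_n a_n\bigr)-a_m$ is positive, $\norm{\bigl(\bigvee_n a_n\bigr)-a_m} = tr\bigl(\bigvee_n a_n\bigr)-tr(a_m)\to 0$ by order-continuity of the trace, whence $tr\bigl(P\bigvee_n a_n\bigr)=\sup_n tr(Pa_n)$. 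Then $tr\bigl(P\bigvee_n(a_n+b_n)\bigr)=\sup_n\bigl(tr(Pa_n)+tr(Pb_n)\bigr)=\sup_n tr(Pa_n)+\sup_n tr(Pb_n)=tr\bigl(P(\bigvee_n a_n+\bigvee_n b_n)\bigr)$, the middle equality using that both real sequences are increasing. Since every $\rho\in\density$ is itself a predicate, equality of these traces for all $P$ forces the operator identity by Loewner duality. Combining this with linearity of each $f_n$ gives $f_\infty(a\rho+\sigma)=\bigvee_n\bigl(a\,f_n(\rho)+f_n(\sigma)\bigr)=a\,f_\infty(\rho)+f_\infty(\sigma)$ whenever $a\rho+\sigma\in\density$; I then extend $f_\infty$ to a genuine linear map on $\traceclass=\ext(\density)$ by positive scaling, Jordan decomposition into positive and negative parts, and splitting into real and imaginary parts, with additivity guaranteeing well-definedness.

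For Part~2 I would exploit the isomorphism $\traceclass\times\C\cong T(\mathcal{H}\otimes\C)$ already recorded in the paper, under which $\densityNumberPairs$ becomes a set of partial density operators and each $e_n$ a bounded linear functional on them, so the argument runs exactly parallel to Part~1 with the target $\density$ replaced by $\R_{\geq 0}$. The one genuinely new point is existence of $e_\infty(\rho,p)=\sup_n e_n(\rho,p)$ as a finite real number: an increasing sequence of reals has a lub only if it is bounded above, which is not forced by bounded linearity of the individual $e_n$ alone (for instance $e_n(\rho,p)=np$ is bounded linear, increasing, and unbounded). In the intended setting the sequence is uniformly bounded, each $e_n(\rho,p)\leq\tilde{tr}(\rho,p)\leq 1$, so the associated positive operators are dominated by $\identityOp$ and their lub exists; I would record this bound explicitly. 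Given existence, linearity of $e_\infty$ then follows verbatim from the distributivity argument of Part~1, carried out in $\R$.

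Thus the heart of the proof is the order-continuity/distributivity of the least upper bound, which I would isolate as a small auxiliary claim and reuse for both parts; the only subtlety specific to Part~2 is ensuring the real-valued suprema are finite, for which the uniform bound from $\tilde{tr}\leq 1$ is the natural and (in the semantics application) available hypothesis.
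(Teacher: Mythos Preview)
Your approach is correct and genuinely different from the paper's. For Part~1 you isolate a distributivity lemma for the least upper bound---$\bigvee_n(a_n+b_n)=\bigvee_n a_n+\bigvee_n b_n$ and $\bigvee_n(ca_n)=c\bigvee_n a_n$---proved by testing against predicates via order-continuity of $\rho\mapsto tr(P\rho)$, and then extend $f_\infty$ to $\traceclass$ by Jordan decomposition. The paper instead telescopes the sequence, writing $f_\infty=\sum_n g_n$ with $g_n:=f_n-f_{n-1}$ (and $g_0:=f_0$), extends each $g_n$ linearly to $\overline{g_n}$ on $\ext(\density)$, and defines the extension $\overline{f_\infty}:=\sum_n\overline{g_n}$ directly; convergence on a general $\rho\in\ext(\density)$ is checked by writing $\rho$ as a \emph{finite} combination of elements of $\density$ and swapping the finite outer sum with the infinite inner one. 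Your route makes the order-theoretic content explicit and yields a reusable auxiliary fact; the paper's route constructs the extension in one stroke and avoids the well-definedness check that your decomposition step requires.

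Your observation about Part~2 is sharp and in fact catches a lacuna in the paper's own argument: the hypothesis that each $e_n$ is individually bounded linear does not force $\{e_n(\rho,p)\}_n$ to be bounded above (your counterexample $e_n(\rho,p)=np$ is valid), yet the paper's proof simply asserts ``it is the least upper bound of a bounded set of real numbers'' without justification. The uniform bound $e_n(\rho,p)\leq\tilde{tr}(\rho,p)\leq 1$ that you propose to record is exactly what holds in the only application of the lemma (the while-loop case of Proposition~\ref{pro:allClaims}), so adding it as an explicit hypothesis is the right repair; once the supremum is finite, either your distributivity argument or the paper's telescoping argument carries Part~2 through verbatim.
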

\begin{proof}
    \begin{enumerate}
        \item First of all, $f_\infty (\rho)$ exists because $(\density, \sqsubseteq)$ is an $\omega$-cpo and $\{f_n(\rho)\}_{n\in \mathbb{N}}$ increasing.
        Then we can write $f_\infty (\rho)$ as a sum of $g_n(\rho) := \begin{cases} f_0(\rho) &,\text{if } n=0\\
            f_n(\rho)-f_{n-1}(\rho) &, \text{if }n>0\end{cases}$ because
        \begin{equation*}
            f_\infty(\rho) = \bigvee_{n=0}^\infty f_n(\rho) \overset{\text{\tiny\cite[Lem. 30]{heisenbergdualityUnruh}}}{=}\lim_{n \to \infty} f_n(\rho) = \lim_{n\to \infty} \sum_{j=0}^n g_j(\rho) = \sum_{n \in \mathbb{N}}g_n(\rho).
        \end{equation*}
        Each $g_n$ is a linear function an thus has an extension $\overline{g_n}= \overline{f_n} - \overline{f_{n-1}}$ from $\ext(\density) \to \ext(\density)$. Each $\rho\in \ext(\density)$ can be written as a finite linear combination of $\rho'_i \in \density$, i.e., $\rho = \sum_i \lambda_i \rho'_i$ by definition of $\ext$. Then
        \begin{align*}
            \sum_i \lambda_i \sum_n g_n(\rho'_i)=  \sum_i \lambda_i \sum_n \overline{g_n}(\rho'_i) \overset{(*)}{=} \sum_n \sum_i \lambda_i \overline{g_n}(\rho'_i)= \sum_n \overline{g_n}(\rho) =: \overline{f_\infty}(\rho)
        \end{align*}
        where $(*)$ uses that $\sum_i$ is finite.
        The left sum exists (see above), thus the right hand side exists too.
        To show that $f_\infty$ is linear, we show that $\overline{f_\infty}$ is a linear extension of $f_\infty$. It is $\overline{f_\infty}(\rho)=\sum_n \overline{g_n}(\rho) = \sum_n g_n(\rho) = f_\infty(\rho)$ for $\rho \in \density$. Also \begin{align*}
            &\phantom{=.}\overline{f_\infty}(a \rho + \sigma)
            = \sum_n \overline{g_n}(a \rho + \sigma)
            = \sum_n a \overline{g_n}(\rho) + \overline{g_n}(\sigma)\\
            &= a \sum_n \overline{g_n}(\rho) + \sum_n \overline{g_n}(\sigma)
            = a \overline{f_\infty}(\rho) + \overline{f_\infty}(\sigma)
        \end{align*}
        where all sums exist as shown already. That concludes that $f_\infty$ is linear.

        \item The existence of $e_\infty (\rho,p)$ is clear as it is the least upper bound of a bounded set of real numbers. We can write $e_\infty (\rho,p)$ as a sum of $h_n(\rho,p) := \begin{cases} e_0(\rho,p) &, \text{if }n=0\\
            e_n(\rho,p)-e_{n-1}(\rho,p) &,\text{if } n>0 \end{cases}$ because
        \begin{equation*}
            e_\infty(\rho,p) = \bigvee_{n=0}^\infty e_n(\rho,p) = \lim_{n \to \infty} e_n(\rho,p) = \lim_{n\to \infty} \sum_{j=0}^n h_j(\rho,p) = \sum_{n \in \mathbb{N}}h_n(\rho,p).
        \end{equation*}
        Each $h_n$ is a linear function and thus has an extension $\overline{h_n}= \overline{e_n} - \overline{e_{n-1}}$ from $\ext(\densityNumberPairs) \to \ext(\mathbb{R}_{\geq 0})$. Each $(\rho,p) \in \ext(\densityNumberPairs)$ can be written as a finite linear combination of $(\rho'_i,p_i) \in \densityNumberPairs$, i.e., $(\rho,p) = \sum_i \lambda_i (\rho'_i,p_i)$ by definition of $\ext$. Then
        \begin{align*}
            &\phantom{=} \sum_i \lambda_i \sum_n h_n(\rho'_i,p_i)=  \sum_i \lambda_i \sum_n \overline{h_n}(\rho'_i,p_i) \overset{\text{fin. sum}}{=} \sum_n \sum_i \lambda_i \overline{h_n}(\rho'_i,p_i)\\
            &= \sum_n \overline{h_n}(\rho,p) =: \overline{e_\infty}(\rho,p).
        \end{align*}
        The left sum exists (see above), thus the right hand side exists too. To show that $e_\infty$ is linear, we show that $\overline{e_\infty}$ is a linear extension of $e_\infty$
        It is $\overline{e_\infty}(\rho,p)=\sum \overline{h_n}(\rho,p) = \sum_n h_n(\rho,p) = e_\infty(\rho,p)$ for $(\rho,p) \in \densityNumberPairs$. Also \begin{align*}
            &\phantom{=.}\overline{e_\infty}(a (\rho,p) + (\sigma,q))
            = \sum_n \overline{h_n}(a (\rho,p) + (\sigma,q))
            = \sum_n a \overline{h_n}(\rho,p) + \overline{h_n}(\sigma,q)\\
            &= a \sum_n \overline{h_n}(\rho,p) + \sum_n \overline{h_n}(\sigma,q)
            = a \overline{e_\infty}(\rho,p) + \overline{e_\infty}(\sigma,q)
        \end{align*}
        where all sums exist as shown already. That concludes that $e_\infty$ is linear.
    \end{enumerate}
\end{proof}

We now prove Proposition~\ref{pro:allClaims}:
\begin{proof}
    We show all statements together by doing an induction over the structure of $S$.
    \begin{itemize}
        \item For $S \equiv \skipbf$:
        \begin{enumerate}
            \item $\semanticsRho{\skipbf}(\rho,p) =\rho = \semanticsRho{\skipbf}(\rho,q)$

            \item $p \leq p = \semanticsErr{\skipbf}(\rho,p)$

            \item $\semanticsErr{\skipbf}(\rho,q+p) = q+p = \semanticsErr{\skipbf}(\rho,q) + p$

            \item $\tilde{tr}(\semantics{\skipbf}(\rho,p)) = tr(\semanticsRho{\skipbf}(\rho,p)) + \semanticsErr{\skipbf}(\rho,p) = tr(\rho) +p = \tilde{tr}(\rho,p)$

            \item $(\rho,p) \in \densityNumberPairs$ implies $\semantics{\skipbf}(\rho,p) = (\rho,p) \in \densityNumberPairs$

            \item Linearity:
            We define $\overline{\semantics{\skipbf}}: \ext(\densityNumberPairs) \to \ext(\densityNumberPairs)$ as $\overline{\semantics{\skipbf}}(\rho,p) = (\rho,p)$ for $(\rho,p) \in \ext(\densityNumberPairs)$ which is linear and equal to $\semantics{\skipbf}(\rho,p)$ for $(\rho,p) \in \densityNumberPairs$, thus $\semantics{\skipbf}$ is linear by definition.
        \end{enumerate}
        \item For $S\equiv \qzero$: We only show the case of $type(q)=\intType$, for $\boolType$ it is similar. First of all, we show convergence of the sum.
        We know by \cite[Lem. 30]{heisenbergdualityUnruh} that the supremum and limit coincides in the SOT and if the trace of each element is upper bounded, then the limit exists. That means, we have to show that $tr(\sum_{n\in F} \ket{0} \bra{n}_q \rho \ket{n} \bra{0}_q)$ is bounded for every finite set $F\subseteq \mathbb{Z}$:
        \begin{align*}
            &\phantom{=.}tr\left(\sum_{n\in F} \ket{0} \bra{n}_q \rho \ket{n} \bra{0}_q\right) = \sum_{n\in F} tr\left(\ket{0} \bra{n}_q \rho \ket{n} \bra{0}_q\right) = \sum_{n\in F} tr\left(\rho \ket{n} \bra{0}_q \ket{0} \bra{n}_q  \right) \\
            &= tr\left(\rho \sum_{n\in F} \ket{n} \bra{0}_q \ket{0} \bra{n}_q \right)
            \leq \norm{\rho}_{tr}  \norm{\sum_{n\in F} \ket{n} \bra{0}_q \ket{0} \bra{n}_q}_{op} \leq \norm{\rho}_{tr} \norm{\identityOp}_{op} \leq \norm{\rho}_{tr}
        \end{align*}
        Then \begin{align*}
            \bigvee_{\text{finite }F \subseteq \mathbb{Z}} \sum_{n \in F} \ket{0} \bra{n}_q \rho \ket{n} \bra{0}_q = \sum_{n\in \mathbb{Z}} \ket{0} \bra{n}_q \rho \ket{n} \bra{0}_q.
        \end{align*}
        \begin{enumerate}
            \item $\semanticsRho{\qzero}(\rho,p) = \semanticsRho{\qzero}(\rho,q)$ follows directly from definition

            \item $p \leq p = \semanticsErr{\qzero}(\rho,p)$

            \item $\semanticsErr{\qzero}(\rho,q+p) = q+p =\semanticsErr{\qzero}(\rho,q) + p$

            \item
            \begin{align*}
                &\phantom{=.}\tilde{tr}\left(\semantics{\qzero}(\rho,p)\right) = tr\left(\semanticsRho{ \qzero } (\rho ,p)\right) +\semanticsErr{ \qzero } (\rho ,p)\\
                &=tr\left(\sum_{n\in \mathbb{Z}} \ket{0} \bra{n}_q \rho \ket{n} \bra{0}_q\right) + p= p+ \sum_{n\in \mathbb{Z}} tr\left(\ket{0} \bra{n} _q \rho \ket{n} \bra{0}_q\right)\\
                &=p+ \sum_{n\in \mathbb{Z}} tr\left(\ket{n} \bra{0}_q \ket{0} \bra{n} _q \rho \right) = p+ \sum_{n\in \mathbb{Z}} tr\left(\ket{n}\bra{n}_q \rho \right)\\
                &= p+ tr\left([\sum_{n\in \mathbb{Z}} \ket{n}\bra{n}_q ]\rho \right)= p+tr(\rho) = \tilde{tr}(\rho,p)
            \end{align*}

            \item
             $(\rho,p) \in \densityNumberPairs$ implies $\tilde{tr}(\semantics{\qzero}(\rho,p))\leq \tilde{tr}(\rho,p)\leq 1$ and $ \semanticsErr{\qzero}(\rho,p) \geq p \geq 0$.

            If $\rho$ is positive, then $\ket{0} \bra{n}_q \rho \ket{n} \bra{0}_q$ is positive. The infinite sum of positive operators $A,B$ is positive again,
            so $\semanticsRho{\qzero}(\rho,p)$ is positive and $\semantics{\qzero}(\rho,p) \in \densityNumberPairs$.

            \item Linearity:
            We define $\overline{\semantics{\qzero}}: \ext(\densityNumberPairs) \to \ext(\densityNumberPairs)$ as
            \begin{align*}
                 \overline{\semantics{\qzero}}(\rho,p) =\left(\sum_{n\in \mathbb{Z}} \ket{0} \bra{n}_q \rho \ket{n} \bra{0}_q, p\right)
            \end{align*} for $(\rho,p) \in \ext(\densityNumberPairs)$ which is linear and equal to $\semantics{\qzero}(\rho,p)$ for $(\rho,p) \in \densityNumberPairs$, thus $\semantics{\qzero}$ is linear by definition.
        \end{enumerate}
        \item For $S \equiv \Uq$:
        \begin{enumerate}
            \item $\semanticsRho{\Uq}(\rho,p) =U \rho U^\dagger = \semanticsRho{\Uq}(\rho,q)$

            \item $p \leq p = \semanticsErr{\Uq}(\rho,p)$

            \item $\semanticsErr{\Uq}(\rho,q+p) =q+p = \semanticsErr{\Uq}(\rho,q) + p$

            \item $\tilde{tr}(\semantics{\Uq}(\rho,p)) = tr\left(\semanticsRho{\Uq} (\rho,p)\right)+ \semanticsErr{ \Uq} (\rho,p) = tr(U\rho U^\dagger) +p = tr(U^\dagger U\rho ) +p  =tr(\rho)+p = \tilde{tr}(\rho,p)$

            \item $(\rho,p) \in \densityNumberPairs$ implies $\tilde{tr}(\semantics{\Uq}(\rho,p))\leq \tilde{tr}(\rho,p)\leq 1$ and $ \semanticsErr{\Uq}(\rho,p) \geq p \geq 0$.

            It is $tr(\semanticsRho{\Uq}(\rho,p)) = tr(U^\dagger U\rho ) = tr(\rho)$ and thus $tr(\semanticsRho{\Uq}(\rho,p)) \leq 1$. If $\rho$ is positive, then $U\rho U^\dagger = \semanticsRho{S}(\rho,p)$ is positive. In total, $\semanticsRho{\Uq}(\rho,p) \in \density$ and $\semantics{\Uq}(\rho,p) \in \densityNumberPairs$.

            \item Linearity:
            We define $\overline{\semantics{\Uq}}: \ext(\densityNumberPairs) \to \ext(\densityNumberPairs)$ as
            \begin{align*}\overline{\semantics{\Uq}}(\rho,p) =(U \rho U^\dagger, p)
            \end{align*} for $(\rho,p) \in \ext(\densityNumberPairs)$ which is linear and equal to $\semantics{\Uq}(\rho,p)$ for $(\rho,p) \in \densityNumberPairs$, thus $\semantics{\Uq}$ is linear by definition.

        \end{enumerate}
        \item For $S \equiv \observe$:
        \begin{enumerate}
            \item $\semanticsRho{\observe}(\rho,p) = O \rho O^\dagger = \semanticsRho{\observe}(\rho,q)$

            \item $p \leq p+ tr(\rho)-tr(O \rho O^\dagger) = \semanticsErr{\observe}(\rho,p)$ because $tr(\rho) = tr(\identityOp\rho)\geq tr(O^\dagger O \rho) = tr(O \rho O^\dagger)$

            \item $\semanticsErr{\observe}(\rho,q+p) = q+p+tr(\rho)- tr(O\rho O^\dagger)  = \semanticsErr{\observe}(\rho,q) + p$

            \item \begin{align*}
                &\phantom{=.}\tilde{tr}(\semantics{\observe}(\rho,p)) \\
                &= tr(\semanticsRho{\observe} (\rho,p))+ \semanticsErr{ \observe} (\rho,p) \\
                &= tr(O \rho O^\dagger) + p+ tr(\rho)-tr(O\rho O^\dagger) = p+ tr(\rho) = \tilde{tr}(\rho,p)
            \end{align*}

            \item It is $\tilde{tr}(\semantics{\observe}(\rho,p))\leq \tilde{tr}(\rho,p)\leq 1$ and $ \semanticsErr{\observe}(\rho,p)\geq p \geq 0$ for $(\rho,p) \in \densityNumberPairs$.

            Also $tr(\semanticsRho{\observe}(\rho,p)) = tr(O^\dagger O\rho ) \leq tr(\rho)\leq 1$. If $\rho$ is positive, then $O \rho O^\dagger = \semanticsRho{S}(\rho,p)$ is positive and $\semantics{\observe}(\rho,p) \in \densityNumberPairs$.

            \item Linearity:
            We define $\overline{\semantics{\observe}}: \ext(\densityNumberPairs) \to \ext(\densityNumberPairs)$ by
            \begin{equation*}
                \overline{\semantics{\observe}}(\rho,p)=\left(O \rho O^\dagger, p + tr(\rho)-tr(O \rho O^\dagger)\right)
            \end{equation*}
            which is linear and equal to $\semantics{\observe}(\rho,p)$ for $(\rho,p) \in \densityNumberPairs$, so $\semantics{\observe}$ is linear by definition.
        \end{enumerate}
        \item For $S \equiv \concat$:
        \begin{enumerate}
            \item $\semanticsRho{\concat}(\rho,p) =\semanticsRho{S_2} (\semantics{S_1} (\rho,p ) )
            = \semanticsRho{S_2}(\rho',p')
            = \semanticsRho{S_2}(\rho',q')
            = \semanticsRho{S_2}(\semantics{S_1}(\rho,q)) =\semanticsRho{\concat}(\rho,q)$
            where $\semantics{S_1}(\rho,p)=(\rho',p')$ and $\semantics{S_1}(\rho,q) = (\rho',q')$.

            \item $\semanticsErr{\concat}(\rho,p)
                = \semanticsErr{S_2}(\semantics{S_1}(\rho,p))
                = \semanticsErr{S_2}(\semanticsRho{S_1}(\rho,p), \semanticsErr{S_1}(\rho,p))
                \geq  \semanticsErr{S_1}(\rho,p)\geq p$

            \item
            \begin{align*}
                &\phantom{=.}\semanticsErr{\concat}(\rho,q+p) \\
                &= \semanticsErr{S_2}(\semantics{S_1}(\rho,q+p)) \\
                &= \semanticsErr{S_2}((\semanticsRho{S_1}(\rho,q+p), \semanticsErr{S_1}(\rho,q+p)))\\
                &= \semanticsErr{S_2}((\semanticsRho{S_1}(\rho,q), \semanticsErr{S_1}(\rho,q)+p)) \\
                &= \semanticsErr{S_2}((\semanticsRho{S_1}(\rho,q), \semanticsErr{S_1}(\rho,q))) +p \\
                &= \semanticsErr{S_2}(\semantics{S_1}(\rho,q))+p\\
                &= \semanticsErr{\concat}(\rho,q) + p
            \end{align*}

            \item $\tilde{tr}(\semantics{\concat}(\rho,p)) \leq  \tilde{tr}(\semantics{S_1}(\rho,p)) \leq \tilde{tr}(\rho,p)$

            \item $(\rho,p) \in \densityNumberPairs$ implies $\tilde{tr}(\semantics{\concat}(\rho,p))\leq \tilde{tr}(\rho,p)\leq 1$ and $ \semanticsErr{\concat}(\rho,p)\geq p \geq 0$.
            Also
            \begin{align*}
                \semanticsRho{\concat}(\rho,p) = \semanticsRho{S_2} (\semantics{S_1} (\rho,p )) = \semanticsRho{S_2} (\semanticsRho{S_1} (\rho,p ),0)
            \end{align*}
            As $\semanticsRho{S_1} (\rho,p ) \in \density$ and $\semanticsRho{S_2}(\rho,p) \in \density$ by induction hypothesis for all $\rho \in \density$, is  $\semanticsRho{S_2} (\semanticsRho{S_1} (\rho,p ),0) \in \density$ as well and thus $\semantics{\concat}(\rho,p) \in \densityNumberPairs$.

            \item Linearity:
            We know that $\semantics{S_1}$ and $\semantics{S_2}$ are linear, thus there exist linear functions $\overline{\semantics{S_1}}$ and $\overline{\semantics{S_2}}$ with $\overline{\semantics{S_1}}(\rho,p) = \semantics{S_1}(\rho,p)$ and $\overline{\semantics{S_2}}(\rho,p) = \semantics{S_2}(\rho,p)$ for $(\rho,p)\in \densityNumberPairs$. We define $\overline{\semantics{\concat}}: \ext(\densityNumberPairs) \to \ext(\densityNumberPairs)$ as $\overline{\semantics{\concat}}(\rho,p) = (\overline{\semantics{S_2}}(\overline{\semantics{S_1}}(\rho,p)))$ for $(\rho,p) \in \ext(\densityNumberPairs)$ which is linear. Also
            \begin{align*}
                \overline{\semantics{S_2}}\left(\overline{\semantics{S_1}}(\rho,p)\right) = \overline{\semantics{S_2}}\left(\semantics{S_1}(\rho,p)\right) = \semantics{S_2}(\semantics{S_1}(\rho,p)) = \semantics{\concat}(\rho,p)
            \end{align*}
            for $(\rho,p) \in \densityNumberPairs$, thus $\semantics{\concat}$ is linear by definition.
        \end{enumerate}
        \item For $S \equiv \measurePrime$:
        First of all, we show convergence of the sum.
        We know by \cite[Lem. 30]{heisenbergdualityUnruh} that the supremum and limit coincides in the SOT and if the trace of each element is upper bounded, then the limit exists. That means, we have to show that $tr(\sum_{m \in M'} \semanticsRho{S'_m}(M_m \rho M_m^\dagger))$ is bounded for every finite set $M'\subseteq M$:
        \begin{align*}
            &\phantom{=.}tr\left(\sum_{m \in M'} \semanticsRho{S'_m}(M_m \rho M_m^\dagger,0)\right) = \sum_{m \in M'}tr\left( \semanticsRho{S'_m}(M_m \rho M_m^\dagger,0)\right) \leq \sum_{m \in M'}tr( M_m \rho M_m^\dagger) \\
            &= tr\left(\rho \sum_{m \in M'} M_m^\dagger M_m\right) \leq \norm{\rho}_{tr} \norm{\sum_{m \in M'} M_m^\dagger M_m}_{op} \leq \norm{\rho}_{tr} \norm{\identityOp}_{op} = \norm{\rho}_{tr}
        \end{align*}
        because we can apply all previous shown properties for $S'_m$ by induction hypothesis.
        Then \begin{align*}
            \bigvee_{\text{finite }M' \subseteq M} \sum_{m \in M'} \semanticsRho{S'_m}(M_m \rho M_m^\dagger) = \sum_{m \in M} \semanticsRho{S'_m}(M_m \rho M_m^\dagger).
        \end{align*}
        \begin{enumerate}
            \item $\semanticsRho{\measurePrime}(\rho,p) = \sum_m \semanticsRho{S'_m}(M_m \rho M_m ^\dagger ,0) = \semanticsRho{\measurePrime}(\rho,q)$

            \item $p \leq \sum_m 0 + p \leq \sum_m \semanticsErr{S_m } (M_m \rho M_m ^\dagger ,0) +p =\semanticsErr{\measurePrime}(\rho,p)$

            \item $\semanticsErr{\measurePrime}(\rho,q+p) =\sum_m \semanticsErr{S_m } (M_m \rho M_m ^\dagger ,0) +q+p $\\
            $= \semanticsErr{\measurePrime}(\rho,q) + p$

            \item
            \begin{align*}
                &\phantom{=.}\tilde{tr}\left(\semantics{\measurePrime}(\rho,p)\right) = \tilde{tr}\left(\sum_m \semantics{S'_m}(M_m \rho M_m ^\dagger,0) +(\zeroOp,p)\right) \\
                &= \sum_m  \tilde{tr}\left(\semantics{S'_m}(M_m \rho M_m ^\dagger,0)\right) + \tilde{tr}(\zeroOp,p)
                \leq \sum_m \tilde{tr}(M_m \rho M_m ^\dagger,0) + \tilde{tr}(\zeroOp,p) \\
                &= \sum_m tr(M_m ^\dagger M_m \rho )+0 + tr(\zeroOp)+p =  tr\left(\sum_m M_m ^\dagger M_m \rho \right)+p = \tilde{tr}(\rho,p)
            \end{align*}

            \item $\tilde{tr}(\semantics{\measurePrime}(\rho,p))\leq \tilde{tr}(\rho,p)\leq 1$ and $ \semanticsErr{\measurePrime}(\rho,p)\geq p \geq 0$ for $(\rho,p) \in \densityNumberPairs$.
            It is $\semanticsRho{\measurePrime}(\rho,p) = \sum_m \semanticsRho{S'_m}(M_m \rho M_m ^\dagger ,0)$ and $\semanticsRho{S'_m}(M_m \rho M_m ^\dagger ,0) \in \density$ for all $m$ by induction hypothesis. It follows that $\semanticsRho{\measurePrime}(\rho,p)$ is positive because the infinite sum of positive operators is positive.
            Together, $\semantics{\measurePrime}(\rho,p) \in \densityNumberPairs$.

            \item Linearity:
            We know that $\semantics{S'_m}$ is linear for each $m\in I$, thus there exist linear functions $\overline{\semantics{S'_m}}$ with $\overline{\semantics{S'_m}}(\rho,p) = \semantics{S'_m}(\rho,p)$ for $(\rho,p)\in \densityNumberPairs$. We define $\overline{\semantics{\measurePrime}}: \ext(\densityNumberPairs) \to \ext(\densityNumberPairs)$ as $\overline{\semantics{\measurePrime}}(\rho,p) = \sum_m \overline{\semantics{S'_m}}(M_m \rho M_m ^\dagger,p)$ for $(\rho,p) \in \ext(\densityNumberPairs)$ which is linear. Every $\rho\in \ext(\density)$ can be written as a finite linear combination of $\rho'\in \density$, thus the existence of the sum follows.
             Also
            \begin{align*}
                &\phantom{=.}\overline{\semantics{\measurePrime}}(\rho,p) =  \sum_m \overline{\semantics{S'_m}}(M_m \rho M_m ^\dagger,p) \\
                &= \sum_m \semantics{S'_m}(M_m \rho M_m^\dagger,p) = \semantics{\measurePrime}(\rho,p)
            \end{align*}
            for $(\rho,p) \in \densityNumberPairs$, thus $\semantics{\measurePrime}$ is linear by definition.
        \end{enumerate}
        \item For $S \equiv \while'$:
        By induction hypothesis, we have that propositions 1-6 hold for $S'$. Since $(\while')^n$ can be expressed as a term involving only $S',\Omega$, sequential composition and measurements (and since $\Omega$ trivially satisfies $1$-$6$), we derive propositions $1$-$6$ for $(\while')^n$. We refer to this by using $(*)$.

        Before continuing to prove propositions 1-6, we show the existence of the least upper bounds by first showing monotonicity of $\semanticsRho{\while}$:
        \begin{claim}
        \label{cl:monoton}
        For each measurement $M=\{M_0,M_1\}$, quantum register $\Bar{q}$ and program $S'$, $\semanticsRho{\while'}$ is monotonic, i.e., for all $n \geq 0$ and $ (\rho,p) \in \densityNumberPairs$ it is
        \begin{align*}
            \semanticsRho{(\while')^{n}}(\rho,p) &\sqsubseteq \semanticsRho{(\while')^{n+1}}(\rho,p)
        \end{align*}
        \end{claim}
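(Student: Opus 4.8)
The plan is to proceed by induction on $n$, where for each $n$ the induction hypothesis is the statement, quantified over \emph{all} inputs, that $\semanticsRho{(\while')^{n}}(\sigma,p) \sqsubseteq \semanticsRho{(\while')^{n+1}}(\sigma,p)$ for every $(\sigma,p) \in \densityNumberPairs$. Keeping the hypothesis universally quantified in the input is essential, because the recursion forces me to apply it not to $(\rho,p)$ itself but to a transformed argument.

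First I would unfold the loop semantics. Since $(\while')^{n+1} \equiv \ifstatement{S';(\while')^{n}}{\skipbf}$ abbreviates the measurement statement with outcome~$0$ mapped to $\skipbf$ and outcome~$1$ to $S';(\while')^{n}$, the measurement rule together with the rules for $\skipbf$ and sequential composition give
\begin{equation*}
    \semanticsRho{(\while')^{n+1}}(\rho,p) = M_0 \rho M_0^\dagger + \semanticsRho{(\while')^{n}}\bigl(\semantics{S'}(M_1 \rho M_1^\dagger,0)\bigr).
\end{equation*}
The decisive observation is that the summand $M_0 \rho M_0^\dagger$ is identical in the unfoldings at levels $n+1$ and $n+2$, so comparing them reduces to comparing the two inner terms evaluated at $(\tau,t) := \semantics{S'}(M_1 \rho M_1^\dagger,0)$.

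For the base case $n=0$ I compute $\semanticsRho{(\while')^{0}}(\rho,p) = \zeroOp$ and $\semanticsRho{(\while')^{1}}(\rho,p) = M_0 \rho M_0^\dagger$, which is positive because $\rho$ is, whence $\zeroOp \sqsubseteq M_0 \rho M_0^\dagger$. For the inductive step I first verify that $(M_1 \rho M_1^\dagger,0) \in \densityNumberPairs$: positivity of $M_1 \rho M_1^\dagger$ is immediate, and $tr(M_1 \rho M_1^\dagger) = tr(M_1^\dagger M_1 \rho) \leq tr(\rho) \leq 1$ follows from $M_0^\dagger M_0 + M_1^\dagger M_1 = \identityOp$. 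Well-definedness of $\semantics{S'}$ (property~5, available for $S'$ through $(*)$) then gives $(\tau,t) \in \densityNumberPairs$, so the induction hypothesis applies to $(\tau,t)$ and yields $\semanticsRho{(\while')^{n}}(\tau,t) \sqsubseteq \semanticsRho{(\while')^{n+1}}(\tau,t)$. Adding the common term $M_0 \rho M_0^\dagger$ on both sides and using compatibility of the Loewner order with addition delivers $\semanticsRho{(\while')^{n+1}}(\rho,p) \sqsubseteq \semanticsRho{(\while')^{n+2}}(\rho,p)$.

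The work here is bookkeeping rather than conceptual depth: the main point to get right is that the hypothesis must range over all inputs and that every intermediate pair stays inside $\densityNumberPairs$, so that the recursive applications of $\semanticsRho{(\while')^{n}}$ and the inherited properties $(*)$ are legitimate. Subunitality of the measurement $\{M_0,M_1\}$ together with property~5 supplies exactly this, while property~1 (independence of the first component from the numeric second coordinate) lets me disregard $t$ throughout.
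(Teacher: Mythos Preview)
Your proof is correct, but it follows a different route from the paper's. The paper first establishes by a separate induction the closed form
\[
\semanticsRho{(\while')^n}(\rho,p) = \sum_{k=0}^{n-1} M_0\bigl(f^k(\rho)\bigr) M_0^\dagger,
\qquad f(\rho):=\semanticsRho{S'}(M_1\rho M_1^\dagger,0),
\]
from which the difference of consecutive terms is $M_0(f^n(\rho))M_0^\dagger$, manifestly positive. Your argument bypasses this explicit formula and works directly with the recursive unfolding, carrying a universally quantified induction hypothesis so that it can be applied at the shifted argument $(\tau,t)=\semantics{S'}(M_1\rho M_1^\dagger,0)$. Both approaches are sound; yours is more economical for the monotonicity claim alone, while the paper's closed form makes the telescoping structure visible and could be reused (e.g.\ to read off the increments directly or to reason about limits). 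Your remarks on why $(\tau,t)\in\densityNumberPairs$ and the appeal to property~5 via $(*)$ are exactly the checks needed to keep the recursion legitimate.
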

        \begin{claimproof}
            First of all, we show
            \begin{align*}
                \semanticsRho{(\while')^n}(\rho,p) = \sum_{k=0}^{n-1} M_0 \left(f^k(\rho)\right) M_0 ^\dagger
            \end{align*}
            by induction over $n\geq 1$ where $f(\rho):=\semanticsRho{S'}(M_1 \rho M_1^\dagger,0)$.
            \begin{itemize}
                \item $n=1$:
            \begin{align*}
                \semanticsRho{(\while')^1}(\rho,p) =\semanticsRho{\ifstatement{S'; \Omega}{\skipbf}}(\rho,p)
            \end{align*}
            implies
            \begin{align*}
                &\phantom{=.}\semanticsRho{\ifstatement{S'; \Omega}{\skipbf}}(\rho,p) \\
                &= \semanticsRho{\skipbf } (M_0 \rho M_0 ^\dagger ,0) + \semanticsRho{S'; \Omega} (M_1 \rho M_1 ^\dagger,0) \\
                &= M_0 \rho M_0 ^\dagger + \semanticsRho{\Omega}(f(\rho))\\
                &= M_0 \rho M_0^\dagger + \zeroOp\\
                &= M_0 \rho M_0^\dagger \\
                &= M_0 (f^0(\rho)) M_0 ^\dagger \\
                &= \sum_{k=0}^{0} M_0 (f^k(\rho)) M_0 ^\dagger .
            \end{align*}
            \item $n=n+1$:
            \begin{align*}
                &\phantom{=.}\semanticsRho{(\while')^{n+1}}(\rho,p) \\
                &= \semanticsRho{\ifstatement{S'; (\while')^n}{\skipbf}}(\rho,p)
            \end{align*} implies
            \begin{align*}
                & \phantom{=.}\semanticsRho{(\while')^{n+1}}(\rho,p) \\
                &= \semanticsRho{\ifstatement{S'; (\while')^n}{\skipbf}}(\rho,p) \\
                &= \semanticsRho{\skipbf } (M_0 \rho M_0^\dagger,0) + \semanticsRho{S'; (\while')^n} (M_1 \rho M_1 ^\dagger,0) \\
                &= M_0 \rho M_0^\dagger + \semanticsRho{(\while')^n}(f(\rho),0)\\
                &= M_0 \rho M_0^\dagger + \sum_{k=0}^{n-1} M_0 f^k(f(\rho)) M_0 ^\dagger\\
                &= M_0 \rho M_0^\dagger + \sum_{k=1}^{n} M_0 f^k(\rho) M_0 ^\dagger=  \sum_{k=0}^{n} M_0 f^k(\rho) M_0 ^\dagger.
            \end{align*}
            \end{itemize}

            Based on that,
            \begin{align*}
                &\phantom{=.}\semanticsRho{(\while')^{n+1}}(\rho,p) - \semanticsRho{(\while')^{n}}(\rho,p) \\
                &= \sum_{k=0}^{n} M_0 (f^k(\rho)) M_0 ^\dagger - \sum_{k=0}^{n-1} M_0 (f^k(\rho)) M_0 ^\dagger = M_0 (f^n(\rho)) M_0^\dagger
            \end{align*}
            is a positive operator which concludes the proof.
        \end{claimproof}

        This now implies the existence of the least upper bounds:
        \begin{itemize}
            \item Due to the monotonicity, $\{\semanticsRho{(\while')^n} (\rho,p)\}_{n=0}^\infty$ is an increasing sequence. Each $\semanticsRho{(\while')^n} (\rho,p)$ can be written as a sequential composition of measurements, that means $\semanticsRho{(\while')^n} (\rho,p)\in \density$ for all $n$. Then the least upper bound exists because for the set of partial density operators the Loewner order is an $\omega$-cpo.
            \item For all $n$ is $\semanticsErr{(\while')^n} (\rho,p) \leq tr(\rho)+p$ and $\semanticsErr{(\while')^n} (\rho,p) \in \mathbb{R}$. Thus, $\{ \semanticsErr{(\while')^n} (\rho,p) \}_{n=0}^\infty$ is a non-empty set of reals with an upper bound. By the least-upper-bound-property of reals, it follows that a least upper bound exists.
        \end{itemize}

        \begin{enumerate}
            \item $\semanticsRho{\while'}(\rho,p) = \bigvee_{n=0}^\infty \semanticsRho{(\while')^n} (\rho,p)$\\
            $= \bigvee_{n=0}^\infty \semanticsRho{(\while')^n} (\rho,q) = \semanticsRho{\while'}(\rho,q)$

            \item $p = \bigvee_{n=0}^\infty p \leq \bigvee_{n=0}^\infty \semanticsErr{(\while')^n} (\rho,p) $ \\
            $= \semanticsErr{\while'}(\rho,p)$

            \item
            \begin{align*}
                &\phantom{=.}\semanticsErr{\while'}(\rho,q+p) \\
                &= \bigvee_{n=0}^\infty \semanticsErr{(\while')^n} (\rho,q+p) \\
                &=\bigvee_{n=0}^\infty \left[\semanticsErr{(\while')^n} (\rho,q)+p\right]\\
                &= \bigvee_{n=0}^\infty [\semanticsErr{(\while')^n} (\rho,q)]+p\\
                &= \semanticsErr{\while'}(\rho,q) + p
            \end{align*}

            \item $\tilde{tr}(\semantics{\while'}(\rho,p))\leq \tilde{tr}(\rho,p)$ We show this by proving
            \begin{align*}
                \tilde{tr}(\semantics{(\while')^n}(\rho,p)) \leq \tilde{tr}(\rho,p)
            \end{align*} for all $n\geq 0$ by induction:
            \begin{itemize}
                \item For $n=0$:
                \begin{align*}
                    \tilde{tr}(\semantics{(\while')^0}(\rho,p)) = tr(\zeroOp) + p =0+ p\leq tr(\rho)+p = \tilde{tr}(\rho,p)
                \end{align*}
                \item For $n+1$:
                \begin{align*}
                    &\phantom{=.}\tilde{tr}\left(\semantics{(\while')^{n+1}}(\rho,p)\right)  \\
                    &= tr\left(\semanticsRho{\ifstatement{S'; (\while')^n}{\skipbf}}(\rho,p)\right) \\
                    &\phantom{=}+  \semanticsErr{\ifstatement{S'; (\while')^n}{\skipbf}}(\rho,p) \\
                    &= tr\left(\semanticsRho{\skipbf} (M_0 \rho M_0 ^\dagger ,0) + \semanticsRho{S';(\while')^n} (M_1 \rho M_1 ^\dagger ,0) \right) \\
                    &\phantom{=}+ \semanticsErr{\skipbf} (M_0 \rho M_0 ^\dagger ,0) + \semanticsErr{S';(\while')^n} (M_1 \rho M_1 ^\dagger ,0) + p \\
                    &= tr\left(M_0 \rho M_0 ^\dagger + \semanticsRho{(\while')^n} (\semantics{S} (M_1 \rho M_1 ^\dagger ,0))\right)\\
                    &\phantom{=} + \semanticsErr{(\while')^n} (\semantics{S'} (M_1 \rho M_1 ^\dagger ,0)) + p\\
                    &= tr(M_0 \rho M_0 ^\dagger) +  \tilde{tr}\left(\semantics{(\while')^n} (\semantics{S'} (M_1 \rho M_1 ^\dagger ,0))\right) + p\\
                    &\overset{(*)}{\leq} tr(M_0 \rho M_0 ^\dagger) + \tilde{tr }\left(\semantics{S'} (M_1 \rho M_1 ^\dagger ,0)\right) + 0 +  p\\
                    &\overset{(*)}{\leq}  tr(M_0 \rho M_0 ^\dagger) + tr(M_1 \rho M_1 ^\dagger)+ 0+  0 + p\\
                    &= tr([M_0 ^\dagger M_0+ M_1 ^\dagger M_1]  \rho )+ p\\
                    &= tr(\rho)+p = \tilde{tr}(\rho,p)
                \end{align*}
            \end{itemize}

            \item We have already shown that the least upper bounds exists. For all $(\rho,p)\in \densityNumberPairs$ is
            $\tilde{tr}(\semantics{\while'}(\rho,p))\leq \tilde{tr}(\rho,p)\leq 1$ and $\semanticsErr{\while'}(\rho,p) \geq p \geq 0$.

            As $\semanticsRho{\while'}(\rho,p) = \bigvee_{n=0}^\infty \semanticsRho{(\while')^n} (\rho,p)$ and each $\semanticsRho{(\while')^n} (\rho,p) \in \density$, the least upper bound is positive as well. Also $tr(\semanticsRho{\while'}(\rho,p)) \leq tr(\rho)$, thus $\semanticsRho{\while'}(\rho,p) \in \density$ and $\semantics{\while'}(\rho,p) \in \densityNumberPairs$.

            \item Linearity:
            We know that $\semantics{(\while')^n}$ is linear for each $n$ and increasing, thus $\bigvee_{n=0}^\infty \overline{\semantics{(\while')^n}}$ is linear as well by Lemma~\ref{lem:lupislinear}. Note that we use that $\semanticsRho{S'}(\rho,p)$ is independent of $p$ (property $1$).
        \end{enumerate}
    \end{itemize}
\end{proof}

In the following proof we show the equivalence of the operational and denotational semantics, i.e., the proof of Lemma~\ref{lem:operationalVsDenotational}.
\begin{proof}
    \begin{enumerate}
        \item The first claim consist of two parts:
        \begin{enumerate}
            \item $\semanticsRho{S}(\rho,0) = \sum_{\rho'} \prob{\operationalMC{\rho}{S}}{\eventually \config{\emptyProgram}{\rho'}} \cdot \rho'$
    \item $\semanticsErr{S}(\rho,0) = \prob{\operationalMC{\rho}{S}}{\eventually \errConfig} $
        \end{enumerate}

    We show both at the same time doing an induction over the structure of $S$:
    \begin{itemize}
        \item $S\equiv \skipbf$: $\config{\skipbf}{\rho} \overset{1}{\rightarrow} \config{\emptyProgram}{\rho} \overset{1}{\rightarrow} \termConfig \overset{1}{\rightarrow} \termConfig \overset{1}{\rightarrow}  \dots$ is the only possible run of $\operationalMC{\rho}{\skipbf}$.
        \begin{enumerate}
            \item $\semanticsRho{\skipbf}(\rho,0) =  \rho = \sum_{\rho'} \prob{\operationalMC{\rho}{\skipbf}}{\eventually \config{\emptyProgram}{\rho'}} \cdot \rho'$
        \item $\semanticsErr{\skipbf}(\rho,0) = 0 =  \prob{\operationalMC{\rho}{\skipbf}}{\eventually \errConfig}$
        \end{enumerate}

        \item $S \equiv \qzero$: We only show the case of $type(q) = \intType$, for $\boolType$ it is similar.\\
        $\config{\qzero}{\rho} \overset{1}{\rightarrow} \config{\emptyProgram}{\sum_{n\in \mathbb{Z}} \ket{0} \bra{n}_q \rho \ket{n} \bra{0}_q} \overset{1}{\rightarrow} \termConfig \overset{1}{\rightarrow} \termConfig \overset{1}{\rightarrow}  \dots$ is the only possible run of $\operationalMC{\rho}{\qzero}$.
        \begin{enumerate}
            \item $\semanticsRho{\qzero}(\rho,0) =  \sum_{n\in \mathbb{Z}} \ket{0} \bra{n}_q \rho \ket{n} \bra{0}_q = \sum_{\rho'} \prob{\operationalMC{\rho}{\qzero}}{\eventually \config{\emptyProgram}{\rho'}} \cdot \rho' $
            \item $\semanticsErr{\qzero}(\rho,0) = 0 =  \prob{\operationalMC{\rho}{\qzero}}{\eventually \errConfig}$
        \end{enumerate}

        \item $S \equiv \Uq$: $\config{\Uq}{\rho} \overset{1}{\rightarrow} \config{\emptyProgram}{U\rho U^\dagger} \overset{1}{\rightarrow} \termConfig \overset{1}{\rightarrow} \termConfig \overset{1}{\rightarrow}  \dots$ is the only possible run of $\operationalMC{\rho}{\Uq}$.

        \begin{enumerate}
            \item $\semanticsRho{\Uq}(\rho,0) =  U\rho U^\dagger = \sum_{\rho'} \prob{\operationalMC{\rho}{\Uq}}{\eventually \config{\emptyProgram}{\rho'}} \cdot \rho'$
        \item $\semanticsErr{\Uq}(\rho,0) = 0= \prob{\operationalMC{\rho}{\Uq}}{\eventually \errConfig}$
        \end{enumerate}

        \item $S \equiv \observe$: There are exactly two possible runs of $\operationalMC{\rho}{\observe}$ (assuming $0<tr(O\rho O^\dagger)<1$; otherwise only one of them): The successful run $\config{\observe}{\rho} \overset{tr(O \rho O ^\dagger)}{\rightarrow} \config{\emptyProgram}{\frac{O\rho O^\dagger}{tr(O\rho O^\dagger)}} \overset{1}{\rightarrow} \termConfig \overset{1}{\rightarrow} \termConfig \overset{1}{\rightarrow}  \dots$ and the non-successful run $\config{\observe}{\rho} \overset{1- tr(O \rho O ^\dagger)}{\rightarrow} \errConfig \overset{1}{\rightarrow} \termConfig \overset{1}{\rightarrow} \termConfig \overset{1}{\rightarrow}  \dots$ where the only one that satisfies $\eventually \errConfig$ is the second one with probability $1- tr(O \rho O ^\dagger)$.
        \begin{enumerate}
            \item $\semanticsRho{\observe}(\rho,0) =  O\rho O^\dagger = tr(O\rho O^\dagger) \cdot \frac{O\rho O^\dagger}{tr(O\rho O^\dagger)}$\\
            $= \sum_{\rho'} \prob{\operationalMC{\rho}{\observe}}{\eventually \config{\emptyProgram}{\rho'}} \cdot \rho'$ (if $tr(O\rho O^\dagger)>0$, else $\frac{O\rho O^\dagger}{tr(O\rho O^\dagger)}:= \zeroOp$)
        \item$\semanticsErr{\observe}(\rho,0) = tr(\rho)-tr(O\rho O^\dagger) = 1- tr(O \rho O ^\dagger)$\\
            $=  \prob{\operationalMC{\rho}{\observe}}{\eventually \errConfig}$
    \end{enumerate}

        \item $S \equiv \concat$: There are two disjoint possibilities for paths reaching $\errConfig$ in $\operationalMC{\rho}{\concat}$. Either during execution of $S_1$ (happens with probability $\prob{\operationalMC{\rho}{S_1}}{\eventually \errConfig}$), or during execution of $S_2$ after successfully executing $S_1$ without reaching $\errConfig$. Each path that does not reach $\errConfig$ during execution of $S_1$, but reaches $\errConfig$ at some point must go through a $\config{\emptyProgram;S_2}{\rho'}$ state.
        \begin{enumerate}
            \item It is \begin{align*}
            \semanticsRho{S_1}(\rho,0) &= \sum_{\rho'} \prob{\operationalMC{\rho}{S_1}}{\eventually \config{\emptyProgram}{\rho'}} \cdot \rho' = \sum_{\rho'} \prob{\operationalMC{\rho}{S_1;S_2}}{\eventually \config{\emptyProgram;S_2}{\rho'}} \cdot \rho',\\
            \semanticsRho{S_2}(\rho',0) &= \sum_{\rho''} \prob{\operationalMC{\rho'}{S_2}}{\eventually \config{\emptyProgram}{\rho''}} \cdot \rho'' = \sum_{\rho''} \prob{\operationalMC{\rho'}{\emptyProgram; S_2}}{\eventually \config{\emptyProgram}{\rho''}} \cdot \rho'' .
        \end{align*} Then
        \begin{align*}
            &\phantom{=.}\semanticsRho{\concat}(\rho,0)\\
            &=  \semanticsRho{S_2}\left(\sum_{\rho'} \prob{\operationalMC{\rho}{S_1}}{\eventually \config{\emptyProgram}{\rho'}} \cdot \rho' , 0\right) \\
            &= \sum_{\rho'} \prob{\operationalMC{\rho}{S_1}}{\eventually \config{\emptyProgram}{\rho'}} \cdot \semanticsRho{S_2}(\rho',0) \\
            &= \sum_{\rho'} \prob{\operationalMC{\rho}{S_1}}{\eventually \config{\emptyProgram}{\rho'}} \cdot \sum_{\rho''} \prob{\operationalMC{\rho'}{S_2}}{\eventually \config{\emptyProgram}{\rho''}} \cdot \rho''\\
            &= \sum_{\rho',\rho''} \left(\prob{\operationalMC{\rho}{S_1;S_2}}{\eventually \config{\emptyProgram;S_2}{\rho'}} \cdot \prob{\operationalMC{\rho'}{\emptyProgram; S_2}}{\eventually \config{\emptyProgram}{\rho''}} \cdot \rho''\right)\\
            &= \sum_{\rho''} \prob{\operationalMC{\rho}{S_1;S_2}}{\eventually \config{\emptyProgram}{\rho''}} \cdot \rho''
        \end{align*}
        \item \begin{align*}
            &\phantom{=.}\semanticsErr{\concat}(\rho,0) = \semanticsErr{S_2}(\semantics{S_1} (\rho,0 )) \\
            &= \semanticsErr{S_2}\left(\semanticsRho{S_1} (\rho,0 ),\semanticsErr{S_1}(\rho,0)\right)= \semanticsErr{S_2}(\semanticsRho{S_1} (\rho,p ),0)+ \semanticsErr{S_1}(\rho,0)\\
            &= \semanticsErr{S_2}\left(\sum_{\rho'} \prob{\operationalMC{\rho}{S_1}}{\eventually \config{\emptyProgram}{\rho'}} \cdot \rho' ,0\right)+ \prob{\operationalMC{\rho}{S_1}}{\eventually \errConfig}\\
            &= \sum_{\rho'} \left(\prob{\operationalMC{\rho}{S_1}}{\eventually \config{\emptyProgram}{\rho'}} \cdot \semanticsErr{S_2}(\rho',0) \right)+ \prob{\operationalMC{\rho}{S_1}}{\eventually \errConfig}\\
            &= \sum_{\rho'} \left(\prob{\operationalMC{\rho}{S_1;S_2}}{\eventually \config{\emptyProgram;S_2}{\rho'}} \cdot \prob{\operationalMC{\rho'}{\emptyProgram;S_2}}{\eventually \errConfig} \right)+ \prob{\operationalMC{\rho}{S_1}}{\eventually \errConfig}\\
            &=  \prob{\operationalMC{\rho}{\concat}}{\eventually \errConfig}.
        \end{align*}
        \end{enumerate}

        \item $S \equiv \measurePrime$:
        \begin{enumerate}
            \item \begin{align*}
                &\phantom{=.}\semanticsRho{\measurePrime}(\rho,0) = \sum_m \semanticsRho{S_m } (M_m \rho M_m ^\dagger ,0)\\
            &= \sum_m tr(M_m \rho M_m ^\dagger) \cdot \semanticsRho{S_m } \left(\frac{M_m \rho M_m ^\dagger}{tr(M_m \rho M_m ^\dagger)} ,\frac{0}{tr(M_m \rho M_m ^\dagger)}\right)\\
            &= \sum_m tr(M_m \rho M_m ^\dagger) \cdot \sum_{\rho'} \prob{\operationalMC{\nicefrac{M_m \rho M_m ^\dagger}{tr\left(M_m \rho M_m ^\dagger\right)}}{S_m}}{\eventually \config{\emptyProgram}{\rho'}} \cdot \rho'\\
            &= \sum_{\rho'} \sum_m Pr^{\operationalMC{\rho}{\measurePrime}}\left({\eventually \config{S_m}{\frac{M_m \rho M_m ^\dagger}{tr(M_m \rho M_m ^\dagger)}}}\right) \\
            &\phantom{=} \cdot  \prob{\operationalMC{\nicefrac{M_m \rho M_m ^\dagger}{tr(M_m \rho M_m ^\dagger)}}{S_m}}{\eventually \config{\emptyProgram}{\rho'}} \cdot \rho'\\
            &= \sum_{\rho'} \prob{\operationalMC{\rho}{\measurePrime}}{\eventually \config{\emptyProgram}{\rho'}} \cdot \rho'
        \end{align*}
        \item  \begin{align*}
            &\phantom{=.}\semanticsErr{\measurePrime}(\rho,0) \\
            &= \sum_m \semanticsErr{S_m}(M_m \rho M_m^\dagger,0)\\
            &= \sum_m tr(M_m \rho M_m^\dagger) \cdot \semanticsErr{S_m}\left(\frac{M_m \rho M_m^\dagger}{tr(M_m \rho M_m^\dagger)},\frac{0}{tr(M_m \rho M_m^\dagger)}\right)\\
            &= \sum_m tr(M_m \rho M_m^\dagger) \cdot \prob{\operationalMC{\nicefrac{M_m \rho M_m^\dagger}{tr(M_m \rho M_m^\dagger)}}{S_m}}{\eventually \errConfig}  \\
            &= \sum_m Pr^{\operationalMC{\rho}{\measurePrime}}\left({\eventually \config{S_m}{\frac{M_m \rho M_m ^\dagger}{tr(M_m \rho M_m ^\dagger)}}}\right) \\
            &\phantom{=}\cdot \prob{\operationalMC{\nicefrac{M_m \rho M_m^\dagger}{tr(M_m \rho M_m^\dagger)}}{S_m}}{\eventually \errConfig}  \\
            &=\prob{\operationalMC{\rho}{\measurePrime}}{\eventually \errConfig}.
        \end{align*}
        \end{enumerate}

        \item $S \equiv \while'$:
       \begin{enumerate}
            \item \begin{align*}
                &\phantom{=.}\semanticsRho{\while'}(\rho,0) \\
            &= \bigvee_{n=0}^\infty \semanticsRho{(\while)^n} (\rho,0) \\
            &= \bigvee_{n=0}^\infty \sum_{\rho'} \prob{\operationalMC{\rho}{(\while')^n}}{\eventually \config{\emptyProgram}{\rho'}} \cdot \rho'\\
            &= \sum_{\rho'} \prob{\operationalMC{\rho}{\while'}}{\eventually \config{\emptyProgram}{\rho'}} \cdot \rho'
        \end{align*}

        To show that the last step holds, we have to show that (1)
        \begin{align*}
            &\sum_{\rho'} \prob{\operationalMC{\rho}{\while'}}{\eventually \config{\emptyProgram}{\rho'}} \cdot \rho' \\
            \sqsubseteq &\bigvee_{n=0}^\infty \sum_{\rho'} \prob{\operationalMC{\rho}{(\while')^n}}{\eventually \config{\emptyProgram}{\rho'}} \cdot \rho'
        \end{align*}
        and (2)
        \begin{align*}
            &\sum_{\rho'} \prob{\operationalMC{\rho}{\while'}}{\eventually \config{\emptyProgram}{\rho'}} \cdot \rho'\\ \sqsupseteq &\bigvee_{n=0}^\infty \sum_{\rho'} \prob{\operationalMC{\rho}{(\while')^n}}{\eventually \config{\emptyProgram}{\rho'}} \cdot \rho'.
        \end{align*}

        We start by showing (1):

        Fix a $\rho'$. We show this direction by showing that every path of $\eventually \config{\emptyProgram}{\rho'}$ starting in $\config{\while'}{\rho}$ is also included in set of paths $\eventually \config{\emptyProgram}{\rho'}$ from $\config{(\while')^m}{\rho}$ for $m$ being the number of loop iterations until $\config{\emptyProgram}{\rho'}$ is reached which must exists because each of those paths is finite. Then it is also included in the set of paths with $n\geq m$. This means, if we take the maximum $m'$ of all $m$, every path from the left-hand side is included in this set. The probability for every path is also the same in both sides, i.e., $\prob{\operationalMC{\rho}{\while'}}{\eventually \config{\emptyProgram}{\rho'}}=  \prob{\operationalMC{\rho}{(\while')^{m'}}}{\eventually \config{\emptyProgram}{\rho'}} $.
        As each probability is positive and each $\rho'$ is positive as well,
        \begin{align*}
            &\sum_{\rho'} \prob{\operationalMC{\rho}{\while'}}{\eventually \config{\emptyProgram}{\rho'}} \cdot \rho' \\
            \sqsubseteq& \sum_{\rho'} \prob{\operationalMC{\rho}{(\while')^{m'}}}{\eventually \config{\emptyProgram}{\rho'}} \cdot \rho'\\
            \sqsubseteq &  \bigvee_{n=0}^\infty \sum_{\rho'} \prob{\operationalMC{\rho}{(\while')^n}}{\eventually \config{\emptyProgram}{\rho'}} \cdot \rho'
        \end{align*} holds.

        Now we show that (2) holds as well:
        Let $n\in \mathbb{N}$ be arbitrary but fixed. Every path starting in $\config{(\while')^n}{\rho}$ has $3$ options:
        \begin{enumerate}
            \item it terminates successfully (going through a $\config{\emptyProgram}{\sigma}$ state),
            \item it violates a observation (going through $\errConfig$) or
            \item it will be aborted because of reaching the max of $n$ loop iterations. That formally means that the path will have an infinite loop at a state $\config{(\while')^0}{\sigma}= \config{\textbf{while } M_{trivial}[\Bar{q}]=1 \textbf{ do } \skipbf}{\sigma}$ for some $\sigma \in \densityFull$ where $M_{trivial}=\{M_0=0, M_1=I\}$ by definition.
        \end{enumerate}
         In the second and last case, the path does not reach $\config{\emptyProgram}{\rho'}$ and we don't have to consider them for the sum. Similar to the first case, each path of the form $\eventually \config{\emptyProgram}{\rho'}$ from $\config{(\while')^n}{\rho}$ is included in the set of paths of the form $\eventually \config{\emptyProgram}{\rho'}$ from $\config{\while'}{\rho}$. That means
        \begin{align*}
            &\sum_{\rho'} \prob{\operationalMC{\rho}{\while'}}{\eventually \config{\emptyProgram}{\rho'}} \cdot \rho'
            \\\sqsupseteq & \sum_{\rho'} \prob{\operationalMC{\rho}{(\while')^{m'}}}{\eventually \config{\emptyProgram}{\rho'}} \cdot \rho'
        \end{align*}
        for every $n$, so it is a upper bound w.r.t. the Loewner order $\sqsubseteq$ and because $\bigvee_{n=0}^\infty$ is the least upper bound
        \begin{align*}
            &\sum_{\rho'} \prob{\operationalMC{\rho}{\while'}}{\eventually \config{\emptyProgram}{\rho'}} \cdot \rho'
            \\ \sqsupseteq &\bigvee_{n=0}^\infty \sum_{\rho'} \prob{\operationalMC{\rho}{(\while')^{m'}}}{\eventually \config{\emptyProgram}{\rho'}} \cdot \rho'
        \end{align*}
        holds.

        \item \begin{align*}
            &\phantom{=.}\semanticsErr{\while'}(\rho,0) \\
            &= \bigvee_{n=0}^\infty \semanticsErr{(\while')^n} (\rho,0)\\
            &= \bigvee_{n=0}^\infty \prob{\operationalMC{\rho}{(\while')^n}}{\eventually \errConfig}\\
            &= \prob{\operationalMC{\rho}{\while'}}{\eventually \errConfig}.
        \end{align*}
        This holds because
         \begin{align*}
            \bigvee_{n=0}^\infty \prob{\operationalMC{\rho}{(\while')^n}}{\eventually \errConfig}= \prob{\operationalMC{\rho}{\while'}}{\eventually \errConfig}.
         \end{align*}
         To show this, we first show that the left side is upper bounded by the right side (i.e., $ \leq$):
         All paths starting in $\config{(\while')^n}{\rho}$ will either go through $\errConfig$ or $\config{\emptyProgram}{\sigma}$ and then end in $\termConfig$ or will be aborted because $n$ iterations of the loop are reached but the loop guard is still true. Aborting means that the path will be stuck in an infinite loop. Paths of this form will never reach $\errConfig$ so we don't need to consider their probability. As paths of the right side starting in $\config{\while'}{\rho}$ will never be aborted and  all paths of the left side are included (for each $n$), the right side is an upper bound for each $n$ and thus also of the least upper bound.

         Now we have to show the other direction, i.e., the left side is an upper bound for the right side ($\geq$):
         Every path starting in $\config{\while'}{\rho}$ that reaches $\errConfig$ at some point has to reach $\errConfig$ after a finite number $m$ of loop iterations. That means, for each path there exists a $m$ such that the path is also included in the set of paths starting in $\config{(\while')^m}{\rho}$. Taking the maximum $m'$ of all these $m$, gives a Markov chain with initial state $\config{(\while')^{m'}}{\rho}$ that includes all paths of $\eventually \errConfig$ starting in $\config{\while'}{\rho}$ with the same probability.
         This implies that also the least upper bound of all values $n$ (which includes $m'$) is greater or equal to the right side, which concludes the proof that they are equal.
        \end{enumerate}
    \end{itemize}
    \item Every path that eventually reaches $\termConfig$ passes through either a $\config{\emptyProgram}{\rho'}$ or a $\errConfig$ state.
    Then
    \begin{align*}
        \prob{\operationalMC{\rho}{S}}{\eventually \termConfig} &= \sum_{\rho'} \prob{\operationalMC{\rho}{S}}{\eventually \config{\emptyProgram}{\rho'}} + \prob{\operationalMC{\rho}{S}}{\eventually \errConfig} \\
        &= \sum_{\rho'} \prob{\operationalMC{\rho}{S}}{\eventually \config{\emptyProgram}{\rho'}} \cdot tr(\rho')+ \prob{\operationalMC{\rho}{S}}{\eventually \errConfig} \\
        &= tr(\sum_{\rho'} \prob{\operationalMC{\rho}{S}}{\eventually \config{\emptyProgram}{\rho'}} \cdot \rho') + \prob{\operationalMC{\rho}{S}}{\eventually \errConfig} \\
        &= tr(\semanticsRho{S}(\rho,0))+\semanticsErr{S}(\rho,0)
    \end{align*}
    because $tr(\rho')=1$.
\end{enumerate}

\end{proof}

Here we provide the proof of Lemma~\ref{lem:bounded}:
\begin{proof}
    \begin{enumerate}
        \item Direct consequence of $tr(\semanticsRho{S}(\rho,p))+ \semanticsErr{S} (\rho ,p)\leq tr(\rho)+p$ and $p \leq\semanticsErr{S}(\rho,p)$.
        \item Direct consequence of $\semanticsErr{S}(\rho,q+p) = \semanticsErr{S}(\rho,q) + p$.
        \item Let $\overline{\semanticsRho{S}}: \traceclass \to \traceclass$ be the linear extension of $\semanticsRho{S}$ (existence follows from $\semanticsRho{S}$ being linear).
        Every $\rho\in \traceclass$ can be written as a linear combination of $4$ positive operators, i.e., $\rho= \sum_{i=1}^4 \lambda_i \rho_i$ with $\rho_i \sqsubseteq 0$ and $\norm{\rho_i}\leq \norm{\rho}$ for all $i$ \cite[\text{Lemma Trace\_Class.trace\_class\_decomp\_4pos}]{isabelleproof}. Then
        \begin{align*}
            &\phantom{=}\norm{\overline{\semanticsRho{S}}(\rho,0)} = \norm{\overline{\semanticsRho{S}}(\sum_{i=1}^4 \lambda_i \rho_i)} \leq \sum_{i=1}^4 \abs{\lambda_i}\cdot \norm{ \overline{\semanticsRho{S}}(\rho_i)} =\sum_{i=1}^4 \abs{\lambda_i}\cdot \norm{ \semanticsRho{S}(\rho_i)} \\
            &= \sum_{i=1}^4 \abs{\lambda_i}\cdot tr( \semanticsRho{S}(\rho_i)) \leq \sum_{i=1}^4 \abs{\lambda_i}\cdot tr( \rho_i) = \sum_{i=1}^4 \abs{\lambda_i}\cdot \norm{ \rho_i}\leq \sum_{i=1}^4 \abs{\lambda_i}\cdot \norm{ \rho} \leq 4 \cdot \norm{ \rho}
        \end{align*} which shows that $\overline{\semanticsRho{S}}$ is bounded.
    \end{enumerate}
\end{proof}

\subsection{Proofs Concerning Weakest Preconditions}
\label{app:wp}

Now we show the existence of the weakest precondition, Lemma~\ref{lem:schroedinger}:
\begin{proof}
    We use the Schrödinger-Heisenberg dual \cite[Lemma 35 (ii),(iii),(vi)]{heisenbergdualityUnruh}:
    For every bounded linear, positive, trace-reducing map $f: T(\mathcal{H}) \rightarrow T( \mathcal{H})$ exists a bounded linear, positive, subunital map $g: B(\mathcal{H}) \rightarrow B(\mathcal{H})$ with $tr(f(t)a) = tr(g(a) t)$ for all $t\in T(\mathcal{H}), a \in B(\mathcal{H})$.

    We define $f:\traceclass \to \traceclass$ as $f(\rho) = \overline{\semanticsRho{S}}(\rho,0)$ for all $\rho\in \traceclass$. By Proposition~\ref{pro:allClaims} (and Lemma~\ref{lem:bounded}) and because $f$ is defined as the extension of $\semanticsRho{S}$, $f$ bounded linear, positive and trace-reducing.
    Then there exists a bounded linear, positive, subunital function $\Bar{g}:\bounded \to \bounded$ with $tr(f(t)a) = tr(g(a) t)$ for all $t\in T(\mathcal{H}), a \in B(\mathcal{H})$ by the Schrödinger-Heisenberg duality.
    We define $g:\predicate \to \predicate$ as the restriction of $\Bar{g}$ to $\predicate$. $g$ will later turn out to be the $qwp\llbracket S \rrbracket$. $g$ is well-defined, in the sense that for each predicate $P \in \predicate$, $g(P) \in \predicate$ holds. As $g$ is positive and $P$ is positive, $g(P)$ is also positive ($\zeroOp \sqsubseteq g(P)$). Also $P\sqsubseteq \identityOp$ implies $\identityOp - P$ to be positive. Then $g(\identityOp-P)$ is positive as well, by linearity also $g(\identityOp)- g(P)$, that means $g(P) \sqsubseteq g(\identityOp)$. $g$ is subunital, which means $g(\identityOp) \sqsubseteq \identityOp$. Thus $ g(P) \sqsubseteq \identityOp$. In total, we have showed $\zeroOp \sqsubseteq g(P) \sqsubseteq \identityOp$ which means $g(P)\in \predicate$.

    To show that $qwp\llbracket S \rrbracket$ exist and $qwp\llbracket S \rrbracket = g$, we have to show that for every $P\in \predicate$
    \begin{enumerate}
        \item $\models_{tot}\hoare{g(P)}{S}{P}$ because forall $\rho\in \density$ is $tr(g(P)\rho) = tr(P f(\rho)) = tr(P \semanticsRho{S}(\rho,0))$
        \item For any $Q\in \predicate$ with $\models_{tot}\hoare{Q}{S}{P}$ is $Q\sqsubseteq g(P)$. This is because for all $\rho \in \density$ \begin{equation*}
            tr(Q\rho) \leq tr(P\semanticsRho{S}(\rho,0)) = tr(P f(\rho)) = tr(g(P)\rho)
        \end{equation*}
        which means $Q\sqsubseteq g(P)$ \cite{floydHoareLogic}.
    \end{enumerate}
    Now we have shown that $qwp\llbracket S \rrbracket$ exists, is bounded linear, positive and subunital and satisfies $tr(\qwp{S}{P}\rho) = tr(P \semanticsRho{S}(\rho,0))$ for all $\rho \in \density, P \in \predicate$.

    For showing the second part of the statement, assume there is a function $h:\predicate \to \predicate$ that satisfies $tr(h(P)\rho) = tr(P\semanticsRho{S}(\rho,0))$ for each $\rho \in \density, P\in \predicate$, which would imply $tr(h(P)\rho)=tr(\qwp{S}{P}\rho)$. We can rewrite $\rho=\psi \psi^\dagger$ and get $tr(h(P)\psi \psi^\dagger)=tr(\psi^\dagger h(P)\psi) = \psi^\dagger h(P)\psi$ and analogous $tr(\qwp{S}{P}\psi \psi^\dagger) = \psi^\dagger \qwp{S}{P}\psi$ which then means that $h(P)= \qwp{S}{P}$ by \cite[Chapter II, Proposition 2.15]{conway1994} for each $P$, thus they are equal.

\end{proof}

Now we also prove the existence of weakest liberal precondition, Lemma~\ref{lem:wlpexistence}:
\begin{proof}
    To show the existence of $qwlp$, we point back to the existence of $qwp$ and write $qwlp$ in terms of $qwp$.

    Again, we use the Schrödinger-Heisenberg dual \cite[Lemma 35 (ii), (iii), (vi)]{heisenbergdualityUnruh} that says: For every bounded linear, positive, trace-reducing map $f: \traceclass \to T(\mathbb{C})$ there exists a bounded linear, positive, subunital map $\Bar{g}:B(\mathbb{C}) \to \bounded$ with $tr(f(\rho) c) = tr(\Bar{g}(c)\rho)$ for all $\rho \in T(\mathcal{H}),c\in B(\mathbb{C})$.
    Note that $B(\mathbb{C})$, $T(\mathbb{C})$ and $\mathbb{C}$ are basically the same set, so we will use $\mathbb{C}$ in the following.

    We define $f:\traceclass \to \mathbb{C}$ as $f(\rho)=\overline{\semanticsErr{S}}(\rho,0)$. From Proposition~\ref{pro:allClaims} and the fact that $f$ is defined as the extension of $\semanticsErr{S}$ follows that $f$ is positive, bounded linear and trace-reducing. So there exists a bounded linear, positive, subunital map $\Bar{g}:\mathbb{C} \to \bounded$. We restrict $\Bar{g}$ to $g:\mathbb{R}_{[0,1]} \to \predicate$. We now show that $g$ is well-defined, i.e., $g(c)\in \predicate$ for $c \in \mathbb{R}_{[0,1]} $.
    First, $\zeroOp \sqsubseteq g(c)$ for all $c \in \mathbb{R}_{[0,1]} $ because $\Bar{g}$ and $c$ are positive, which means that $\Bar{g}(c)$ must be positive, so $\zeroOp \sqsubseteq g(c)$.
    Second, $\Bar{g}$ is subunital, so $\Bar{g}(1) \sqsubseteq \identityOp$ and thus also $g(1)\sqsubseteq \identityOp$. Also $\Bar{g}(1-c)$ is positive for $c\in \mathbb{R}_{[0,1]}$, then $g(1-c)$ is positive and due to the linearity also $g(1)-g(c)$. That means $g(c)\sqsubseteq g(1)$ and together with $g(1)\sqsubseteq \identityOp$ is $g(c) \sqsubseteq \identityOp$ and thus a predicate.


    Now we state the following claim $(*)$:
    \begin{align*}
        \qwlp{S}{P} = \qwp{S}{P} - \qwp{S}{\identityOp} + \identityOp - g(1)
    \end{align*}
    The existence of $\qwp{S}{P}$ and $\qwp{S}{\identityOp}$ follows from Lemma~\ref{lem:schroedinger}. The existence of $\identityOp$ is clear and we have just proven the existence of $g$.

    We have to show that $\qwlp{S}{P} \in \predicate$ for $P\in \predicate$. To do so, we first show $\qwlp{S}{P}\sqsubseteq \identityOp$:
    \begin{align*}
         &\phantom{\overset{qwp\llbracket S \rrbracket \text{ is pos.}}{\Leftrightarrow}.}P \sqsubseteq  \identityOp \\
        &\overset{\phantom{qwp\llbracket S \rrbracket \text{ is pos.}}}{\Leftrightarrow} \zeroOp \sqsubseteq  \identityOp-P \\
        &\overset{qwp\llbracket S \rrbracket \text{ is pos.}}{\Leftrightarrow} \zeroOp \sqsubseteq  \qwp{S}{\identityOp-P} \\
        &\overset{qwp\llbracket S \rrbracket \text{ is lin.}\phantom{.}}{\Leftrightarrow} \zeroOp \sqsubseteq  \qwp{S}{\identityOp} -\qwp{S}{P} \\
        &\overset{\phantom{qwp\llbracket S \rrbracket \text{ is pos.}}}{\Leftrightarrow} \qwp{S}{P} \sqsubseteq  \qwp{S}{\identityOp} \\
        &\overset{\phantom{xs.}g \text{ is pos.}\phantom{xs.}}{\Rightarrow} \qwp{S}{P} \sqsubseteq  \qwp{S}{\identityOp} + g(1)\\
        &\overset{\phantom{qwp\llbracket S \rrbracket \text{ is pos.}}}{\Leftrightarrow} \qwp{S}{P} - \qwp{S}{\identityOp} - g(1) \sqsubseteq \zeroOp\\
        &\overset{\phantom{qwp\llbracket S \rrbracket \text{ is pos.}}}{\Leftrightarrow} \qwp{S}{P} - \qwp{S}{\identityOp} + \identityOp - g(1)\sqsubseteq \identityOp\\
        &\overset{\phantom{qwp\llbracket S \rrbracket \text{ is pos.}}}{\Leftrightarrow}\qwlp{S}{P} \sqsubseteq \identityOp
    \end{align*}
    Then $\zeroOp\sqsubseteq \qwlp{S}{P}$ because
    \begin{align*}
        &\phantom{\overset{qwp\llbracket S \rrbracket\text{ is pos.}}{\Rightarrow} .} \forall \rho \in \density: tr(\semanticsRho{S}(\rho,0)) + \semanticsErr{S}(\rho,0) \leq tr(\rho)\\
        &\overset{\phantom{qwp\llbracket S \rrbracket\text{ is pos.}}}{\Leftrightarrow} \forall \rho \in \density: tr(\identityOp\semanticsRho{S}(\rho,0)) + tr(1 \cdot \semanticsErr{S}(\rho,0)) \leq tr(\identityOp \rho)\\
        &\overset{\phantom{xs.s}\text{duality}\phantom{xss}}{\Leftrightarrow} \forall \rho \in \density: tr(\qwp{S}{\identityOp}\rho) + tr(g(1)\rho) \leq tr(\identityOp \rho)\\
        &\overset{\phantom{qwp\llbracket S \rrbracket\text{ is pos.}}}{\Leftrightarrow}\qwp{S}{\identityOp} + g(1) \sqsubseteq \identityOp \\
        &\overset{\phantom{qwp\llbracket S \rrbracket\text{ is pos.}}}{\Leftrightarrow} \zeroOp\sqsubseteq \identityOp - \qwp{S}{\identityOp}  - g(1) \\
        &\overset{qwp\llbracket S \rrbracket\text{ is pos.}}{\Rightarrow} \zeroOp\sqsubseteq \qwp{S}{P} - \qwp{S}{\identityOp} + \identityOp - g(1) \\
        &\overset{\phantom{qwp\llbracket S \rrbracket\text{ is pos.}}}{\Leftrightarrow} \zeroOp\sqsubseteq \qwlp{S}{P}
    \end{align*}
    which together with $\qwlp{S}{P}\sqsubseteq \identityOp$ means $\qwlp{S}{P}\in \predicate$.

    We now show that this is actually weakest liberal precondition by showing for every $P\in \predicate$:
    \begin{enumerate}
        \item $\models_{par}\hoare{\qwlp{S}{P}}{S}{P}$ holds because \begin{align*}
            &\phantom{\overset{\Bar{g}(1) = g(1)}{=}.}tr(\qwlp{S}{P}\rho) \\
            &\overset{\phantom{..a.}(*)\phantom{.a..}}{=} tr ((\qwp{S}{P} - \qwp{S}{\identityOp} + \identityOp - g(1)) \rho) \\
            &\overset{\Bar{g}(1) = g(1)}{=} tr(\qwp{S}{P}\rho) - tr(\qwp{S}{\identityOp}\rho) + tr(\identityOp\rho) - tr(\Bar{g}(1)\rho)\\
            &\overset{\phantom{aa.}\text{dual.}\phantom{a.}}{=} tr(\qwp{S}{P}\rho) - tr(\qwp{S}{\identityOp}\rho) + tr(\identityOp\rho) - tr(1 \cdot f(\rho))\\
            &\overset{\phantom{..a.}\text{def }f\phantom{.a}}{=} tr(P\semantics{S}(\rho)) - tr(\identityOp \semantics{S}(\rho)) + tr(\rho) - tr(1\cdot \semanticsErr{S}(\rho,0))\\
            &\overset{\phantom{\Bar{g}(1) = g(1)}}{=} tr(P \semantics{S}(\rho)) - tr(\semantics{S}(\rho)) + tr(\rho) - \semanticsErr{S}(\rho,0).
        \end{align*}
        \item For any $Q\in \predicate$ with $\models_{par}\hoare{Q}{S}{P}$ is $Q\sqsubseteq \qwlp{S}{P}$. This is because for all $\rho \in \density$ \begin{align*}
            tr(Q\rho) \leq tr(P \semantics{S}(\rho)) - tr(\semantics{S}(\rho)) + tr(\rho) - \semanticsErr{S}(\rho,0) = tr(\qwlp{S}{P}\rho)
        \end{align*}
        so $Q \sqsubseteq \qwlp{S}{P}$.
    \end{enumerate}

    It remains to show that $\qwlp{S}{P}$ is bounded linear and subunital. Bounded linear follows because $qwp\llbracket S \rrbracket, \identityOp$ and $g$ are bounded linear. Subunital follows because
    \begin{align*}
        \qwlp{S}{\identityOp} = \qwp{S}{\identityOp} - \qwp{S}{\identityOp} + \identityOp - g(1) = \identityOp - g(1) \sqsubseteq \identityOp.
    \end{align*}

    For showing the second part of the statement, assume there is a function $h:\predicate \to \predicate$ that satisfies $tr(h(P)\rho) = tr(P \semantics{S}(\rho)) - tr(\semantics{S}(\rho)) + tr(\rho) - \semanticsErr{S}(\rho,0)$ for each $\rho \in \density, P\in \predicate$, which would imply $tr(h(P)\rho)=tr(\qwlp{S}{P}\rho)$. We can rewrite $\rho=\psi \psi^\dagger$ and get $tr(h(P)\psi \psi^\dagger)=tr(\psi^\dagger h(P)\psi) = \psi^\dagger h(P)\psi$ and analogous $tr(\qwp{S}{P}\psi \psi^\dagger) = \psi^\dagger \qwlp{S}{P}\psi$ which then means that $h(P)= \qwlp{S}{P}$ by \cite[Chapter II, Proposition 2.15]{conway1994} for each $P$, thus they are equal.
\end{proof}

Proof of Proposition~\ref{prop:healthWP}:
\begin{proof}Let $P,Q \in \predicate$.
    \begin{itemize}
        \item Bounded linear: Follows directly from Lemma~\ref{lem:schroedinger}
        \item Subunital: Follows directly from Lemma~\ref{lem:schroedinger}.
        \item Monotonic:
        \begin{align*}
            &\phantom{\Rightarrow.}P\sqsubseteq Q \\
            &\Rightarrow \forall \rho \in \density: tr(P \semanticsRho{S}(\rho,0)) \leq tr(Q \semanticsRho{S}(\rho,0))\\
            &\Rightarrow  \forall \rho \in \density: tr(\qwp{S}{P} \rho) \leq tr(\qwp{S}{Q}\rho)\\
            &\Rightarrow  \qwp{S}{P} \sqsubseteq \qwp{S}{Q}
        \end{align*}
        \item Order-continuous:
        \begin{align*}
            \forall \rho \in \density: &\phantom{=}tr(\qwp{S}{\bigvee_{i=0}^\infty P_i} \rho)= tr((\bigvee_{i=0}^\infty P_i) \semanticsRho{S}(\rho,0))\\
            &= \bigvee_{i=0}^\infty tr(P_i \semanticsRho{S}(\rho,0))= \bigvee_{i=0}^\infty tr(\qwp{S}{P_i}\rho)=  tr(\bigvee_{i=0}^\infty \qwp{S}{P_i}\rho)
        \end{align*}
        that implies $\qwp{S}{\bigvee_{i=0}^\infty P_i} = \bigvee_{i=0}^\infty \qwp{S}{P_i}$.
    \end{itemize}
\end{proof}

Proof of Proposition~\ref{prop:healthWLP}:
\begin{proof}
    Let $P,Q \in \predicate$.
    \begin{itemize}
        \item Affine:
        We have to show that $f:\predicate \to \predicate$ with $f(P)=\qwlp{S}{P} - \qwlp{S}{\zeroOp}$ is linear. To do so, we use $\qwlp{S}{P} = \qwp{S}{P} - \qwp{S}{\identityOp} + \identityOp - g(1)$ as in the proof of Lemma~\ref{lem:wlpexistence}, then $f(P) = \qwp{S}{P}-\qwp{S}{\zeroOp}$. By the linearity of $qwp\llbracket S \rrbracket$ is $\qwp{S}{\zeroOp} = \zeroOp$, thus $f(P) = \qwp{S}{P}$ and $f$ is linear.
        \item Subunital: Follows directly from Lemma~\ref{lem:wlpexistence}.
        \item Monotonic:
        \begin{align*}
            & \phantom{\Rightarrow.}P\sqsubseteq Q \\
            &\Rightarrow \forall \rho \in \density: tr(P \semanticsRho{S}(\rho,0)) \leq tr(Q \semanticsRho{S}(\rho,0))\\
            &\Rightarrow \forall \rho \in \density: tr(P \semanticsRho{S}(\rho,0)) + tr(\rho) - tr(\semanticsRho{S}(\rho,0)) - \semanticsErr{S}(\rho,0)\\
            &\phantom{\Rightarrow \forall \rho \in \density:}\leq tr(Q \semanticsRho{S}(\rho,0)) + tr(\rho) - tr(\semanticsRho{S}(\rho,0)) - \semanticsErr{S}(\rho,0)\\
            &\Rightarrow  \forall \rho \in \density: tr(\qwlp{S}{P} \rho) \leq tr(\qwlp{S}{Q}\rho)\\
            &\Rightarrow  \qwlp{S}{P} \sqsubseteq \qwlp{S}{Q}
        \end{align*}
        \item Order-continuous:
        \begin{align*}
            \forall \rho \in \density: & \phantom{=}tr(\qwlp{S}{\bigvee_{i=0}^\infty P_i} \rho)\\
            &= tr((\bigvee_{i=0}^\infty P_i) \semanticsRho{S}(\rho,0)) + tr(\rho) - tr(\semanticsRho{S}(\rho,0)) - \semanticsErr{S}(\rho,0)\\
            &= \bigvee_{i=0}^\infty tr(P_i \semanticsRho{S}(\rho,0)) + tr(\rho) - tr(\semanticsRho{S}(\rho,0)) - \semanticsErr{S}(\rho,0)\\
            &= \bigvee_{i=0}^\infty tr(\qwlp{S}{P_i}\rho)=  tr(\bigvee_{i=0}^\infty \qwlp{S}{P_i}\rho)
        \end{align*}
        that implies $\qwlp{S}{\bigvee_{i=0}^\infty P_i} = \bigvee_{i=0}^\infty \qwlp{S}{P_i}$.
    \end{itemize}
\end{proof}

Proof of equivalence between weakest preconditions and explicit representation, Proposition~\ref{prop:wpDef}:
\begin{proof}
    We prove that $tr(\qwp{S}{P}\rho) = tr(P\semanticsRho{S}(\rho,0))$ holds for each $\rho \in \density$. 

    We only show the observe-case the other cases are done to \cite{floydHoareLogic}. Note that, \cite{floydHoareLogic} does not talk about convergence of sums. In our case we mean convergence with respect to the SOT and it follows from \cite[Lem. 29]{heisenbergdualityUnruh} that the supremum and limit coincides in the SOT and if the norm of each element is upper bounded, then the limit exists. Note that we only show convergence of the measurement case, the assignment is just a special case of it.
    That means, we have to show that $\norm{\sum_{m \in M'} M_m^\dagger \qwp{S'_m}{P} M_m}$ is bounded for every finite set $M'\subseteq M$. It is $\sum_{m \in M'} M_m^\dagger \qwp{S'_m}{P} M_m \sqsubseteq \sum_{m \in M'} M_m^\dagger M_m \sqsubseteq \identityOp$ because $\zeroOp \sqsubseteq \qwp{S'_m}{P}\sqsubseteq \identityOp$. Then
    $\norm{\sum_{m \in M'} M_m^\dagger \qwp{S'_m}{P} M_m} \leq \norm{\identityOp} = 1$
    and \begin{align*}
        \bigvee_{\text{finite }M' \subseteq M} \sum_{m \in M'} M_m^\dagger \qwp{S'_m}{P} M_m = \sum_{m \in M} M_m^\dagger \qwp{S'_m}{P} M_m.
    \end{align*}
    For the observe case is $
            tr(\qwp{\observe}{P}\rho) = tr(O^\dagger P O \rho) = tr(P O \rho O^\dagger ) $\\
            $= tr(P \semanticsRho{\observe}(\rho,0))$
        for all $\rho \in \density$.

Note that it would not be necessary to do an induction over the structure. Showing the trace equality for each statement separately would also work due to Lemma~\ref{lem:schroedinger}.
\end{proof}

Quite similar to the total correctness case, we also prove that the explicit representation of weakest liberal preconditions given in Proposition~\ref{prop:wlpDef} is correct:
\begin{proof}
    We prove $tr(\qwlp{S}{P}\rho) = tr(P\semanticsRho{S}(\rho,0)) + tr(\rho)-tr(\semanticsRho{S}(\rho,0))- \semanticsErr{S}(\rho,0)$ for each $\rho \in \density$ for each program $S$. 
    Convergence of the sums follows by the same arguments as in the proof of Proposition~\ref{prop:wpDef}.

    Again, it is not needed to do an induction due to Lemma~\ref{lem:wlpexistence}, we can show it for each statement separately:
    \begin{itemize}
        \item $S\equiv \skipbf \mid \qzero \mid \Uq \mid \observe$: It is $\qwlp{S}{P} = \qwp{S}{P}$ and thus
        \begin{align*}
            tr(\qwlp{S}{P}\rho) &= tr(P\semanticsRho{S}(\rho,0))\\
            &= tr(P\semanticsRho{S}(\rho,0)) + tr(\rho)-tr(\semanticsRho{S}(\rho,0))- \semanticsErr{S}(\rho,0)
        \end{align*}
        for all $\rho \in \density$ because because $tr(\rho)=tr(\semanticsRho{S}(\rho,0))$ is proven in Proposition~\ref{pro:allClaims} and $\semanticsErr{S}(\rho,0) = 0$ for all cases except for the observe statement. For $S \equiv \observe$ is $tr(\rho)= tr(\semanticsRho{S}(\rho,0))+ \semanticsErr{S}(\rho,0)$ because $\semanticsErr{S}(\rho,0)=0+tr(\rho) - tr(O\rho O^\dagger)$ and $\semanticsRho{S}(\rho,0) = O\rho O^\dagger$.

        \item $S\equiv \concat$: For all $\rho \in \density$
        \begin{align*}
            &\phantom{=.} tr(\qwlp{\concat}{P}\rho) \\
            &= tr(\qwlp{S_1}{\qwlp{S_2}{P}} \rho)\\
            &= tr(\qwlp{S_2}{P} \semanticsRho{S_1}(\rho,0)) + tr(\rho)-tr(\semanticsRho{S_1}(\rho,0))- \semanticsErr{S_1}(\rho,0)\\
            &= tr(P \semanticsRho{S_2}(\semanticsRho{S_1}(\rho,0),0)) + tr(\semanticsRho{S_1}(\rho,0))-tr(\semanticsRho{S_2}(\semanticsRho{S_1}(\rho,0),0))\\
            &\phantom{=} - \semanticsErr{S_2}(\semanticsRho{S_1}(\rho,0),0) + tr(\rho)-tr(\semanticsRho{S_1}(\rho,0))- \semanticsErr{S_1}(\rho,0)\\
            &= tr(P \semanticsRho{\concat}(\rho,0)) - tr(\semanticsRho{\concat}(\rho,0))- \\
            & \phantom{=}[\semanticsErr{S_2}(\semanticsRho{S_1}(\rho,0),\semanticsErr{S_1}(\rho,0)) - \semanticsErr{S_1}(\rho,0)] + tr(\rho) - \semanticsErr{S_1}(\rho,0)\\
            &= tr(P \semanticsRho{\concat}(\rho,0)) - tr(\semanticsRho{\concat}(\rho,0))\\
            &\phantom{=}-\semanticsErr{S_2}(\semantics{S_1}(\rho,0)) + \semanticsErr{S_1}(\rho,0) + tr(\rho) - \semanticsErr{S_1}(\rho,0)\\
            &= tr(P \semanticsRho{\concat}(\rho,0)) + tr(\rho)-tr(\semanticsRho{\concat}(\rho,0))- \semanticsErr{\concat}(\rho,0)
        \end{align*}

        \item $S\equiv \measurePrime $:
        For all $\rho \in \density$
        \begin{align*}
            &\phantom{=.} tr(\qwlp{\measurePrime}{P}\rho) \\
            &= tr(\sum_m M_m^\dagger (\qwlp{S_m}{P}) M_m \rho) \\
            &= \sum_m tr(M_m^\dagger (\qwlp{S_m}{P}) M_m \rho) \\
            &= \sum_m tr((\qwlp{S_m}{P}) M_m \rho M_m^\dagger) \\
            &= \sum_m tr(P \semanticsRho{S_m}(M_m \rho M_m^\dagger,0)) + tr(M_m \rho M_m^\dagger)\\
            &\phantom{=}-tr(\semanticsRho{S_m}(M_m \rho M_m^\dagger,0))- \semanticsErr{S_m}(M_m \rho M_m^\dagger,0) \\
            &= \sum_m tr(P \semanticsRho{S_m}(M_m \rho M_m^\dagger,0)) + tr(\rho M_m^\dagger M_m )\\
            &\phantom{=}-tr(\semanticsRho{S_m}(M_m \rho M_m^\dagger,0))- \semanticsErr{S_m}(M_m \rho M_m^\dagger,0) \\
            &= tr(P \sum_m \semanticsRho{S_m}(M_m \rho M_m^\dagger,0)) + tr(\rho \sum_m M_m^\dagger M_m )\\
            & \phantom{=}- tr(\sum_m \semanticsRho{S_m}(M_m \rho M_m^\dagger,0))- \sum_m \semanticsErr{S_m}(M_m \rho M_m^\dagger,0) - 0\\
            &= tr(P \semanticsRho{\measurePrime}(\rho,0)) + tr(\rho)\\
            &\phantom{=}-tr(\semanticsRho{\measurePrime}(\rho,0))- \semanticsErr{\measurePrime}(\rho,0)
        \end{align*}

        \item $S\equiv \while' $:
        The goal is to show
        \begin{align*}
            &\phantom{=.}tr(\qwlp{\while'}{P}\rho)\\
            &= tr(P \semanticsRho{\while'}(\rho,0)) + tr(\rho)\\
            &\phantom{=}-tr(\semanticsRho{\while'}(\rho,0))- \semanticsErr{\while'}(\rho,0)
        \end{align*}
        for all $\rho \in \density$. First, we show
        \begin{align*}
            &tr(P_n \rho) = tr(P \semanticsRho{(\while')^n}(\rho,0))+ tr(\rho)\\
            &\phantom{tr(P_n \rho) = }-tr(\semanticsRho{(\while')^n}(\rho,0))\\
            & \phantom{tr(P_n \rho) = }- \semanticsErr{(\while')^n}(\rho,0)
        \end{align*}
        for all $\rho \in \density$ by induction:
        \begin{itemize}
            \item $n=0$ then
            \begin{align*}
                &\phantom{=.} tr(P_0 \rho) = tr(\identityOp \rho) = tr(\rho) \\
                &= 0 + tr(\rho) - 0 - 0= tr(P \Omega(\rho,0)) + tr(\rho) - tr(\Omega(\rho,0)) - \semanticsErr{\Omega}(\rho,0) \\
                &= tr(P \semanticsRho{(\while')^0}(\rho,0))+ tr(\rho) \\
                &\phantom{=} - tr(\semanticsRho{(\while')^0}(\rho,0))- \semanticsErr{(\while')^0}(\rho,0)
            \end{align*}
            \item $n=n+1$ then
            \begin{align*}
                &\phantom{=.}tr(P_{n+1} \rho)\\
                &=  tr (([M_0^\dagger P  M_0] + [M_1^\dagger(\qwlp{S'}{P_n}) M_1]) \rho)\\
                &= tr([M_0^\dagger P  M_0]\rho  + [M_1^\dagger(\qwlp{S'}{P_n}) M_1] \rho)\\
                &=  tr(M_0^\dagger P  M_0 \rho) + tr(M_1^\dagger(\qwlp{S'}{P_n}) M_1 \rho)\\
                &=  tr( P  M_0 \rho M_0^\dagger) + tr((\qwlp{S'}{P_n}) M_1 \rho M_1^\dagger)\\
                &= tr( P  M_0 \rho M_0^\dagger) + tr(P_n \semanticsRho{S'}(M_1 \rho M_1^\dagger,0)) + tr(M_1 \rho M_1^\dagger)-tr(\semanticsRho{S'}(M_1 \rho M_1^\dagger,0))\\
                &\phantom{=}- \semanticsErr{S'}(M_1 \rho M_1^\dagger,0)\\
                &= tr( P  M_0 \rho M_0^\dagger) + tr(P \semanticsRho{(\while')^n}(\semanticsRho{S'}(M_1 \rho M_1^\dagger,0),0))\\
                &\phantom{=}+ tr(\semanticsRho{S'}(M_1 \rho M_1^\dagger,0)) -tr(\semanticsRho{(\while')^n}(\semanticsRho{S'}(M_1 \rho M_1^\dagger,0),0))\\
                &\phantom{=}- \semanticsErr{(\while')^n}(\semanticsRho{S'}(M_1 \rho M_1^\dagger,0),0)+ tr(M_1 \rho M_1^\dagger)\\
                &\phantom{=}-tr(\semanticsRho{S'}(M_1 \rho M_1^\dagger,0))- \semanticsErr{S'}(M_1 \rho M_1^\dagger,0)\\
                &= tr(P M_0 \rho M_0^\dagger) + tr(P \semanticsRho{S';(\while')^n}(M_1 \rho M_1^\dagger,0))\\
                &\phantom{=}+tr(M_1 \rho M_1^\dagger)-tr(\semanticsRho{(\while')^n}(\semantics{S'}(M_1 \rho M_1^\dagger,0)))\\
                &\phantom{=}-\semanticsErr{(\while')^{n}}(\semanticsRho{S'}(M_1\rho M_1^\dagger,0),0) - \semanticsErr{S'}(M_1\rho M_1^\dagger,0)
                \\
                &= tr(P M_0 \rho M_0^\dagger) + tr(P \semanticsRho{S';(\while')^n}(M_1 \rho M_1^\dagger,0))+ tr(\rho)\\
                &\phantom{=}+tr(M_1 \rho M_1^\dagger)-tr(\rho) - tr(\semanticsRho{(\while')^n}(\semantics{S'}(M_1 \rho M_1^\dagger,0)))\\
                &\phantom{=}-\semanticsErr{(\while')^{n}}(\semanticsRho{S'}(M_1\rho M_1^\dagger,0),0) - \semanticsErr{S'}(M_1\rho M_1^\dagger,0)
                \\
                &= tr(P M_0 \rho M_0^\dagger) + tr(P \semanticsRho{S';(\while')^n}(M_1 \rho M_1^\dagger,0))+ tr(\rho)\\
                &\phantom{=}-tr(M_0 \rho M_0^\dagger)- tr(\semanticsRho{(\while')^n}(\semantics{S'}(M_1 \rho M_1^\dagger,0)))\\
                &\phantom{=}-\semanticsErr{(\while')^{n}}(\semanticsRho{S'}(M_1\rho M_1^\dagger,0),0) - \semanticsErr{S'}(M_1\rho M_1^\dagger,0)\\
                &= tr(P M_0 \rho M_0^\dagger) + tr(P \semanticsRho{S';(\while')^n}(M_1 \rho M_1^\dagger,0))+ tr(\rho)\\
                &\phantom{=}-tr(M_0 \rho M_0^\dagger) -tr(\semanticsRho{S';(\while')^n}(M_1 \rho M_1^\dagger,0))\\
                &\phantom{=}- \semanticsErr{(\while')^{n}}(\semantics{S'}(M_1\rho M_1^\dagger,0))\\
                &= tr(P[M_0 \rho M_0^\dagger + \semanticsRho{S';(\while')^n}(M_1 \rho M_1^\dagger,0)])+ tr(\rho)\\
                &\phantom{=}-tr(M_0 \rho M_0^\dagger + \semanticsRho{S';(\while')^n}(M_1 \rho M_1^\dagger,0))\\
                & \phantom{=}- 0 -\semanticsErr{S';(\while')^{n}}(M_1\rho M_1^\dagger,0)\\
                &= tr(P \semanticsRho{(\while')^{n+1}}(\rho,0))+ tr(\rho)\\
                &\phantom{=}-tr(\semanticsRho{(\while')^{n+1}}(\rho,0))\\
                &\phantom{=}- \semanticsErr{(\while')^{n+1}}(\rho,0)
            \end{align*}
        \end{itemize}
        Thus
        \begin{align*}
            &\phantom{=.}tr(P_n \rho)\\
            &= tr(P \semanticsRho{(\while')^n}(\rho,0))+ tr(\rho)\\
            &\phantom{=}-tr(\semanticsRho{(\while')^n}(\rho,0))- \semanticsErr{(\while')^n}(\rho,0)\\
            &= tr(\rho)- tr((\identityOp-P)\semanticsRho{(\while')^n}(\rho,0)) \\
            &\phantom{=}- \semanticsErr{(\while')^n}(\rho,0)
        \end{align*}
        It is $P \sqsubseteq \identityOp$, thus $\identityOp-P$ is a positive operator.
        We know that $\{\semanticsRho{(\while)^{n}}(\rho,0)\}_{n=0}^\infty$ is an increasing sequence, $P$ is positive, so
        \begin{align*}
            &\phantom{=.}tr(P_n \rho) \\
            &= tr(\rho)- tr((\identityOp-P)\semanticsRho{(\while')^n}(\rho,0)) \\
            &\phantom{=}- \semanticsErr{(\while')^n}(\rho,0)\\
            &\geq  tr(\rho)- tr((\identityOp-P)\semanticsRho{(\while')^{n+1}}(\rho,0)) \\
            &\phantom{=}- \semanticsErr{(\while')^{n+1}}(\rho,0)\\
            &= tr(P_{n+1}\rho).
        \end{align*}
        Since $\rho$ is arbitrary, we have $P_n \sqsupseteq P_{n+1}$ and $\{\identityOp-P_n\}_{n=0}^\infty$ is an increasing sequence and $\sqsubseteq$ an $\omega$-cpo, so $\bigvee_{n=0}^\infty \identityOp-P_n$ exits. Then
        \begin{equation*}
            \bigwedge_{n=0}^\infty P_n = \identityOp - \bigvee_{n=0}^\infty (\identityOp-P_n)
        \end{equation*}
        exists too.

        By continuity of the trace operator we have
        \begin{align*}
            &\phantom{=.}tr(\qwlp{\while'}{P}\rho)= tr((\bigwedge_{n=0}^\infty P_n) \rho) = tr((\identityOp - \bigvee_{n=0}^\infty (\identityOp-P_n)) \rho)\\
            &= tr(\rho - \bigvee_{n=0}^\infty (\identityOp-P_n)\rho)=tr(\rho)- \bigvee_{n=0}^\infty tr((\identityOp-P_n)\rho)\\
            &= \bigwedge_{n=0}^\infty [tr(\rho)-tr((\identityOp-P_n)\rho)]= \bigwedge_{n=0}^\infty tr(P_n\rho)\\
            &= \bigwedge_{n=0}^\infty [tr(\rho)- tr((\identityOp-P)\semanticsRho{(\while')^n}(\rho,0)) \\
            &\phantom{=}- \semanticsErr{(\while')^n}(\rho,0)]\\
            &= tr(\rho)- \bigvee_{n=0}^\infty [tr((\identityOp-P)\semanticsRho{(\while')^n}(\rho,0)) \\
            &\phantom{=}+ \semanticsErr{(\while')^n}(\rho,0)]\\
            &= tr(\rho)- tr((\identityOp-P)\bigvee_{n=0}^\infty \semanticsRho{(\while')^n}(\rho,0)) \\
            &\phantom{=}- \bigvee_{n=0}^\infty\semanticsErr{(\while')^n}(\rho,0)\\
            &= tr(\rho)- tr((\identityOp-P)\semanticsRho{\while'}(\rho,0)) \\
            &\phantom{=}- \semanticsErr{\while'}(\rho,0)\\
            &= tr(P \semanticsRho{\while'}(\rho,0)) + tr(\rho)-tr(\semanticsRho{\while'}(\rho,0))\\
            &\phantom{=}- \semanticsErr{\while'}(\rho,0)
        \end{align*}
    \end{itemize}
\end{proof}

Here we give the explicit representation of $qcwp$:
\begin{lemma}
    \label{lem:explicitCWP}
    An explicit representation of the \textit{quantum conditional weakest precondition} transformer $qcwp\llbracket P \rrbracket: \predicate^2\to \predicate^2$ is given by:
\begin{itemize}
    \item $\qcwp{\skipbf}{P,Q}=(P,Q)$
    \item $\qcwp{\qzero}{P,Q}=
    \begin{cases}
        \Bigl( \ket{0}_q\bra{0}P\ket{0}_q \bra{0} + \ket{1}_q\bra{0} P \ket{0}_q \bra{1}, &\\
        \ket{0}_q\bra{0}Q\ket{0}_q \bra{0} + \ket{1}_q\bra{0} Q \ket{0}_q \bra{1} \Bigr)
         &,\text{ if }type(q)=\boolType \\
         \Bigl( \sum_{n\in \mathbb{Z}} \ket{n}_q \bra{0} P \ket{0}_q\bra{n},&\\
          \sum_{n\in \mathbb{Z}} \ket{n}_q \bra{0} Q \ket{0}_q \bra{n}\Bigr)
        & ,\text{ if }type(q)=\intType
    \end{cases}$
    \item $\qcwp{\Uq}{P,Q}= U^\dagger \multdot (P,Q) \multdot U$
    \item $\qcwp{ \observe}{P,Q} = O^\dagger \multdot (P,Q) \multdot O$
    \item $\qcwp{\concat}{P,Q}=\qcwp{S_1}{\qcwp{S_2}{P,Q} }  $
    \item $\qcwp{\measure}{P,Q} = \sum_m M_m^\dagger\multdot (\qcwp{S_m}{P,Q})\multdot M_m$
    \item $\qcwp{\while'}{P,Q}= \bigvee_{n=0}^\infty P_n$ with
    \begin{align*}
        P_0  &= (\zeroOp,\identityOp)\\
        P_{n+1} &= [M_0^\dagger \multdot (P,Q) \multdot M_0] \plusdot [M_1^\dagger \multdot\qcwp{S'}{P_n} \multdot M_1]
    \end{align*}
    and $\bigvee_{n=0}^\infty$ denoting the least upper bound according to $\unlhd$.
\end{itemize}
\end{lemma}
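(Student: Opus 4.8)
The plan is to unfold the definition $\qcwp{S}{P,Q} = (\qwp{S}{P}, \qwlp{S}{Q})$ from Definition~\ref{def:qcwp} and substitute the explicit representations of $qwp$ and $qwlp$ from Proposition~\ref{prop:wpDef} and Proposition~\ref{prop:wlpDef}. As remarked in those proofs, no induction over the program structure is strictly necessary here, so I would argue case by case over the syntax of $S$, showing in each case that the pair $(\qwp{S}{\cdot},\qwlp{S}{\cdot})$ rewrites into the ``paired'' form claimed in the Lemma.

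For the non-loop statements ($\skipbf$, $\qzero$, $\Uq$, $\observe$, $\concat$, $\measure$), Proposition~\ref{prop:wlpDef} states that $qwlp$ obeys exactly the same rule as $qwp$. Hence the first and second components of the pair undergo the same operation, and by the pointwise conventions for addition and for left/right multiplication by an operator on $\predicate^2$, this operation factors through the pair. For instance, for the observe statement, $O^\dagger \multdot (P,Q) \multdot O = (O^\dagger P O,\, O^\dagger Q O) = (\qwp{\observe}{P}, \qwlp{\observe}{Q})$, and analogously for the other non-loop cases; sequential composition follows because composition of the two transformers distributes over the pairing. This disposes of every case except the while-loop.

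For the while-loop, I would show that the coupled recurrence stated in the Lemma decomposes componentwise into the $qwp$- and $qwlp$-recurrences. Writing $P_n = (A_n, B_n)$, the pointwise operations give $A_{n+1} = M_0^\dagger P M_0 + M_1^\dagger \qwp{S'}{A_n} M_1$ with $A_0 = \zeroOp$, and $B_{n+1} = M_0^\dagger Q M_0 + M_1^\dagger \qwlp{S'}{B_n} M_1$ with $B_0 = \identityOp$, which are precisely the recurrences of Proposition~\ref{prop:wpDef} and Proposition~\ref{prop:wlpDef}. It then remains to match the fixed points. By the definition of $\unlhd$, the chain $\{(A_n,B_n)\}$ is $\unlhd$-increasing exactly when $\{A_n\}$ is $\sqsubseteq$-increasing and $\{B_n\}$ is $\sqsubseteq$-decreasing; the latter monotonicity is established in the proof of Proposition~\ref{prop:wlpDef} and the former follows by the same Kleene-iteration argument used for Proposition~\ref{prop:wpDef}. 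Using the pointwise description $\bigvee_{n}(A_n,B_n) = (\bigvee_n A_n,\, \bigwedge_n B_n)$ and the $\omega$-cpo lemma for $\unlhd$ proved just above, this lub exists and equals $(\qwp{\while'}{P}, \qwlp{\while'}{Q})$, as desired.

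The step I expect to require the most care is the while-loop: one must confirm that the single $\unlhd$-lub in the Lemma genuinely captures both the $\sqsubseteq$-least upper bound in the first coordinate and the $\sqsubseteq$-greatest lower bound in the second coordinate simultaneously. This hinges on the opposing monotonicity of the two component sequences together with the pointwise formula for $\bigvee$ under $\unlhd$. The remaining cases are routine unfoldings of the pointwise arithmetic on $\predicate^2$.
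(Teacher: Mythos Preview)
Your proposal is correct and matches the approach implied by the paper: the lemma is presented there as a direct consequence of Definition~\ref{def:qcwp} together with Propositions~\ref{prop:wpDef} and~\ref{prop:wlpDef}, and the paper does not spell out a proof. Your unfolding of the componentwise definition, the observation that the non-loop rules for $qwp$ and $qwlp$ coincide, and your handling of the while-loop via the pointwise formula $\bigvee_n (A_n,B_n)=(\bigvee_n A_n,\bigwedge_n B_n)$ under~$\unlhd$ are exactly the intended argument.
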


Proof of Proposition~\ref{prop:healthcwp}:
\begin{proof}
    Let $P,P',Q,Q' \in \predicate$ be predicates.
    \begin{itemize}
        \item Linear Interpretation:
        Let $a,b \in \mathbb{R}_{\geq 0}$ and $aP+bP' \in \predicate$.

         Let $\rho \in \density$ be arbitrary but fixed. Assume $tr(\qwlp{S}{\identityOp}\rho) \neq 0$ (otherwise both sides are undefined). Then
        \begin{align*}
            &\phantom{=.}\fracTr(\qcwp{S}{aP+bP',\identityOp} \multdot\rho) = \frac{tr(\qwp{S}{aP+bP'} \rho)}{tr(\qwlp{S}{\identityOp}\rho)} \\
            &= \frac{tr((a \cdot \qwp{S}{P} + b\cdot \qwp{S}{P'}) \rho)}{tr(\qwlp{S}{\identityOp}\rho)} \\
            &= \frac{tr(a \cdot \qwp{S}{P} \rho) + tr(b\cdot \qwp{S}{P'} \rho)}{tr(\qwlp{S}{\identityOp}\rho)} \\
            &=\frac{a \cdot tr( \qwp{S}{P} \rho)}{tr(\qwlp{S}{\identityOp}\rho)} + \frac{ b\cdot tr(\qwp{S}{P'} \rho)}{tr(\qwlp{S}{\identityOp}\rho)} \\
            &= a \cdot \fracTr(\qcwp{S}{P,\identityOp}\multdot \rho) + b\cdot\fracTr(\qcwp{S}{P',\identityOp}\multdot \rho)
        \end{align*}

        \item Affine: We have to show that $f:\predicate^2 \to \predicate^2$ with $f(P,Q)=\qcwp{S}{P,Q} - \qcwp{S}{\zeroOp,\zeroOp}$ is linear.
        It is \begin{align*}
            f(P,Q) &= \qcwp{S}{P,Q} - \qcwp{S}{\zeroOp,\zeroOp} \\
            &= (\qwp{S}{P},\qwlp{S}{Q})-(\qwp{S}{\zeroOp},{\qwlp{S}{\zeroOp}})\\
            &= (\qwp{S}{P}-\qwp{S}{\zeroOp},\qwlp{S}{Q}-\qwlp{S}{\zeroOp})\\
            &= (\qwp{S}{P},\qwlp{S}{Q}-\qwlp{S}{\zeroOp})
        \end{align*}
        We know that $qwp\llbracket S \rrbracket$ is linear and $qwlp\llbracket S \rrbracket$ affine, thus $f':\predicate \to \predicate$ with $f'(Q)=\qwlp{S}{Q}-\qwlp{S}{\zeroOp}$ is linear. So we can conclude that
        $f(P,Q)=(\qwp{S}{P}, f'(Q))$ is linear as well and thus $qcwp\llbracket S \rrbracket $ affine.
        \item Monotonic: Let $(P,P') \unlhd (Q,Q')$. Then $P \sqsubseteq Q$ and $P' \sqsupseteq Q'$. As $qwp$ and $qwlp$ are monotonic,
        \begin{align*}
            \qwp{S}{P} &\sqsubseteq \qwp{S}{Q},\\
            \qwlp{S}{P'} &\sqsupseteq \qwlp{S}{Q'}
        \end{align*}
        for every program $S$. That implies $
            \qcwp{S}{P,P'} \unlhd \qcwp{S}{Q,Q'}$.
        \item Order-continuous:
        First of all, $\unlhd$ is also an $\omega$-cpo on $\predicate^2$, thus  $\bigvee_{i=0}^\infty (P_i,Q_i)$ exists for any increasing chain $\{(P_i,Q_i)\}_{i \in \mathbb{N}}$.
    As $qwp$ and $qwlp$ are monotonic, the existence of $\bigvee_{i=0}^\infty \qcwp{S}{P_i,Q_i}$ follows directly.

    It remains to show that
        \begin{align*}
            \qcwp{S}{\bigvee_{i=0}^\infty (P_i,Q_i)} = \bigvee_{i=0}^\infty \qcwp{S}{P_i,Q_i}.
        \end{align*}

    First we have to show that $\bigwedge_{i=0}^\infty \qwlp{S}{Q_i} = \qwlp{S}{\bigwedge_{i=0}^\infty Q_i}$ holds. This follows from the continuity of the trace:

    \begin{align*}
        tr(\bigwedge_{i=0}^\infty \qwlp{S}{Q_i}) &=\bigwedge_{i=0}^\infty  tr(\qwlp{S}{Q_i})\\
        &= \bigwedge_{i=0}^\infty (tr(Q_i \semanticsRho{S}(\rho,0)) + tr(\rho) - tr(\semanticsRho{S}(\rho,0)) - \semanticsErr{S}(\rho,0))\\
        &=  tr(\bigwedge_{i=0}^\infty Q_i \semanticsRho{S}(\rho,0)) + tr(\rho) - tr(\semanticsRho{S}(\rho,0)) - \semanticsErr{S}(\rho,0)\\
        &= tr(\qwlp{S}{\bigwedge_{i=0}^\infty Q_i})
    \end{align*}

    Then
    \begin{align*}
        \qcwp{S}{\bigvee_{i=0}^\infty (P_i,Q_i)} &= \qcwp{S}{\bigvee_{i=0}^\infty P_i ,\bigwedge_{i=0}^\infty Q_i}= (\qwp{S}{\bigvee_{i=0}^\infty P_i }, \qwlp{S}{\bigwedge_{i=0}^\infty Q_i})\\
        &= (\bigvee_{i=0}^\infty \qwp{S}{ P_i }, \bigwedge_{i=0}^\infty \qwlp{S}{Q_i})= \bigvee_{i=0}^\infty \qcwp{S}{(P_i,Q_i)}.
    \end{align*}
    \end{itemize}
\end{proof}

Proof of Proposition~\ref{prop:healthcwlp}:
\begin{proof}
    Let $P,P',Q,Q' \in \predicate$ be predicates.
    \begin{itemize}
        \item Affine: We have to show that $f:\predicate^2 \to \predicate^2$ with $f(P,Q)=\qcwlp{S}{P,Q} - \qcwlp{S}{\zeroOp,\identityOp}$ is linear.
        It is \begin{align*}
            f(P,Q) &= \qcwlp{S}{P,Q} - \qcwlp{S}{\zeroOp,\zeroOp} \\
            &= (\qwlp{S}{P},\qwlp{S}{Q})-(\qwlp{S}{\zeroOp},\qwlp{S}{\zeroOp})\\
            &= (\qwlp{S}{P}-\qwlp{S}{\zeroOp},\qwlp{S}{Q}-\qwlp{S}{\zeroOp})
        \end{align*}
        We know that $qwlp\llbracket S \rrbracket$ is affine, thus $f':\predicate \to \predicate$ with $f'(Q)=\qwlp{S}{Q}-\qwlp{S}{\zeroOp}$ linear. So we can conclude that
        $f(P,Q)=(f'(P), f'(Q))$ is linear as well and thus $qcwlp\llbracket S \rrbracket $ affine.
        \item Monotonic: Let $(P,Q) \phantom{.}\dot \unlhd \phantom{.} (P',Q')$. Then $P \sqsubseteq P'$ and $Q \sqsubseteq Q'$. As $qwlp$ is monotonic,
        \begin{align*}
            \qwlp{S}{P} &\sqsubseteq \qwp{S}{P'},\\
            \qwlp{S}{Q} &\sqsubseteq \qwlp{S}{Q'}
        \end{align*}
        for every program $S$. That implies $
            \qcwlp{S}{P,Q} \phantom{.}\dot \unlhd \phantom{.} \qcwlp{S}{P',Q'}$.

        \item Order-continuous: $\dot \unlhd$ is an $\omega$-cpo on $\predicate^2$, that means $\bigvee_{i=0}^\infty (P_i,Q_i)$ exists for increasing sequences $\{P_i\}_{i \in \mathbb{N}}$ and $\{Q_i\}_{i\in \mathbb{N}}$ of predicates.
        Then
        \begin{align*}
            \qcwlp{S}{\bigvee_{i=0}^\infty (P_i,Q_i)} &= (\qwlp{S}{\bigvee_{i=0}^\infty P_i }, \qwlp{S}{\bigvee_{i=0}^\infty  Q_i}) \\
            &= (\bigvee_{i=0}^\infty \qwlp{S}{ P_i },\bigvee_{i=0}^\infty  \qwlp{S}{ Q_i}) \\
            &= \bigvee_{i=0}^\infty \qcwlp{S}{ P_i,Q_i}.
        \end{align*}
    \end{itemize}
\end{proof}

Proof of Lemma~\ref{lem:ErWpEqual}:
\begin{proof}
For all $\rho \in \densityFull$ is
    \begin{align*}
        tr(\qwlp{S}{\identityOp} \rho)\overset{\text{Lem. }\ref{lem:wlpexistence}}{=}& tr(\identityOp\semanticsRho{S}(\rho,0)) + tr(\rho)-tr(\semanticsRho{S}(\rho,0))- \semanticsErr{S}(\rho,0)\\
        \overset{\phantom{Lem. 71}}{=}& 1 - \semanticsErr{S}(\rho,0)\\
        \overset{\text{Lem. }\ref{lem:operationalVsDenotational}}{=}& 1-  \prob{\operationalMC{\rho}{S}}{ \eventually \errConfig}\\
        \overset{\phantom{Lem. 71}}{=}& \prob{\rewardMC{\rho}{\identityOp}{S}}{\neg \eventually \errConfig}.
    \end{align*}
    Also
    \begin{align*}
        ER^{\rewardMC{\rho}{P}{S}} (\eventually \termConfig) \overset{\phantom{Lem. 711}}{=}& \sum_{\rho'} \prob{\rewardMC{\rho}{P}{S}}{\eventually \config{\emptyProgram}{\rho'}}\cdot tr(P\rho') \\
        \overset{\phantom{Lem. 711}}{=}& tr(P \sum_{\rho'} \prob{\rewardMC{\rho}{P}{S}}{\eventually \config{\emptyProgram}{\rho'}}\cdot \rho')\\
        \overset{\text{Lem. }\ref{lem:operationalVsDenotational}\phantom{a}}{=} & tr(P \semanticsRho{S}(\rho,0))\\
        \overset{\text{Lem. }\ref{lem:schroedinger}}{=}& tr(\qwp{S}{P} \rho)
    \end{align*}
    and then
    \begin{align*}
        LER^{\rewardMC{\rho}{P}{S}} (\eventually \termConfig) \overset{\phantom{Lem. 71}}{=}& ER^{\rewardMC{\rho}{P}{S}} (\eventually \termConfig) + \prob{\rewardMC{\rho}{P}{S}}{\neg \eventually \termConfig}\\
        \overset{\phantom{Lem. 71}}{=}&  tr(\qwp{S}{P} \rho) + \bigl( 1- \prob{\rewardMC{\rho}{P}{S}}{ \eventually \termConfig}\bigr)\\
        \overset{\text{Lem. }\ref{lem:operationalVsDenotational}}{=}& tr(\qwp{S}{P} \rho) + tr(\rho) - tr(\semanticsRho{S}(\rho,0)) - \semanticsErr{S}(\rho,0)\\
        \overset{\text{Lem. }\ref{lem:wlpexistence}}{=}& tr(\qwlp{S}{P}\rho).
    \end{align*}
\end{proof}
\subsection{Details of the Example}
\label{app:ex}

\begin{figure}
    \begin{tikzpicture}
      \node (0) at (0,0){$\config{Hq;\dots}{\rho}$};
      \node (1) at (0,-1){$\config{Hp;\dots}{\rho_1}$};
      \node (2) at (0,-2){$\config{observe(q\otimes p, \identityOp_4 - \ket{11}\bra{11});\dots}{\rho_2}$};
      \node (3) at (2,-3.5){$\config{Hr;\dots}{\rho_3}$};
      \node (4) at (2,-4.5){$\config{\emptyProgram}{\rho_4}$};
      \node (5) at (-2,-4.5){$\errConfig$};

      \node (sink) at (0, -6){$\termConfig$};

      \path[->] (0) edge node [left,color=gray]{$1$} (1);
      \path[->] (1) edge node [left,color=gray]{$1$} (2);
      \path[->] (2) edge node [right,color=gray]{$tr(\rho_3')$} (3);
      \path[->] (3) edge node [left,color=gray]{$1$} (4);
      \path[->] (2) edge node [left,color=gray]{$1-tr(\rho_3')$} (5);

      \path[->] (4) edge node [above,color=gray] {$1$} (sink);
      \path[->] (5) edge node [above,color=gray] {$1$} (sink);
      \path[->,loop below] (sink) edge node [right,color=gray] {$1$} (sink);
    \end{tikzpicture}
    \caption{Operational semantics of the Quantum Fast-Dice-Roller with \begin{itemize}
      \item $\rho_1 =(H \otimes \identityOp_2 \otimes \identityOp_2) \rho (H \otimes \identityOp_2 \otimes \identityOp_2)^\dagger$
      \item $\rho_2 = (H \otimes H \otimes \identityOp_2) \rho (H \otimes H \otimes \identityOp_2)^\dagger$
      \item $\rho_3 = \frac{1}{tr(\rho_3')}\rho_3'$ with $\rho_3' = ((\identityOp_4 - \ket{11}\bra{11}) \otimes \identityOp_2)(H \otimes H \otimes \identityOp_2) \rho (H \otimes H \otimes \identityOp_2)^\dagger ((\identityOp_4 - \ket{11}\bra{11}) \otimes \identityOp_2)^\dagger$
      \item $\rho_4 = \frac{1}{tr(\rho_3')}((\identityOp_4 - \ket{11}\bra{11}) \otimes \identityOp_2)(H \otimes H \otimes H) \rho (H \otimes H \otimes H)^\dagger ((\identityOp_4 - \ket{11}\bra{11}) \otimes \identityOp_2)^\dagger$
    \end{itemize}
    }
    \label{fig:opSem}
  \end{figure}

  The operational semantics of the Quantum Fast-Dice-Roller can be found in Figure~\ref{fig:opSem}.

  To compute $\qwp{S}{P}$ and $\qwlp{S}{\identityOp}$, we use the rules from Proposition~\ref{prop:wpDef} and~\ref{prop:wlpDef} and obtain
\begin{align*}
  \qwp{S}{P} &= (\identityOpVar{4} \otimes H)^\dagger ((\identityOpVar{4}-\ketbra{11}{11})\otimes \identityOp)^\dagger (\identityOp \otimes H \otimes \identityOp)^\dagger (H \otimes \identityOpVar{4})^\dagger P \\
  & \phantom{=}(H \otimes \identityOpVar{4}) (\identityOp \otimes H \otimes \identityOp) ((\identityOpVar{4}-\ketbra{11}{11})\otimes \identityOp)(\identityOpVar{4} \otimes H)\\
  &= \left(\begin{smallmatrix}
    3/4 & \makebox[0.2cm]{ }0\makebox[0.2cm]{ } & 1/4 & \makebox[0.2cm]{ }0\makebox[0.2cm]{ } & 1/4 & \makebox[0.2cm]{ }0\makebox[0.2cm]{ } & -1/4 & \makebox[0.2cm]{ }0\makebox[0.2cm]{ }\\
    0 & 0 & 0 & 0 & 0 & 0 & 0 & 0\\
    1/4 & 0 & 1/12 & 0 & 1/12& 0 & -1/12 & 0 \\
    0 & 0 & 0 & 0 & 0 & 0 & 0 & 0\\
    1/4 & 0 & 1/12 & 0 & 1/12& 0 & -1/12 & 0 \\
    0 & 0 & 0 & 0 & 0 & 0 & 0 & 0\\
    -1/4 & 0 & -1/12 & 0 & -1/12& 0 & 1/12 & 0 \\
    0 & 0 & 0 & 0 & 0 & 0 & 0 & 0\\
  \end{smallmatrix}\right)\\
  \qwlp{S}{\identityOp} &=(\identityOpVar{4} \otimes H)^\dagger ((\identityOpVar{4}-\ketbra{11}{11})\otimes \identityOp)^\dagger (\identityOp \otimes H \otimes \identityOp)^\dagger (H \otimes \identityOpVar{4})^\dagger \identityOp \\
  & \phantom{=}(H \otimes \identityOpVar{4}) (\identityOp \otimes H \otimes \identityOp) ((\identityOpVar{4}-\ketbra{11}{11})\otimes \identityOp)(\identityOpVar{4} \otimes H)\\
  &= \left(\begin{smallmatrix}
    3/4 & 0 & 1/4 & 0 & 1/4 & 0 & -1/4 & 0 \\
    0 & 3/4 & 0 & 1/4 & 0 & 1/4 & 0 & -1/4\\
    1/4 & 0 & 3/4 & 0 & - 1/4 & 0 & 1/4 & 0 \\
    0 & 1/4 & 0 & 3/4 & 0 & -1/4 & 0 & 1/4\\
    1/4 & 0 & -1/4 & 0 & 3/4& 0 & 1/4 & 0 \\
    0 & 1/4 & 0 & -1/4 & 0 & 3/4 & 0 & 1/4\\
    -1/4 & 0 & 1/4 & 0 & 1/4& 0 & 3/4 & 0 \\
    0 & -1/4 & 0 & 1/4 & 0 & 1/4 & 0 & 3/4\\
  \end{smallmatrix}\right)
\end{align*}

\end{document}